\journal{}
\newtheorem{definition}{Definition}
\newtheorem{theorem}{Theorem}
\newtheorem{lemma}{Lemma}
\newtheorem{corollary}{Corollary}
\newtheorem{proposition}{Proposition}
\newtheorem{example}{Example}
\renewcommand{\theconditionBis}{\the\numexpr\value{conditionBis}bis}
\renewcommand{\thetheoremBis}{\the\numexpr\value{theoremBis}bis}
\renewcommand{\thelemmaBis}{\the\numexpr\value{lemmaBis}bis}
\newcommand{\RNum}[1]{\expandafter{\romannumeral #1\relax}}
\newcommand{\HideComment}[1]{}
\definecolor{myblue}{RGB}{18,75,126}
\definecolor{myorange}{RGB}{217,72,1}
\definecolor{cb6B6}{RGB}{8,81,156}
\definecolor{green_commentary}{RGB}{35,139,69}
\definecolor{emph1}{RGB}{18,75,126} 
\definecolor{emph2}{RGB}{244,109,67} 
\definecolor{emph3}{RGB}{51,160,44} 
\xdef\csname bf\x \endcsname{\noexpand\ensuremath{\noexpand\mathbf{\x}}}
\xdef\csname bs\x \endcsname{\noexpand\ensuremath{\noexpand\boldsymbol{\x}}}
\xdef\csname bs\x \endcsname{\noexpand\ensuremath{\noexpand\boldsymbol{\x}}}
\xdef\csname bf\x \endcsname{\noexpand\ensuremath{\noexpand\mathbf{\x}}}
\xdef\csname bb\x \endcsname{\noexpand\ensuremath{\noexpand\mathbb{\x}}}
\xdef\csname ds\x \endcsname{\noexpand\ensuremath{\noexpand\mathds{\x}}}
\xdef\csname cal\x \endcsname{\noexpand\ensuremath{\noexpand\mathcal{\x}}}
\DeclareMathOperator{\cardOp}{card} 
\newkfunc{\card}{\cardOp}
\DeclareMathOperator{\spanOp}{span} 
\newkfunc{\spann}{\spanOp}
\newcommand{\intervint}[2]{\llbracket{#1,#2\rrbracket}}
\newkfunc{\cmf}{\varphi}
\newcommand{\varIntCmf}{u}
\newcommand{\diff}{\,\mathrm{d}}
\newcommand{\element}[1]{\kbracket{#1}}
\newcommand{\inlinevec}[1]{\ktranspose{\kparen{#1}}}
\newcommand{\Vobs}{\bsy}
\newcommand{\nbAtome}{k}
\newcommand{\coeff}{c}
\newcommand{\coeffe}{\widehat{\coeff}}
\newcommand{\coeffv}{\mathbf{\coeff}}
\newcommand{\spike}[1]{\kdelta_{#1}}
\newcommand{\spaceObs}{\calH}
\newcommand{\paramSet}{\Theta}
\newcommand{\True}{\star}
\newcommand{\minSepNonZero}{\Delta_0}
\DeclareMathOperator{\suppOp}{supp} 
\newkfunc{\supp}{\suppOp}
\newcommand{\support}{\calS}
\newcommand{\supportt}{\calS^\star}
\newcommand{\supporte}{\widehat{\calS}}
\newkfunc{\atome}{\bsa}
\newkfunc{\kernelPaper}{\kappa}
\newcommand{\scalprod}[2]{\kangle{#1, #2}}
\newcommand{\nor}[1]{\left\| #1 \right\|}
\newcommand{\dico}{\calA}
\newcommand{\dimParam}{D}
\newcommand{\param}{\theta}
\newcommand{\paramt}{\theta^\True}
\newcommand{\parame}{\widehat{\theta}}
\newkfunc{\SetAug}{\mathrm{Grid}}
\newcommand{\idxit}{t}
\newcommand{\Gtrue}{\bfG}
\newcommand{\gtrue}{\bfg}
\newcommand{\gthetaSymb}{\bfg}
\newcommand{\gtheta}[1][\param]{\gthetaSymb_{#1}}
\newcommand{\vone}{\boldsymbol{1}}
\newcommand{\cmfkernelclassP}[1][1]{\calK_{\mathtt{CMF}}(#1)}
\newcommand{\laplacekernelclassP}[1][1]{\calK_{\mathtt{Lap}}(#1)}
\newcommand{\condSumPositivKernelSymb}{\psi}
\newcommand{\condOneNegKernelSymb}{\phi}
\def\comp{OMP}
\newkfunc{\lasso}{\text{Lasso}}
\newkfunc{\blasso}{\text{BLasso}}
\newcommand{\eg}{\textit{e.g.}, }
\newcommand{\ie}{\textit{i.e.}, }
\newcommand{\BPCD}{BP for continuous dictionaries}
\numberwithin{equation}{section}
\crefname{algorithm}{algorithm}{algorithms}
\Crefname{algorithm}{Algorithm}{Algorithms}
\crefname{condition}{condition}{conditions}
\Crefname{condition}{Condition}{Conditions}
\crefname{conditionBis}{condition}{conditions}
\Crefname{conditionBis}{Condition}{Conditions}
\crefname{theoremBis}{theorem}{theorems}
\Crefname{theoremBis}{Theorem}{Theorems}
\crefname{lemmaBis}{Lemma}{lemmas}
\Crefname{lemmaBis}{Lemma}{Lemmas}
\crefname{fact}{fact}{facts}
\Crefname{appendix}{Appendix}{Appendices}
\def\boolpagebreaksection{0}
\def\booltoc{0}
\newcommand{\conditionalPagebreak}{
	\ifx\boolpagebreaksection\undefined
	\else
		\if\boolpagebreaksection1
			\pagebreak
		\fi
	\fi
}
\definecolor{CEcolor}{RGB}{0,109,44} 
\definecolor{RGcolor}{RGB}{0,0,255} 
\definecolor{CScolor}{RGB}{255, 115, 0} 
\newcommand{\say}[1]{``#1''}
\begin{document}

\hypersetup{
  linkcolor  = myorange!100!black,
  citecolor  = myblue!100!black,
  urlcolor   = myblue
}

\begin{frontmatter}



\title{When does OMP achieve exact recovery with continuous dictionaries?}


\author{
	Clément Elvira$^{1}$, 
	Rémi Gribonval$^{1}$,
	Charles Soussen$^{2}$ and
	Cédric Herzet$^{1}$
}

\address{
	$^{1}$ Univ Rennes, Inria, CNRS, IRISA\\ F-35000 Rennes, France \\
	$^{2}$ L2S, CentraleSupélec-CNRS-Université Paris-Saclay \\ 91192 Gif-sur-Yvette, France
}

\begin{abstract}
	This paper presents new theoretical results on sparse recovery guarantees for a greedy algorithm, Orthogonal Matching Pursuit (\comp{}), in the context of continuous parametric dictionaries.
	Here, the continuous setting means that the dictionary is made up of an infinite uncountable number of atoms.
	In this work, we rely on the Hilbert structure of the observation space to express our recovery results as a property of the kernel defined by the inner product between two atoms.
	Using a continuous extension of Tropp's Exact Recovery Condition, we identify key assumptions allowing to analyze OMP in the continuous setting. 
	Under these assumptions, OMP unambiguously identifies in exactly $k$ steps the atom parameters from any observed linear combination of $k$ atoms.
	These parameters play the role of the  so-called support of a sparse representation in traditional sparse recovery.
	In our paper, any kernel and set of parameters that satisfy these conditions are said to be \emph{admissible}. 

	In the one-dimensional setting, we exhibit a family of kernels relying on \emph{completely monotone functions} for which admissibility  holds for any set of atom parameters. 
	For higher dimensional parameter spaces, the analysis turns out to be more subtle.
	An additional assumption, so-called \emph{axis admissibility}, is imposed to ensure a form of delayed recovery (in at most $\nbAtome^\dimParam$ steps, where $\dimParam$ is the dimension of the parameter space). 
	Furthermore, guarantees for recovery in exactly $\nbAtome$ steps are derived under an additional algebraic condition involving a finite subset of atoms (built as an extension of the set of atoms to be recovered).
	We show that the latter technical conditions simplify in the case of Laplacian kernels, allowing us to derive simple conditions for $\nbAtome$-step exact recovery, and to carry out a coherence-based analysis in terms of a minimum separation assumption between the atoms to be recovered.
\end{abstract}

\begin{keyword}
sparse representation \sep continuous dictionaries \sep Orthogonal Matching Pursuit \sep exact recovery



\end{keyword}

\end{frontmatter}



\conditionalPagebreak

\renewcommand\appendixname{} 
\ifx\booltoc\undefined
\else
	\if\booltoc1
		\tableofcontents
	\fi
\fi


\pagebreak

\section{Introduction}
	\label{sec:introduction}

Finding a sparse signal representation is a fundamental problem in signal processing.~It consists in decomposing a signal $\Vobs{}$ belonging to some vector space $\spaceObs{}$ as the linear combination of a few elements of some set $\dico\subseteq\spaceObs$, that is
\begin{equation}
	\label{eq:intro:observed_signals}
	\Vobs{} = \sum_{\ell=1}^{\nbAtome{}} \coeff_\ell \, \atome[\ell]\quad \mbox{where $\coeff_\ell\in\kR*$, $\atome[\ell]\in \dico$. }
\end{equation}
Sparsity refers to the fact that the number of elements involved in the decomposition \eqref{eq:intro:observed_signals} should be much smaller than the ambient dimension, \textit{i.e.}, the dimension of $\spaceObs{}$. 
The set $\dico$ is commonly referred to as a \textit{dictionary} and its elements as \textit{atoms}. 
In the sequel, we will assume that $\dico$ is a parametric dictionary defined as:
\begin{equation}
	\label{eq:intro:dictionary}
	\dico = \kset{ \atome(\param) }{\param\in\paramSet}
\end{equation}
where $\paramSet= \kR^\dimParam$ and $\atome : \paramSet \rightarrow \spaceObs$ is some continuous and injective function.
In this setup, \eqref{eq:intro:observed_signals} implies that there exist $\nbAtome$ parameters $\{\paramt_\ell\}_{\ell=1}^\nbAtome$  such that $\Vobs$ can be expressed as a linear combination of the atoms $\{\atome(\paramt_\ell)\}_{\ell=1}^\nbAtome$.

Over the past decade, sparse representations have proven to be of great interest in many applicative domains.
As a consequence, numerous practical procedures, along with their theoretical analyses, have been proposed in the literature.
Most contributions addressed the sparse-representation problem in the \say{\textit{discrete}}  setting, where the dictionary contains a \textit{finite} number of elements, see~\cite{Foucart2013}.
Recently, several works tackled the problem of sparse representations in \say{\textit{continuous}} dictionaries, where $\dico$ is made up of an infinite \textit{uncountable} number of atoms but $\atome : \paramSet \rightarrow \spaceObs$ enjoys some continuity property, see \eg \cite{Ekanadham2011,candes2014,Duval2014}.
We review the contributions most related to the present work in \Cref{sec:state_of_the_art}.

Before dwelling over the state of the art, we briefly describe the scope of our paper. In this work, we focus on the continuous setting and assume that $\spaceObs{}$  is a Hilbert space with inner product $\scalprod{\cdot}{\cdot}$ and induced norm $\nor{\cdot}$. 
We derive exact recovery conditions for \say{Orthogonal Matching Pursuit} (\comp{})~\cite{pati1993}, a natural  adaptation to the continuous setting of a popular greedy procedure of the literature (see \Cref{alg:continuousOMP}).
The main question addressed in this paper is as follows.  
Let $\{\paramt_\ell\}_{\ell=1}^{\nbAtome}$ be $\nbAtome$ pairwise distinct elements of $\paramSet$ and assume that $\Vobs$ obeys~\eqref{eq:intro:observed_signals} with $\atome[\ell] = \atome(\paramt_\ell)$ for some $\{\coeff_\ell\}_{\ell=1}^{\nbAtome}\subset\kR*$.
Under which conditions does OMP achieve exact recovery (that is, correct unambiguous identification) of the parameters $\{\paramt_\ell\}_{\ell=1}^{\nbAtome}$ and the coefficients $\{\coeff_\ell\}_{\ell=1}^{\nbAtome}$? In particular, is exact recovery possible in $\nbAtome$ steps? 
This is of course only possible if the preimage $\paramt_\ell$ of an atom $\atome[\ell]= \atome(\paramt_\ell)$ is unique, hence the assumption that $\atome(\cdot)$ is injective.

\begin{algorithm}[t]
	\caption[A1]{
	\label{alg:continuousOMP}
	Orthogonal Matching Pursuit (\comp)
	}
	\KwIn{observation $\Vobs\in\spaceObs$, normalized dictionary $\dico = \kset{ \atome(\param) }{\param\in\paramSet}$.}
	
	$\bsr \leftarrow \Vobs$      \tcp{residual vector}
	$\supporte \leftarrow \emptyset$ \tcp{estimated support}
	$t\leftarrow 0$ \;
	\While{$\bsr\neq{\bf0}_{\spaceObs{}}$} {
		$t \leftarrow t+1$ \;
		$\parame_t \in \kargmax_{\param\in\paramSet}\, \kvbar{\kangle{\atome(\param),\bsr} } $ \label{line:algo:continuousOMP:findtheta} 
		\tcp{atom selection}
		$\supporte \leftarrow \supporte \cup \{ \parame_t \}$ 
		\tcp{support update}
		$\displaystyle \kparen{\coeffe_1,\dotsc,\coeffe_t} \leftarrow \kargmin_{\kparen{\coeff_1,\dotsc,\coeff_t}\in\kR^t}\, 
		\kvvbar{ \Vobs - \sum_{\ell=1}^{t} \coeff_\ell \, \atome(\parame_\ell) }$
		\label{line:algo:continuousOMP:LS} 
		\tcp{least-squares update} 
		$\displaystyle \bsr \leftarrow \Vobs - \sum_{\ell=1}^{t} \coeffe_\ell\,\atome(\parame_\ell)$ 
		\label{line:algo:continuousOMP:update_residual}
		\tcp{residual vector}	
	}
	$\widehat{\nbAtome}=t$ \label{line:algo:continuousOMP:def_khat} \;
	\KwOut{estimated support $\supporte=\{\parame_\ell\}_{\ell=1}^{\widehat{\nbAtome}}$ and coefficients $\{\coeffe_\ell\}_{\ell=1}^{\widehat{\nbAtome}}$.
	}
\end{algorithm}

We note that, in the context of continuous dictionaries, the fact that \comp{} could correctly identify a set of $\nbAtome$ atoms in exactly $\nbAtome$ iterations may seem surprising in itself. 
Indeed, inspecting \Cref{alg:continuousOMP}, we see that this implies that \comp{} must identify one correct atom at \textit{each} iteration $t$ of the algorithm, that is $\parame_\idxit\in \{\paramt_\ell\}_{\ell=1}^{\nbAtome}$ $\forall \idxit\in\intervint{1}{\nbAtome}$.
 The following simple example suggests that such a requirement may never be met for continuous dictionaries:

\begin{example}[The Gaussian deconvolution problem]
	\label{example:intro:deconvolution}
	Consider $\paramSet=\kR$ and let $\spaceObs = L^2(\kR)$ be the space of square integrable functions on $\kR$. Assume $\atome(\cdot)$ is defined as
	\begin{equation}
		\label{eq:intro:gaussian_atom}
		\begin{split}
		\atome : \kR \;\longrightarrow\;& L^2(\kR) \\
			\param \;\longmapsto\;& \cstpi^{-\frac{1}{4}} \cste^{-\frac{1}{2}(\cdot - \param)^2} 
			.
		\end{split}
	\end{equation}

Suppose $\Vobs{}$ results from the positive linear combination of $\nbAtome = 2$ distinct atoms, that is $\Vobs = \coeff_1 \atome(\paramt_1)+\coeff_2 \atome(\paramt_2)$, $\paramt_1\neq\paramt_2$, $\coeff_1>0$, $\coeff_2>0$. 
Then, even in this very simple case, \comp{} never selects an atom in $\kbrace{\paramt_1,\paramt_2}$ at the first iteration.
Indeed, particularizing step \ref{line:algo:continuousOMP:findtheta} of \Cref{alg:continuousOMP} to the present setup, we have that, at the first iteration, \comp{} will select the parameter $\param$ maximizing
\begin{equation}
	\label{eq:intro:gaussian_sp}
	\kvbar{\scalprod{\atome(\param)}{\Vobs}}
	= 
	\coeff_1 \cste^{-\frac{1}{4}(\param - \paramt_1)^2}+ \coeff_2 \cste^{-\frac{1}{4}(\param - \paramt_2)^2}. 
\end{equation}
Now, since the right-hand side of \eqref{eq:intro:gaussian_sp} is continuously differentiable, first-order optimality conditions tell us that any maximizer of $\param\mapsto\kvbar{\scalprod{\atome(\param)}{\Vobs}}$ must satisfy
\begin{align}
	(\param - \paramt_1) \coeff_1 \cste^{-\frac{1}{4}(\param - \paramt_1)^2} + (\param - \paramt_2) \coeff_2 \cste^{-\frac{1}{4}(\param - \paramt_2)^2} = 0. 
\end{align}
Since $\paramt_1\neq \paramt_2$, $\coeff_1\neq0$, $\coeff_2\neq0$, this equality cannot be verified by either $\param=\paramt_1$ or $\param=\paramt_2$. As a consequence, OMP necessarily selects some  $\param\notin\{\paramt_1,\paramt_2\}$.  \qed
\end{example}

Nevertheless, we show in this paper that exact recovery in $\nbAtome$ steps is possible with OMP for some \emph{particular} families of dictionaries $\dico$.
Our recovery conditions are expressed in terms of the \emph{kernel function} $\kernelPaper\kparen{\param, \param'}$ associated to the inner product between two atoms, \ie 
\begin{equation}
	\label{eq:intro:hyp_scalar_product}
	\kernelPaper\kparen{\param, \param'} \triangleq \scalprod{\atome( \param )}{\atome( \param' )}.
\end{equation}
We show that if the kernel $\kernelPaper\big(\param, \param'\big)$ and the atom parameters $\{\paramt_{\ell}\}_{\ell=1}^{\nbAtome}$ verify some particular conditions (see \Cref{subsec:contrib:abstract_conditions}), then exact recovery in $\nbAtome$ steps is possible with OMP. 
We emphasize moreover that these conditions are satisfied for a family of kernels of the form:
\begin{align}
\label{eq:intro:cmf_kernel}
\kernelPaper(\param, \param') & 
		= \cmf{}\kparen{ \kvvbar{\param -  \param'}_p^p } \quad  \mbox{ $0 < p \leq 1$},
\end{align}
where $\kvvbar{\cdot}_p$ is the $\ell_p$ quasi-norm (norm for $p=1$) and $\cmf{}$ is a \textit{completely monotone} function (see \Cref{def:intro:def_cmf}). 
This family encompasses the well-known Laplace kernel \citep{miller2001}. 
Hereafter, we will refer to kernels taking  the form \eqref{eq:intro:cmf_kernel} as \say{CMF kernels}.

A first (perhaps surprising) outcome of our analysis is as follows.
If $\paramSet=\kR$ and the dictionary is defined by a CMF kernel \eqref{eq:intro:cmf_kernel}, 
OMP correctly identifies \emph{any} pairwise distinct atom parameters $\{\paramt_\ell\}_{\ell=1}^{\nbAtome}\subset\paramSet$  and coefficients $\{\coeff_\ell\}_{\ell=1}^{\nbAtome}\subset\kR*$ in exactly $\nbAtome$ iterations for \emph{any} $\nbAtome\in\kN$ (see \Cref{th:contrib:cmf_uniformRecov_1D}). 
We emphasize that no separation (\emph{i.e.}, minimal distance between parameters $\{\paramt_\ell\}_{\ell=1}^{\nbAtome}$) is needed. 
To our knowledge, this is the first recovery of this kind in continuous dictionaries when no sign constraint is imposed on the coefficients.
It turns out that this ``universal'' exact recovery result is valid for very particular families of dictionaries: 
CMF kernels
 exhibit a discontinuity in their derivatives (\textit{e.g.}, the partial derivative of $\kappa$ with respect to $\theta$ when $\theta=\theta'$) and the space $\spaceObs$ in which the corresponding dictionary lives is necessarily infinite-dimensional (see \Cref{sec:CMF_dico}).

When $\paramSet= \kR^\dimParam$ with $\dimParam>1$ and the dictionary is defined by a CMF kernel \eqref{eq:intro:cmf_kernel}, we show that such an exact recovery result no longer holds (see \Cref{ex:intro:erc_required_dim2}).
Nevertheless, for dictionaries based on CMF kernels, under an additional hypothesis (referred to as ``\emph{axis admissibility}'', see~\Cref{def:contrib:grille_admissibility}), we demonstrate that 
 a form of {\em delayed} exact recovery (that is, in more than $\nbAtome$ iterations) holds.
The number of iterations sufficient to identify a set of $\nbAtome$ parameters is then upper-bounded by $\nbAtome^\dimParam$ (see \Cref{th:contrib:dimD:grid_recovery}).
Moreover, under the above-mentioned hypothesis of axis admissibility, sufficient and necessary conditions for exact recovery of a given subset\footnote{Here and in the sequel, when referring to a subset $\{\paramt_\ell\}_{\ell=1}^{\nbAtome}$, we implicitly assume that the elements $\paramt_{\ell}$ are pairwise distinct.} $\{\paramt_\ell\}_{\ell=1}^{\nbAtome}$ in $\nbAtome$ steps (irrespective of the choice of the coefficients $\{\coeff_\ell\}_{\ell=1}^{\nbAtome}$) can be written in terms of a \textit{finite} number of atoms of the dictionary (smaller than $\nbAtome^\dimParam$) including $\{\paramt_\ell\}_{\ell=1}^\nbAtome$ (see \Cref{th:contrib:dimD:grid_recovery}). 
We leverage this result to prove that exact recovery in $\nbAtome$ steps is possible as soon as the elements of the subset $\{\paramt_\ell\}_{\ell=1}^{\nbAtome}$ obey some \say{minimum separation} condition (see \Cref{prop:intro:erc_ell1:separation}).

The rest of this paper is organized as follows. 
\Cref{sec:state_of_the_art} draws connections with the sparse recovery literature.
\Cref{subsec:recovery} elaborates on the main ingredients of the \say{continuous} setup and defines the notions of recovery that are used in the statements of our results.
In \Cref{subsec:contrib:abstract_conditions}, we exhibit a sufficient condition on atom parameters and kernel such that exact recovery of a \emph{given} set of atom parameters holds. 
We then present the family of CMF dictionaries in \Cref{sec:CMF_dico} and show in \Cref{subsec:contrib:ksparserecov} that different forms of recovery can be achieved in these dictionaries. 
Concluding remarks are given in \Cref{sec:discussion}. 
The technical details of our results are contained in the appendices of the paper.
The proofs of our main recovery results are exposed in Appendices~\ref{sec:technical_details} and \ref{sec:app:proofs-CMF}.
Appendix~\ref{sec:app:misc} contains some auxiliary technical details.
Finally, Appendices~\ref{sec:other:calcul_separation} and \ref{sec:app:dimD:uniformrecovery_k_2} are dedicated to some mathematical developments related to two examples discussed in the paper.

\conditionalPagebreak


\section*{Notations}
The following notations will be used in this paper.
The symbols $\kR, \kR*, \kR+, \kR*+$ refer to the set of real, non-zero, non-negative and positive numbers, respectively. 
Boldface lower and upper cases  (\eg $\bfg$, $\bfG$) are used to denote (finite-dimensional) vectors and matrices, respectively.
The notation $\element{i}$ refers to the $i$th element of a vector, and $\element{i,j}$ for the element at the $i$-th row and $j$-th column of a matrix.
Italic boldface letters (\eg $\Vobs$ or $\atome$) denote elements of a Hilbert space $\spaceObs$.
All-one and all-zero column vectors in $\kR^{\nbAtome}$ are denoted ${\bf1}_{\nbAtome}$ and ${\bf0}_{\nbAtome}$, respectively. The $\ell$-th vector of the canonical basis in $\kR^\dimParam$ will be denoted $\bfe_{\ell}$.  
The notations $\scalprod{\cdot}{\cdot}$ and $\nor{\cdot}$ refer to the inner product and its induced norm on $\spaceObs$, while $\kvvbar{\cdot}_p$ with $p>0$ refers to the classical $\ell_p$ (pseudo-) norm on $\kR^\dimParam$.
Finally, calligraphic letters (\eg $\support, \calG$) are used to describe finite subsets of the parameter space $\paramSet{}$, while $\intervint{m}{n}$ denotes the set of integers $i$ such that $m\leq i \leq n$. Given $\support\subseteq\paramSet$, we let $\support^c \triangleq \paramSet\backslash\support$ be the complementary set of $\support$ in $\paramSet$. 
The cardinality of a set is denoted $\card(\cdot)$. 
Finally, if $\varphi: \kR \mapsto \kR$ is a function, the notation $\varphi^{(n)}$ refers to its $n$-th derivative.
The main notations used in this paper are summarized in Appendix~\ref{sec:table_of_notations}.

\section{Related works and state of the art}
	\label{sec:state_of_the_art}

Over the last decade, sparse representations have sparked a surge of interest in the signal processing, statistics, and machine learning communities.
A question of broad interest which has been addressed by many scientists is the identification of the \say{sparsest} representation of an input signal $\Vobs$ (that is, the representation involving the smallest number of elements of $\dico$).
Since this problem has been shown to be NP-hard~\cite{Natarajan1995}, many sub-optimal procedures have been proposed to approximate its solution.\footnote{The term \say{sub-optimal} has to be understood in the following sense:  these procedures are heuristics that only find the sparsest solution of the input vector $\Vobs$ under some restricted conditions. They can fail when these conditions do not hold.} 
Among the most popular, one can mention methodologies based on \emph{convex relaxation} and \emph{greedy algorithms}. 
  
Greedy procedures have a long history in the signal processing and statistical literature, which can be traced back to (at least) the 60's~\cite{Miller2002}.~In the signal processing community, the most popular instances of greedy algorithms  are known under the names of \emph{Matching Pursuit} (MP)~\cite{Mallat1993}, \emph{Orthogonal Matching Pursuit} (OMP)~\cite{pati1993} (also known as \emph{Orthogonal Greedy Algorithm (OGA)}~\cite{DeVore1996,Liu2012}) and \emph{Orthogonal Least Squares} (OLS)~\cite{soton1989}.
Although these algorithms were already known under different names in other communities~\cite{Friedman1981}, they have been \say{rediscovered} many times, see \eg \cite{Huber1985,Neira2002,Temlyakov2008}.
Extensions to more general cost functions and kernel dictionaries are discussed in~\cite{Vincent2002}.

Sparse representations based on the resolution of convex optimization problems were initially proposed in geophysics~\cite{Claerbout1973} for seismic exploration.~These methods have been popularized in the signal processing community by the seminal work by Chen~\emph{et al.}~\cite{Chen1998} and by Tibshirani in Statistics~\cite{tibshirani1996}.
Well-known instances of convex-relaxation approaches for sparse representations are \emph{Basis Pursuit} (BP)~\cite{Chen1998} and \emph{\lasso}~\cite{tibshirani1996}, also known as \emph{Basis Pursuit Denoising}, which correspond to different convex optimization formulations.
Many algorithmic solutions to efficiently address these problems have been proposed, see \eg \cite{Tibshirani2004,Beck2009,Boyd2010}. 

All the early contributions mentioned above have been made in the \emph{discrete} setting, where the dictionary contains a \emph{finite} number of atoms.
Although Mallat and Zhang~\cite{Mallat1993} already defined MP for continuous dictionaries, the wide practice of MP is in the discrete setting.
Greedy sparse approximation in the context of dictionaries made up of an infinite (possibly uncountable) number of atoms has only been studied more recently~\cite{Temlyakov2008,Gribonval2008,Borup2008}.
Practical procedures to implement greedy procedures in continuous dictionaries can be found in~\cite{Knudson2014,eftekhari2015greed,dorffer2018}.

On the side of convex relaxation approaches, it was shown that a \emph{continuous} version of \lasso{} can be expressed as a convex optimization problem over the space of Radon measures~\cite{Bredies2012} and later referred to as the \emph{Beurling Lasso} (\blasso)~\cite{castro2012}.~A continuous version of BP was also proposed~\cite{candes2014} for specific continuous dictionaries by exploiting similar ingredients. 
Motivated by an increasing demand in efficient solvers, different strategies to find the solution of this problem (to some accuracy) were proposed over the past few years.
When dealing with dictionaries made up of complex exponentials that depend on a one-dimensional parameter (that is $\dimParam=1$), the Blasso problem can be reformulated as a semidefinite program (SDP)~\cite{candes2014,Tang2013}.  
These methods have been further extended to the multidimensional case by considering SDP approximations of the problem~\cite{DeCastro2017}.
The conditional gradient method (CGM) has also proven to be applicable to address the BLasso problem~\cite{Bredies2012} and further enhanced with non-convex local optimization extra steps~\cite{Boyd2015,2019-Catala-LowRank,Denoyelle2019}.
Interestingly, the CGM has been shown to be equivalent to the so-called exchange method in~\cite{Eftekhari2019,Flinth2019arXiv}.
More recently, gradient-flow methods on spaces of measures have also been investigated to address the BLasso problem~\cite{Chizat2018aa,chizat:hal-02190822}. 	
Finally, we also mention the existence of a vast literature on non-convex and non-variational procedures leveraging the celebrated Prony's method~\cite{de1795essai}.
Among others, one may cite its extension to the multivariate case~\cite{Kunis2016}, the MUSIC~\cite{Liao2016} and ESPRIT~\cite{Roy1989} frameworks, as well as finite rate of innovation methods~\cite{Wei2016}.

Because (most of) the approaches mentioned above (both in the discrete and continuous settings) are heuristics looking for the sparsest representation of some $\Vobs$, many theoretical works have been carried out to analyze their performance.  
Hereafter, we review the contributions of the literature most related to the present work. 
In particular, we focus on the contributions dealing with exact recovery of some subset $\{\paramt_\ell\}_{\ell=1}^{\nbAtome}$ for any choice of the coefficients $\{\coeff_\ell\}_{\ell=1}^{\nbAtome}$ (sometimes assuming some specific sign patterns). 
In our discussion, we will use the short-hand notation $\supportt = \{\paramt_\ell\}_{\ell=1}^{\nbAtome}$ and refer to the latter as \textit{``support''}. Since we always implicitly assume that the parameters $\{\paramt_\ell\}_{\ell=1}^{\nbAtome}$ are pairwise distinct, we have $\card(\supportt) = \nbAtome$. 
The presentation is organized in two parts, dealing respectively with the discrete and the continuous cases. 
In the discrete setting, we restrict our attention to contributions addressing the performance of MP, OMP and OLS, \ie the greedy procedures the  most connected to the framework of this paper. 
In the continuous setting, recovery analysis, including stability and robustness to noise, have only been addressed for convex-relaxation approaches.
We review these conditions below and draw some similarities and differences with the guarantees derived for \comp{}.

\subsection{Discrete setting}
	\label{subsec:stat_of_the_art:discreteSetting}

The discrete setting refers to the case where the dictionary contains a finite number of elements, that is $\card(\dico)<\infty$.
Hereafter, we will restrict our discussion to parametric dictionaries of the form \eqref{eq:intro:dictionary} since they are the main focus of this paper. In this context, the discrete setting refers to $\card(\paramSet) < +\infty$.

\paragraph{Exact Recovery Condition} 
The first thorough analysis of OMP exact ``$\nbAtome$-step'' recovery of some $\supportt\triangleq\{\paramt_\ell\}_{\ell=1}^{\nbAtome}$ is due to Tropp in \cite{tropp2004}. 
Introducing the notations
	\begin{align}
	\begin{array}{rl}
		\bfG\element{\ell,\ell'} &\triangleq\; \kernelPaper\kparen{\paramt_\ell, \paramt_{\ell'}},  \\
		\bfg_{\param}\element{\ell} &\triangleq\; \kernelPaper\kparen{\param, \paramt_\ell}
		,
	\end{array}
\end{align}
Tropp's result can be rephrased as follows:
\begin{theorem}[Tropp's ERC] 
	\label{th:Tropp's-ERC}
	Consider $\supportt=\{\paramt_\ell\}_{\ell=1}^{\nbAtome}$ and assume that the atoms $\{\atome(\paramt_\ell)\}_{\ell=1}^{\nbAtome}$ are linearly independent. If 
	\begin{equation}
		\label{eq:ERCTropp}
		\stepcounter{equation}
		\tag{$\theequation-\mathrm{ERC}$}
		\forall {\param\in \paramSet\setminus\support^\True}, \quad \kvvbar{ \kinv{\bfG} \bfg_{\param}}_1<1
		,
	\end{equation}
	then OMP with $\Vobs=\sum_{\ell=1}^{\nbAtome} \coeff_\ell\, \atome(\paramt_\ell)$ as input unambiguously identifies $\supportt$ and $\{\coeff_\ell\}_{\ell=1}^{\nbAtome}$ in $\nbAtome$ iterations for any choice of the coefficients $\{\coeff_\ell\}_{\ell=1}^{\nbAtome}\subset\kR*$.
	Conversely, if~\eqref{eq:ERCTropp} is not satisfied, there exist not all-zero coefficients $\{\coeff_\ell\}_{\ell=1}^{\nbAtome}$ such that OMP with $\Vobs=\sum_{\ell=1}^{\nbAtome} \coeff_\ell\, \atome(\paramt_\ell)$ as input selects some $\param\notin\supportt$ at the first iteration. 
\end{theorem}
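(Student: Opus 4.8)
The statement is essentially Tropp's 2004 result, but phrased via the kernel $\kernelPaper$ and a parameter set $\paramSet$ that need not be finite. The key observation is that only finitely many atoms are ever relevant in a single OMP run analysis: the $k$ true atoms plus whatever atom OMP selects. So the plan is to reduce to a linear-algebraic statement about the finite Gram matrix $\bfG$ and the correlation vectors $\bfg_\param$, exactly as in the discrete case, and then invoke the classical argument.

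\textbf{Direct implication.} First I would set up the induction hypothesis that OMP is running correctly: after $t<k$ iterations the estimated support $\supporte$ is a subset of $\supportt$, and the residual $\bsr$ lies in $\spann\{\atome(\paramt_\ell): \paramt_\ell\in\supportt\}$. The base case $t=0$ is immediate since $\bsr=\Vobs=\sum_\ell \coeff_\ell\atome(\paramt_\ell)$. For the inductive step, the residual can be written $\bsr=\sum_{\ell=1}^k b_\ell\,\atome(\paramt_\ell)$ for some coefficients $\bfb$, and since by the induction hypothesis $\bsr$ is orthogonal to all atoms already selected, the relevant correlations are $\kangle{\atome(\param)}{\bsr}$. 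For $\param=\paramt_{\ell'}\in\supportt$ this equals $(\bfG\bfb)\element{\ell'}$, while for $\param\notin\supportt$ it equals $\bfg_\param^{\mathsf T}\bfb$. Writing $\bfg_\param^{\mathsf T}\bfb=(\kinv{\bfG}\bfg_\param)^{\mathsf T}(\bfG\bfb)$ and applying Hölder's inequality gives
\begin{equation*}
	\kvbar{\kangle{\atome(\param)}{\bsr}} \le \kvvbar{\kinv{\bfG}\bfg_\param}_1 \, \kvvbar{\bfG\bfb}_\infty < \kvvbar{\bfG\bfb}_\infty = \max_{\ell'}\kvbar{\kangle{\atome(\paramt_{\ell'})}{\bsr}},
\end{equation*}
using \eqref{eq:ERCTropp}; here one must also check $\bfG\bfb\neq\boldsymbol 0$, which holds because $\bsr\neq\boldsymbol 0$ while OMP is still running and the $\atome(\paramt_\ell)$ are linearly independent so $\bfG$ is invertible. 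Hence the maximizer in step~\ref{line:algo:continuousOMP:findtheta} is attained on $\supportt$ and the newly selected atom is a true one; it is also new (distinct from previously selected ones) because $\bsr\perp$ previously selected atoms forces those correlations to vanish while the max is strictly positive. After $k$ steps $\supporte=\supportt$, the least-squares fit reproduces $\Vobs$ exactly (by linear independence the coefficients are unique, hence equal to $\{\coeff_\ell\}$), the residual is zero, and OMP halts with $\widehat k=k$. The one point needing a remark: the $\kargmax$ in step~\ref{line:algo:continuousOMP:findtheta} must be well-defined (a maximizer must exist); since OMP is invoked in the statement, I would either assume this or note it is guaranteed for the dictionary families considered later.

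\textbf{Converse.} Suppose \eqref{eq:ERCTropp} fails, i.e. there is $\param_0\notin\supportt$ with $\kvvbar{\kinv{\bfG}\bfg_{\param_0}}_1\ge 1$. I would choose the coefficient vector $\coeffv\in\kR^k$ so that Hölder's inequality is tight: take $\bfG\coeffv$ to have entries $\pm 1$ matching the signs of $\kinv{\bfG}\bfg_{\param_0}$, i.e. $\coeffv=\kinv{\bfG}\,\mathrm{sign}(\kinv{\bfG}\bfg_{\param_0})$ (a standard "dual certificate saturates" construction). Then with $\Vobs=\sum_\ell\coeff_\ell\atome(\paramt_\ell)$ the first residual is $\Vobs$ itself, $\bfG\coeffv$ equals that sign vector, so $\kvbar{\kangle{\atome(\param_0)}{\Vobs}}=\kvbar{\bfg_{\param_0}^{\mathsf T}\coeffv}=\kvvbar{\kinv{\bfG}\bfg_{\param_0}}_1\ge 1=\max_{\ell}\kvbar{\kangle{\atome(\paramt_\ell)}{\Vobs}}$, so a non-true atom ties (or beats) every true atom at iteration~1 — and a small perturbation makes the inequality strict if one wants a clean "selects $\param_0\notin\supportt$". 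One should check $\coeffv\neq\boldsymbol 0$, which follows since $\kinv{\bfG}\bfg_{\param_0}\neq\boldsymbol 0$ (as $\bfg_{\param_0}=\boldsymbol 0$ would give $\kvvbar{\kinv{\bfG}\bfg_{\param_0}}_1=0<1$).

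\textbf{Main obstacle.} The genuinely new wrinkle compared to finite dictionaries is not the algebra — that is verbatim Tropp — but the analytic subtleties of the continuous setting: ensuring the $\kargmax$ over the uncountable $\paramSet$ is attained, and making sure "OMP selects some $\param\notin\supportt$" in the converse is robust (strict inequality, or at least that no true atom is a strict maximizer). I expect the cleanest route is to prove the strict inequality $\kvbar{\kangle{\atome(\param)}{\bsr}}<\max_{\ell}\kvbar{\kangle{\atome(\paramt_\ell)}{\bsr}}$ for \emph{all} $\param\notin\supportt$ in the direct part (which the strict \eqref{eq:ERCTropp} already gives pointwise), so that even if the supremum over $\paramSet$ is only approached, any actual selection made by the algorithm lies in $\supportt$; and for the converse, to note that strictness of OMP's selection of $\param_0$ can be arranged by perturbing the coefficients or by the observation that $\param_0$ already attains the maximum, so at minimum OMP is \emph{allowed} to — and in the tie-broken-adversarially sense, does — pick outside $\supportt$.
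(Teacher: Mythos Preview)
Your proof is correct and is exactly the classical argument: the paper does not prove this theorem itself but refers to \cite[Th.~3.1]{tropp2004} for the direct part and \cite[Prop.~3.15]{Foucart2013} for the converse, which contain precisely the induction-plus-H\"older argument and the sign-vector construction you outline. Your concern about existence of the $\kargmax$ is also the one point the paper singles out (in a footnote following the theorem): once \eqref{eq:ERCTropp} holds, the strict inequality forces every maximizer to lie in the finite set $\supportt$, so the selection step is well-posed.
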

\noindent
A proof of the direct part of this result can be found in \cite[Th.~3.1]{tropp2004}.
The converse part is a slight variation of Tropp's original statement \cite[Th.~3.10]{tropp2004} and a proof can be found in \eg~\cite[Prop.~3.15]{Foucart2013}.

Condition \eqref{eq:ERCTropp} is usually referred to as the ``Exact Recovery Condition'' in the literature, and simply denoted ERC. Assuming linear independence of the atoms $\{\atome(\paramt_\ell)\}_{\ell=1}^{\nbAtome}$, it can be reformulated in the following (and perhaps more interpretable) way: 
\begin{equation}
	\label{eq:ERC-no-bad-selection}
	\forall\bsr\in \calR_{\supportt}\setminus\{{\bf0_{\spaceObs}}\},\; \forall {\param\in \paramSet\setminus\support^\True},
	\quad
	\kvbar{\kangle{\atome(\param),\bsr} } < \max_{\param'\in\supportt}\kvbar{\kangle{\atome(\param'),\bsr} }
\end{equation}
where $\calR_{\supportt}\triangleq\mathrm{span}(\{\atome(\paramt_\ell)\}_{\ell=1}^{\nbAtome})$. 
In other words, it implies that OMP \emph{always} selects a parameter in $\supportt$ during the first $\nbAtome$ iterations for any input vector $\Vobs$ resulting from the linear combination of the atoms $\{\atome(\paramt_\ell)\}_{\ell=1}^{\nbAtome}$. 
The converse part shows that \eqref{eq:ERCTropp} is worst-case necessary in the following sense: if \eqref{eq:ERCTropp} is not satisfied, there exists (at least) one non-trivial linear combination of the atoms $\{\atome(\paramt_\ell)\}_{\ell=1}^{\nbAtome}$ such that OMP selects an element $\param\notin\supportt$ at the first iteration; in this case the correct identification of $\supportt$ in $\nbAtome$ iterations is obviously not possible.

Interestingly, condition~\eqref{eq:ERCTropp} is also related (along with the linear independence of the atoms $\{\atome(\paramt_\ell)\}_{\ell=1}^{\nbAtome}$) to the success of MP, OLS and some convex relaxation procedures.
In~\cite[Th.~2]{Soussen2013}, the authors showed that \eqref{eq:ERCTropp} is also necessary and sufficient for exact $\nbAtome$-step recovery of $\support^\True$ by OLS. 
Regarding MP, \eqref{eq:ERCTropp} ensures that the procedure only selects atoms in $\supportt$ but does not imply exact recovery after $\nbAtome$ iterations of the algorithm since the same atom can be selected many times (the least-squares update of the coefficients in Algorithm 1 is not carried out), see \eg \cite[{Th.~1}]{Gribonval2006}. 
Finally, in~\cite[Th.~3]{Fuchs2004}~\cite[Th.~8]{Tropp2006}, the authors show that \eqref{eq:ERCTropp} also ensures correct identification of $\supportt$ by some convex relaxation procedures as \eg BP or Lasso.

\paragraph{Coherence}

Tropp's condition is of limited practical interest to characterize the recovery of all supports of size $k$ since it requires to verify that $\eqref{eq:ERCTropp}$ holds for any $\support^\True$ with $\card(\support^\True)=\nbAtome$. In order to circumvent this issue, other sufficient conditions of success, weaker but easier to evaluate in practice, have been proposed in the literature. 
One of the most popular conditions is based on the \emph{coherence} $\mu$ of a normalized dictionary. Assuming the atoms of the dictionary are of unit norm, this condition writes (with the convention that $\tfrac{1}{0}=+\infty$):
\begin{equation}
	\label{eq:coherenceCondition}
	\nbAtome{} < \frac{1}{2}\kparen{1 + \frac{1}{\mu}}
\end{equation}
where
\begin{equation}
	\label{eq:def-coherence}
	\mu \;\triangleq \sup_{\substack{\param,\param'\in\paramSet\\ 
	\param\neq\param'}}\, \kvbar{\kernelPaper\kparen{\param, \param'}}.
\end{equation}
Condition~\eqref{eq:coherenceCondition}, together with the normalization of the dictionary, implies that \eqref{eq:ERCTropp} is verified for any $\support^\True$ with $\card(\support^\True)\leq \nbAtome$, and also implies the linear independence of any group of $\nbAtome$ atoms of the dictionary. 
It therefore implies that OMP and OLS correctly identify any $\supportt$ with $\card(\support^\True)\leq \nbAtome$ in exactly $\card(\support^\True)$ iterations. 
It also ensures the correct identification of any $\supportt$ with $\card(\support^\True)\leq \nbAtome$ by BP and Lasso. 
In \cite{Herzet2016}, the authors emphasize that condition \eqref{eq:coherenceCondition} can be slightly relaxed if the coefficients $\{\coeff_{\ell}\}_{\ell=1}^{\nbAtome}$ exhibit some decay.

The coherence of the dictionary can be seen as a particular measure of \say{proximity}\footnote{$\mu=0$ if all the atoms are pairwise orthogonal and $\mu\simeq 1$ if some atoms are very correlated.} between the atoms of the dictionary.
Other exact recovery conditions, based on different proximity measures, have been proposed in the literature.
In \cite[Th.~3.5]{tropp2004}, the author derived recovery conditions based on \say{cumulative coherence}, whereas in \cite{Huang2011,Liu2012,Maleh2011,Mo2012,Wang2012,Chang2014,Wen2013,Mo2015}, guarantees based on \say{restricted isometry constants} were proposed. Given that Tropp's condition is both necessary and sufficient, all such recovery conditions imply that the ERC holds for any support of size $\nbAtome$.

\subsection{Continuous setting}
	\label{subsec:continuous-setting}

\paragraph{General setup}
Sparse representations in continuous dictionaries are basically characterized by two main ingredients: 
\begin{enumerate}[label=\itshape\roman*)]
	\item a parameter set $\paramSet$, usually assumed to be a connected subset of $\kR^{\dimParam}$ with non-empty interior\footnote{These two assumptions implies that $\paramSet$ is uncountable~\cite[Ch.~1, Exercise 19d]{tinyRudin}.
	}
	or a torus in dimension $\dimParam$.
	We note that, in this paper, we restrict our attention to the case where $\paramSet{}=\kR^\dimParam$ for $\dimParam\geq1$.
	\item an \say{atom} function $\atome:\paramSet\rightarrow\spaceObs$, assumed to be continuous and injective.
\end{enumerate} 
This type of dictionary appears in numerous signal processing tasks such as sparse spike deconvolution or super-resolution where one aims to recover fine-scale details from an under-resolved input signal~\cite{Claerbout1973,candes2014,Denoyelle2019,DiCarlo2020}.

\paragraph{Irrelevance of existing analyses}
The continuity of $\atome(\cdot)$ does not allow most of the analyses performed in the context of discrete dictionaries to be extended to the continuous framework. 
In particular, all exact recovery conditions based on coherence or restricted isometry constants turn out to be violated whenever dealing with continuous dictionaries.~As for the coherence condition \eqref{eq:coherenceCondition}, it is easy to see that the continuity of $\atome(\cdot)$ implies the continuity of $\kernelPaper(\cdot,\cdot)$ with respect to both its arguments. This, in turn, implies that $\mu=1$ (for normalized atoms) and the coherence-based condition \eqref{eq:coherenceCondition} is never met, even for $\nbAtome=1$!  

In order to circumvent this issue, some specific exact recovery conditions for continuous dictionaries have been proposed in the literature, see \eg \cite{castro2012,candes2014,Duval2014,Azais2015}.
We review below the main ingredients grounding these conditions of recovery.
In the context of convex-relaxation approaches, these conditions originate from the analysis of the associated optimality conditions.

\paragraph{A separation condition for \BPCD}
In the context of BP for continuous dictionaries, the question of exact recovery can be rephrased as follows: if the  atoms $\{\atome(\paramt_\ell)\}_{\ell=1}^{\nbAtome}$ are linearly independent and $\Vobs=\sum_{\ell=1}^{\nbAtome} \coeff_\ell\, \atome(\paramt_\ell)$, is the solution of \BPCD{} unique and equal to a discrete measure supported on $\{\paramt_\ell\}_{\ell=1}^{\nbAtome}$ with weights $\{\coeff_\ell\}_{\ell=1}^{\nbAtome}$? \\
The case where each atom is a collection of Fourier coefficients ($\spaceObs=\kC^m$) has received a lot of attention due to its connection with the super-resolution problem.
Indeed, the latter scenario is equivalent to recovering infinitely resolved details (the parameters) from some  low-pass observation. 
Without further assumptions on the coefficients, the targeted measure is the unique solution of \BPCD{}  provided that~\cite[Th.~1.2 and~1.3]{candes2014}
\begin{equation}
	\label{eq:state_of_the_art:separation_BP}
	\min_{\substack{\ell,\ell'\in\intervint{1}{\nbAtome}\\
	\ell\neq\ell'}} \kvbar{\paramt_{\ell'} - \paramt_{\ell}} > \frac{C}{f_c},
\end{equation}
where $\kvbar{\cdot-\cdot}$ is the $\ell_\infty$ distance on the $D$-dimensional torus (maximum deviation in any coordinate), $C$ is a constant that depends on the parameter dimension $\dimParam$ and $f_c$ is the cut-off frequency of the observation low-pass filter.
A framing of the value of $C$ has been proposed by the same authors, further refined in~\cite[Cor.~1]{Duval2014} and \cite[Th.~2.2]{FernandezGranda2016}. 

We see that \eqref{eq:state_of_the_art:separation_BP} implies the recovery of $\{\paramt_\ell\}_{\ell=1}^{\nbAtome}$ and $\{\coeff_\ell\}_{\ell=1}^{\nbAtome}$ provided that the elements of $\{\paramt_\ell\}_{\ell=1}^{\nbAtome}$ verify some ``minimum separation'' condition. Interestingly, as shown in \cite[Th.~2.1]{castro2012}, this separation condition is no longer needed when dealing with \textit{positive} linear combination of atoms (that is, when all the coefficients $\{\coeff_\ell\}_{\ell=1}^\nbAtome$ are positive).
The authors showed moreover that this separation-free result for positive linear combinations holds for any dictionary such that the atom function $\atome$ forms a ``Chebyshev system''~\cite[Ch.~2]{Kren1977} and provided that $2\nbAtome+1$ observations are available (\ie $\dim(\spaceObs) \geq 2\nbAtome+1$).

\paragraph{Dual certificates for the \blasso{} problem}
In~\cite{Duval2014}, the authors derived several dual certificates for the \blasso{} problem generalizing the work done by Fuchs for the \lasso~\cite{Fuchs2004} to an infinite-dimensional setup.
They first show that the existence of a ``vanishing derivative pre-certificate''~\cite[Def.~6]{Duval2014} is necessary so that the support of the solution to the \blasso{} problem is exactly $\{\paramt_\ell\}_{\ell=1}^\nbAtome$~\cite[Prop.~8]{Duval2014}. 	
On the other hand, they show that a so-called ``non degenerate source condition''~\cite[Def.~5]{Duval2014} is sufficient to ensure the desired recovery~\cite[Th.~2]{Duval2014}. 
We note that these two conditions apply on dictionaries made up of differentiable atom functions $\atome:\paramSet\mapsto\spaceObs$ since they involve the first and second order derivatives of the inner product $\langle \atome(\param), \Vobs \rangle$ evaluated at $\{\paramt_\ell\}_{\ell=1}^\nbAtome$. This is in contrast with the ``CMF'' dictionaries considered in this paper which involve some non-differentiability in their kernel function, see \Cref{sec:CMF_dico}. 
Moreover, we also emphasize that both conditions involve the sign of the coefficients $\{\coeff_\ell\}_{\ell=1}^{\nbAtome}$.

The comparison with the discrete case goes even deeper: it can be shown that the solution of \BPCD{} is, in some sense, the limit of the solution of \blasso{}~\cite[Prop.~1]{Duval2014}.
Although out of scope of the present paper, we also mention the existence of a literature related to the robustness of \blasso{} in various noisy settings~\cite{Duval2014,Denoyelle2015,Poon2019}.

\conditionalPagebreak


\section{Main results}
	\label{sec:main_results}

In this section, we present the main results of the paper. 
In \Cref{subsec:recovery}, we describe the constitutive ingredients of the ``continuous'' setup addressed in this work and provide a rigorous definition of the notions of exact recovery that will be used in our statements. 
Our main results are presented in \Cref{subsec:contrib:abstract_conditions,subsec:contrib:ksparserecov}.
The family of ``CMF dictionaries'', central to our results in \Cref{subsec:contrib:ksparserecov}, is introduced in \Cref{sec:CMF_dico}.

\subsection{Main ingredients}
	\label{subsec:recovery}

We first present the three main properties that a ``continuous'' dictionary should verify, see~\eqref{eq:kernel_Gconditions:unitNorm},~\eqref{eq:kernel_Gconditions:continuity} and~\eqref{eq:contribution:evanecent} below.
We then elaborate on some differences between the implementation of \comp{} in the discrete and continuous settings. 
We finally give a precise definition of the notions of recovery that will be used in the statements of our results.


\paragraph{Continuous dictionary}
First, the space $\paramSet$ is usually assumed to be a connected metric space or a torus in dimension $\dimParam$.  
Hereafter, for the sake of conciseness, we will restrict our attention to the case where $\paramSet=\kR^\dimParam$ and assume that the kernel associated to the atoms obeys some vanishing property (see \eqref{eq:contribution:evanecent} below).\footnote{Our results can be adapted to any set $\paramSet$ such that step~\ref{line:algo:continuousOMP:findtheta} of \Cref{alg:continuousOMP} is well-posed, that is, at least one maximizer exists.}
A second common working hypothesis in the ``continuous'' setup is the continuity of function $\atome:\paramSet\to \spaceObs$, that is 
\begin{equation}
	\label{eq:continuity_atom}
	\lim_{\param'\rightarrow\param} \nor{\atome(\param')-\atome(\param)} = 0\quad\forall\param\in\paramSet. 
\end{equation}
In this paper, we will moreover suppose that the atoms of the dictionary are normalized:
\begin{equation}
	\label{eq:unit_norm_atom}
	\nor{\atome(\param)}=1\quad\forall\param\in\paramSet. 
\end{equation}
In the sequel, recovery conditions will be expressed as a function of the induced kernel $\kernelPaper\big(\param, \param'\big)$:
\begin{equation}
	\label{eq:contribution:hyp_scalar_product}
	\kernelPaper\kparen{\param, \param'} \triangleq \scalprod{\atome( \param )}{\atome( \param' )} \qquad \forall\param,\param'\in\paramSet.
\end{equation}
The \say{\textit{continuity}} and \say{\textit{unit-norm}} properties are equivalent to:
\begin{subequations}
	\label{eq:propertiesKernel}
	\begin{align}
		\mbox{\say{\textit{unit norm}}}: \qquad\,\,\kernelPaper\big(\param, \param\big)=1 &\quad\forall\param\in\paramSet
		,
		\label{eq:kernel_Gconditions:unitNorm}
		\\
		\mbox{\say{\textit{continuity}}}: \displaystyle{\lim_{\param'\rightarrow\param} \kernelPaper\big(\param, \param'\big)} = 1 &\quad\forall\param\in\paramSet
		.
		\label{eq:kernel_Gconditions:continuity}
	\end{align}
\end{subequations}
Moreover, we have from the Cauchy-Schwarz inequality that 
\begin{equation}
	\kvbar{\kernelPaper\big(\param, \param'\big)} \leq 1, \qquad \forall\param,\param'\in\paramSet.
\end{equation}
Lastly, in this work, we will restrict our attention to kernels that vanish at infinity, \ie
\begin{equation}
	\label{eq:contribution:evanecent}
	\kforall[\varepsilon>0, \forall\param\in\paramSet{}] 
	\;\exists K \text{ compact: }\; \sup_{\param'\in K^c} \kernelPaper(\param', \param) < \varepsilon
	,
\end{equation}
where $K^c$ is the complement of $K$ in $\paramSet$. This covers the case where $\paramSet$ is compact, by simply considering $K = \paramSet$ with the convention $\sup_{\param'\in\emptyset} \kernelPaper =0$.

\paragraph{OMP in continuous dictionaries} 
Although \Cref{alg:continuousOMP} corresponds to the standard definition of OMP in the discrete setting, its implementation in continuous dictionaries leads to two major differences.  
First, the ``atom selection'' step in~\Cref{line:algo:continuousOMP:findtheta} does not necessarily admit a maximizer. 
In such a case, the recursions defined in \Cref{alg:continuousOMP} are ill-posed since the procedure cannot elucidate the maximization problem in~\Cref{line:algo:continuousOMP:findtheta}. 
Second, even if a maximizer exists, solving the ``atom selection'' problem may be computationally intractable.
In particular, the function to be maximized in Line~\ref{line:algo:continuousOMP:findtheta} may have many local maxima and the problem is indeed NP-hard in certain cases (\textit{e.g.}, when the maximization step involves a rank-$1$ approximation of a tensor~\cite[Th.~1.13]{Hillar2013} as in \cite{elviraTensor2020}), while in other cases it is easy (for example, the SVD can be revisited in this framework and solved up to numerical precision~\cite[Sec.~2.2]{Chandrasekaran2012}).
The maximization problem of Line~\ref{line:algo:continuousOMP:findtheta} also appears in Frank-Wolfe type algorithms~\cite{Denoyelle2019}, where existing theoretical guarantees also hold under the hypothesis that this step can be solved.

\noindent
In this paper, for the sake of simplifying our theoretical analysis, we will nevertheless stick to the idealistic version of OMP described in \Cref{alg:continuousOMP}. 	
The results presented in this work should therefore be considered more for the theoretical insights they provide into the behavior of OMP in continuous dictionaries than for their practical implications. 

In our theoretical analysis we will only have to deal with residuals $\bsr$ which can be written as a linear combination of a finite number of atoms of $\dico$. 
In such a case, the following lemma shows that a maximizer to the ``atom selection'' problem always exists: 
\begin{lemma}
	\label{Fact:existence-maximizer}
	Let $\dico=\kset{\atome(\param)}{\param\in\paramSet}$ be a continuous dictionary with kernel $\kernelPaper\kparen{\param, \param'} \triangleq \scalprod{\atome( \param )}{\atome( \param' )}$ verifying the continuity property~\eqref{eq:kernel_Gconditions:continuity} and vanishing property~\eqref{eq:contribution:evanecent}.
	Then 
	\begin{align}
	 	\kargmax_{\param\in\paramSet} \kvbar{\kangle{\atome(\param),\bsr} }
	 	\neq\emptyset
	\end{align} 
	whenever $\bsr\in\spaceObs\backslash\{{\bf0}_{\spaceObs}\}$ is a finite linear combination of elements of $\dico$.
\end{lemma}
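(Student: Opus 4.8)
The plan is a compactification argument followed by an application of the extreme value theorem. Write the residual as a finite combination $\bsr = \sum_{\ell=1}^{m} b_\ell\,\atome(\param_\ell)$ with $m\ge 1$, scalars $b_\ell$ and $\param_\ell\in\paramSet$, and set $f(\param) \triangleq \kvbar{\kangle{\atome(\param),\bsr}}$. First I would record two elementary facts. (a) $f$ is continuous: by the equivalence stated just after~\eqref{eq:propertiesKernel}, hypotheses~\eqref{eq:kernel_Gconditions:unitNorm}--\eqref{eq:kernel_Gconditions:continuity} make $\atome(\cdot)$ continuous into $\spaceObs$, so $\param\mapsto\kangle{\atome(\param),\bsr}$ is continuous, hence so is $f$. (b) By the triangle inequality and (conjugate-)linearity of $\scalprod{\cdot}{\cdot}$, one has $f(\param)\le \sum_{\ell=1}^{m}\kvbar{b_\ell}\,\kvbar{\kernelPaper(\param,\param_\ell)}$, and $\sum_{\ell=1}^{m}\kvbar{b_\ell}>0$ since $\bsr\neq{\bf0}_{\spaceObs}$.

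Next I would check that $\sup_{\param\in\paramSet} f(\param)>0$; this is where the non-triviality of $\bsr$ is used a second time. Expanding $\nor{\bsr}^2 = \scalprod{\bsr}{\bsr}$ along the representation $\bsr=\sum_\ell b_\ell\,\atome(\param_\ell)$ expresses $\nor{\bsr}^2$ as a linear combination (up to conjugation of the $b_\ell$) of the scalars $\kangle{\atome(\param_\ell),\bsr}$; since $\nor{\bsr}\neq 0$, these cannot all vanish. Choosing an index $\ell_0$ with $M_0 \triangleq f(\param_{\ell_0}) = \kvbar{\kangle{\atome(\param_{\ell_0}),\bsr}} > 0$, we obtain $\sup_{\param\in\paramSet} f \ge M_0 > 0$.

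The key step is to use the vanishing property~\eqref{eq:contribution:evanecent} to replace $\paramSet$ by a compact set without lowering the supremum. Set $\varepsilon \triangleq M_0/\bigl(2\sum_{\ell=1}^{m}\kvbar{b_\ell}\bigr)>0$; for each $\ell$, \eqref{eq:contribution:evanecent} (used here as a bound on the modulus $\kvbar{\kernelPaper(\cdot,\param_\ell)}$) provides a compact $K_\ell\subseteq\paramSet$ with $\kvbar{\kernelPaper(\param',\param_\ell)}<\varepsilon$ for every $\param'\in K_\ell^c$. Then $K\triangleq\{\param_1,\dots,\param_m\}\cup\bigcup_{\ell=1}^{m} K_\ell$ is a nonempty compact subset of $\paramSet$, and for $\param\in K^c$ fact (b) gives $f(\param)<\varepsilon\sum_{\ell=1}^{m}\kvbar{b_\ell}=M_0/2<M_0\le\sup_{\param'\in\paramSet}f(\param')$. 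Hence $\sup_{\param\in\paramSet}f(\param)=\sup_{\param\in K}f(\param)$, and since $f$ is continuous on the nonempty compact set $K$, the extreme value theorem yields $\bar\param\in K$ with $f(\bar\param)=\sup_{\param\in K}f(\param)=\sup_{\param\in\paramSet}f(\param)$. Thus $\bar\param\in\kargmax_{\param\in\paramSet}\kvbar{\kangle{\atome(\param),\bsr}}$, which is therefore nonempty.

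I do not expect a genuine obstacle: the whole argument is carried by hypothesis~\eqref{eq:contribution:evanecent}, which is tailored precisely so that the atom-selection maximization stays tight at finite parameters. The two points deserving a little care are (i) establishing $\sup_{\param\in\paramSet}f>0$ rather than possibly $0$ --- otherwise the compactification would be vacuous --- which is why the non-triviality of $\bsr$ is invoked twice, and (ii) the mild convention matter of applying~\eqref{eq:contribution:evanecent} to the modulus $\kvbar{\kernelPaper(\cdot,\param_\ell)}$ rather than to $\kernelPaper(\cdot,\param_\ell)$; for the real nonnegative kernels of interest (\eg the CMF kernels of \Cref{sec:CMF_dico}) the two coincide.
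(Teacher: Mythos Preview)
Your proof is correct and follows essentially the same route as the paper: use the vanishing property~\eqref{eq:contribution:evanecent} to confine the supremum of $f$ to a compact set, then apply the extreme value theorem. Two minor differences: the paper simply splits into the cases $f\equiv 0$ (where every $\param$ is a maximizer) and $f\not\equiv 0$, whereas you show directly via $\nor{\bsr}^2 = \sum_\ell \bar{b}_\ell\,\kangle{\atome(\param_\ell),\bsr}$ that the first case cannot occur when $\bsr\neq{\bf0}_{\spaceObs}$ --- a nice observation; and you are more explicit than the paper about the $\kvbar{\kernelPaper}$ versus $\kernelPaper$ issue in the triangle-inequality step, which the paper's own proof also silently assumes (it writes $\varphi(\param)\le\sum_\ell|c_\ell|\,\kernelPaper(\param,\paramt_\ell)$).
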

\noindent
A proof of this statement is available in Appendix~\ref{subsec:app:existence-maximizer}.

We finally emphasize that the solution to the OMP recursions may not be unique.
Indeed, in situations where the ``atom selection'' problem in~\Cref{line:algo:continuousOMP:findtheta} admits \textit{several} solutions, there may exist several output sets, $\supporte$ and $\{\coeff_\ell\}_{\ell=1}^{\widehat{\nbAtome}}$, verifying OMP recursions.  
Hereafter, given an observation vector $\Vobs$, we will call any set $\supporte$ which can be generated by OMP with $\Vobs$ as input, as a ``reachable support''.

\paragraph{Notions of recovery}
The recovery results stated in the next sections of the paper will involve the  following notions of success: \textit{``exact $\nbAtome$-step recovery of $\supportt$''} and  \textit{``exact $\support$-delayed recovery of $\supportt$''}.
We devote the remainder of this section to rigorously defining these two notions.

We say that \comp{} achieves exact recovery of coefficients $\{\coeff_\ell\}_{\ell=1}^{\nbAtome{}}\subset\kR*$ and atom parameters $\supportt\triangleq\{\paramt_\ell\}_{\ell=1}^{\nbAtome}$ if $\{\coeff_\ell\}_{\ell=1}^{\nbAtome{}}$ and $\supportt$ can be unambiguously identified from \textit{any} reachable outputs of \comp{} ($\supporte$ and $\{\coeffe_{\ell}\}_{\ell=1}^{\widehat{\nbAtome}}$) run with $\Vobs = \sum_{\ell=1}^{\nbAtome{}} \coeff_\ell\, \atome(\paramt_\ell)$ as input. 	
We note that a simple necessary and sufficient condition for exact recovery of $\{\coeff_\ell\}_{\ell=1}^{\nbAtome{}}$ and $\supportt$ reads
\begin{align}
	\supportt\subseteq \supporte,
\end{align}
for each reachable support $\supporte$.
This can be seen from the following arguments.
If there is a reachable support such that $\supporte\nsupseteq \supportt$, then exact recovery is obviously not attained since there exists some $\paramt_\ell\in\supportt$ that is not identified in $\supporte$. 
Conversely, if $\supportt \subseteq \supporte$ holds, one must have 
\begin{equation} \label{eq:recovery:supportt identification}
	\forall \ell\in \intervint{1}{\widehat{\nbAtome}}, \, 
	\begin{cases}
		\coeffe_\ell = \coeff_\ell & \mbox{if $\parame_\ell\in\supportt$}\\
		\coeffe_\ell = 0 
		 & \mbox{otherwise},
	\end{cases}
\end{equation}
because the atoms $\{\atome(\param):\param\in\supporte\}$ selected by OMP are always linearly independent and $\Vobs\in\mathrm{span}(\{\atome(\paramt_\ell)\}_{\ell=1}^{\nbAtome})$.
Therefore, $\supportt=\{\paramt_\ell\}_{\ell=1}^{\nbAtome{}}$ can be unambiguously identified from the non-zero elements of $\{\coeffe_\ell\}_{\ell=1}^{\widehat{\nbAtome}}$.

In the literature related to the conditions of success of \comp{}, a distinction is usually made between the cases ``$\supportt = \supporte$'' and ``$\supportt \subseteq \supporte$'': the former is referred to as ``$\nbAtome$-step recovery''  because it implies that \comp{} identifies $\supportt$ and $\{\coeff_\ell\}_{\ell=1}^{\nbAtome{}}$ in exactly $\nbAtome$ steps; 
the latter  is known as ``delayed recovery''  because \comp{} may require (if the inclusion is strict) to carry out more than $\nbAtome$ iterations to identify $\supportt$ and $\{\coeff_\ell\}_{\ell=1}^{\nbAtome{}}$.

In this paper  we will focus on conditions ensuring the correct identification of a given support $\supportt$ of cardinality $k$ for \textit{any} choice of the non-zero weighting coefficients $\{\coeff_\ell\}_{\ell=1}^{\nbAtome}$. 
The notions of \textit{``exact $\nbAtome$-step recovery of $\supportt$''} and \textit{``exact $\support$-delayed recovery of $\supportt$''} announced at the beginning of this section then read as follows. We say that \comp{} achieves \textit{``exact $\nbAtome$-step recovery of $\supportt$''} if $\supportt=\supporte$ for any choice of $\{\coeff_\ell\}_{\ell=1}^{\nbAtome}\subset\kR*$ and any reachable output $\supporte$.
This implies that there is only one reachable output. 
Moreover, given some set $\support\subset\paramSet$, we say that OMP achieves \textit{``exact $\support$-delayed recovery of $\supportt$''} if
\begin{align}
	\supportt \subseteq \supporte \subseteq \support
\end{align}
for any choice of $\{\coeff_\ell\}_{\ell=1}^{\nbAtome}\subset\kR*$ and any reachable output $\supporte$. 
``$\support$-delayed recovery'' can be regarded as a refined version of ``delayed recovery'' where the set of parameters that OMP  may select is guaranteed to belong to some set $\support\supseteq\supportt$. 
Uninterestingly this is always the case with $\support = \paramSet$, so what will be important in our results is to establish conditions such that we can identify a \emph{finite} set $\support$, determined by the only specification of $\supportt$, such that $\support$-delayed recovery of $\supportt$ holds. 
We note that $\support$-delayed recovery implies that OMP identifies $\supportt$ in at most $\card(\support)$ iterations.  
Finally, we emphasize that ``exact $\supportt$-delayed recovery of $\supportt$'' is equivalent to ``exact $\nbAtome$-step recovery of $\supportt$''. We will sometimes use the former in the formulation of our results to have more compact statements. We will also always implicitly assume that OMP achieves $\nbAtome$-step recovery of $\supportt$ when $\supportt=\emptyset$ since this implies that $\Vobs={\bf0}_\spaceObs$ and OMP returns the empty support $\supporte=\emptyset$ at iteration 0 in this case.


\subsection{Exact recovery of a given support: sufficient conditions}
\label{subsec:contrib:abstract_conditions}

In this section, we highlight some instrumental properties of the dictionary $\dico$ and support $\supportt$ which allow OMP to achieve exact $\card(\support)$-step recovery of each $\support\subseteq\supportt$ (see \Cref{th:MainAbstractTheorem}).
These conditions are the basis of our results on \say{CMF dictionaries} stated in \Cref{subsec:contrib:ksparserecov}. 

We first notice that, in the context of continuous dictionaries, the $\nbAtome$-step analysis of~\Cref{th:Tropp's-ERC} still applies: condition \eqref{eq:ERCTropp} along with the linear independence of the atoms $\{\atome(\param_\ell)\}_{\ell=1}^\nbAtome$ are still necessary and sufficient for exact recovery of a support~$\supportt$.\footnote{We note in particular that \eqref{eq:ERCTropp} ensures that the ``atom selection" step in \Cref{line:algo:continuousOMP:findtheta} of \Cref{alg:continuousOMP} is well-defined since the maximizers are ensured to belong to the finite set $\supportt$.} 
However, the standard formulation 
\begin{equation}
	\label{eq:ERCmax}
	\max_{\param\in \paramSet\setminus\supportt} \kvvbar{ \kinv{\bfG} \bfg_{\param}}_1 < 1,
\end{equation}
equivalent to \eqref{eq:ERCTropp} in the discrete setting, does no longer hold in the case of continuous dictionaries as the supremum 
\begin{equation}
	\sup_{\param\in \paramSet\setminus\supportt} \kvvbar{ \kinv{\bfG} \bfg_{\param}}_1
\end{equation}
is always at least $1$.\footnote{Indeed, first notice that $\kvvbar{ \kinv{\bfG} \bfg_{\param}}_1=1$ for all $\param\in\supportt$. One then obtains that the supremum is at least $1$ by continuity of $\param\mapsto\kvvbar{ \kinv{\bfG} \bfg_{\param}}_1$.
\label{footnote:sup=1}}

In order to circumvent this problem, we identify below two simpler conditions, respectively on the dictionary $\dico$ (\emph{via} its induced kernel $\kernelPaper$) and the support $\supportt$, which imply that the atoms $\{\atome(\paramt_\ell)\}_{\ell=1}^\nbAtome$ are linearly independent and that \eqref{eq:ERCTropp} is verified, see \Cref{th:MainAbstractTheorem} below.
The following definition includes assumptions on the kernel ensuring that the dictionary atoms are normalized, and that the atom function $\param \mapsto \atome(\param)$ is injective and continuous.

\begin{definition}[Admissible kernel]
	\label{hyp:geometric:0}
	A kernel $\kernelPaper{}$ is said to be {\em admissible} if:
	\begin{enumerate}[label=\itshape\roman*)]
		\item it verifies \eqref{eq:propertiesKernel} and \eqref{eq:contribution:evanecent}.
		\label{item:admissible-kernel:unit-norm-and-vanishing}

		\item $0 \leq \kernelPaper{}\big(\param, \param'\big) < 1$ for any $\param \neq \param'$.
		\label{item:admissible-kernel:decreasing}
	\end{enumerate}
	\noindent
	By extension, a dictionary $\dico=\kset{\atome(\param)}{\param\in\paramSet}$ is said to be \emph{admissible} if its induced kernel is admissible.
\end{definition}

\begin{definition}[Admissible support with respect to kernel $\kernelPaper$]
	\label{def:admissible_support}
	A support $\supportt=\{\paramt_\ell\}_{\ell=1}^\nbAtome$  is {\em admissible} with respect to a kernel $\kernelPaper$ if the following holds for  any non-empty subset $T\subseteq\intervint{1}{\nbAtome}$ and any positive coefficients $\{\coeff_\ell\}_{\ell\in T}\subset\kR*+$ such that $\sum_{\ell \in T} \coeff_\ell < 1$:
	\begin{enumerate}[label=\itshape\roman*)]
		\item	 \label{hyp:geometric:1}
		The set of global maximizers of 
		\begin{equation}
			\kfuncdef[m]{\condSumPositivKernelSymb}{\paramSet}{\kR+}[\param][\displaystyle\sum_{\ell\in T} \coeff_\ell\, \kernelPaper(\param,\paramt_\ell)]
			,\label{eq:hyp:geometric:1:func}
		\end{equation}
		is a subset of $\kfamily{\paramt_\ell}{\ell\in T}$.
		\item 	\label{hyp:geometric:2}
		If $\ell\in \intervint{1}{\nbAtome}\setminus T$ satisfies $\condSumPositivKernelSymb(\param) - \kernelPaper\kparen{\param,\paramt_{\ell}} \leq 0$ for all $\param\in\{\paramt_{\ell'}\}_{\ell'\in T}$, then
		\begin{equation}
			\label{eq:hyp:geometric:2:condition}
			\kforall[\param\in\paramSet{}]\quad \condSumPositivKernelSymb(\param) - \kernelPaper\kparen{\param,\paramt_{\ell}} \leq 0 .
		\end{equation} 
	\end{enumerate}
	\noindent
	By extension, the support $\supportt$ is said to be \emph{admissible} with respect to dictionary $\dico=\kset{\atome(\param)}{\param\in\paramSet}$ if $\supportt$ is admissible with respect to the kernel induced by $\dico$.
\end{definition}

\noindent
With these definitions, our first recovery result reads:\vspace{0.1cm}
\begin{theorem}
	\label{th:MainAbstractTheorem}
	Assume $\dico$ is admissible and $\supportt$ is admissible with respect to $\dico$.
	Then, OMP achieves exact $\card(\support)$-step recovery of each $\support\subseteq \supportt$.\vspace{0.1cm}
\end{theorem}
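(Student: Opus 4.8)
The plan is to derive the theorem from Tropp's Exact Recovery Condition (\Cref{th:Tropp's-ERC}), whose two hypotheses — linear independence of the atoms, and \eqref{eq:ERCTropp} — I would verify directly from \Cref{hyp:geometric:0,def:admissible_support} via the geometric reformulation \eqref{eq:ERC-no-bad-selection} of the ERC. First, a reduction. The case $\supportt=\emptyset$ holds by the stated convention. For a nonempty $\support\subseteq\supportt$, the set $\support$ is itself an admissible support with respect to $\dico$: the two defining properties of \Cref{def:admissible_support}, quantified over all subsets $T$ and indices $\ell$ within $\intervint{1}{\nbAtome}$, in particular hold for those within the index set of $\support$, which is exactly the admissibility of $\support$. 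Hence it suffices to prove the single implication ``$\dico$ admissible and $\supportt$ admissible $\Rightarrow$ OMP achieves exact $\nbAtome$-step recovery of $\supportt$'', and then apply it to each $\support\subseteq\supportt$ in place of $\supportt$. By \Cref{th:Tropp's-ERC} I then need (a) linear independence of $\{\atome(\paramt_\ell)\}_{\ell=1}^{\nbAtome}$ and (b) condition \eqref{eq:ERCTropp}; granting (a), and since every residual at hand is a finite combination of atoms so that \Cref{Fact:existence-maximizer} applies, the reformulation \eqref{eq:ERC-no-bad-selection} reduces (b) to the statement: for every $\bsr\in\calR_{\supportt}\setminus\{{\bf0}_{\spaceObs}\}$, every global maximizer of $\param\mapsto\kvbar{\scalprod{\atome(\param)}{\bsr}}$ lies in $\supportt$.

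The technical engine is a sign decomposition that brings the two admissibility properties to bear. Write $\bsr=\sum_{\ell}\beta_\ell\,\atome(\paramt_\ell)$ and, using the homogeneity of \eqref{eq:ERC-no-bad-selection}, rescale so that $\sum_\ell\kvbar{\beta_\ell}<1$; set $T^{+}=\{\ell:\beta_\ell>0\}$, $T^{-}=\{\ell:\beta_\ell<0\}$ and $\psi^{\pm}(\param)=\sum_{\ell\in T^{\pm}}\kvbar{\beta_\ell}\,\kernelPaper(\param,\paramt_\ell)$, so that $\scalprod{\atome(\param)}{\bsr}=\psi^{+}(\param)-\psi^{-}(\param)$ and, since $\kernelPaper\geq 0$, $\kvbar{\scalprod{\atome(\param)}{\bsr}}\leq\max\{\psi^{+}(\param),\psi^{-}(\param)\}$. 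From \Cref{def:admissible_support} I extract two facts, which I call the \emph{maximizer property} — every positive combination $\sum_{\ell\in T}\coeff_\ell\,\kernelPaper(\cdot,\paramt_\ell)$ with $\sum_{\ell\in T}\coeff_\ell<1$ attains its supremum over $\paramSet$ only at points of $\{\paramt_\ell\}_{\ell\in T}$ — and the \emph{domination property} — if such a positive combination over $T$ is dominated by $\kernelPaper(\cdot,\paramt_\ell)$ at every point of $\{\paramt_{\ell'}\}_{\ell'\in T}$, then it is dominated by $\kernelPaper(\cdot,\paramt_\ell)$ everywhere on $\paramSet$; a limiting argument in the weights specializes the second to $\kernelPaper(\paramt_i,\paramt_\ell)\,\kernelPaper(\param,\paramt_i)\leq\kernelPaper(\param,\paramt_\ell)$ for all $\param$ and all distinct $i,\ell$, i.e.\ the part of $\atome(\paramt_\ell)$ orthogonal to $\atome(\paramt_i)$ keeps a nonnegative inner product with every atom of $\dico$.

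With this in place, the core argument goes as follows. Fix $\bsr$ and a global maximizer $\param_0$ of $\kvbar{\scalprod{\atome(\param)}{\bsr}}$, and suppose $\param_0\notin\supportt$. If $T^{+}=\emptyset$ or $T^{-}=\emptyset$, then $\bsr$ is (up to sign) a positive combination of the $\atome(\paramt_\ell)$ and the maximizer property gives $\param_0\in\supportt$ — a contradiction; so $T^{+},T^{-}\neq\emptyset$, and, replacing $\bsr$ by $-\bsr$ if necessary (which swaps $T^{+}$ and $T^{-}$ without affecting the quantity to be maximized), I may assume $\psi^{+}(\param_0)\geq\psi^{-}(\param_0)$, so $\kvbar{\scalprod{\atome(\param_0)}{\bsr}}=\psi^{+}(\param_0)-\psi^{-}(\param_0)$. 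Let $\paramt_{j}$, $j\in T^{+}\subseteq\supportt$, maximize $\psi^{+}$ over $\paramSet$; since $\param_0\notin\supportt$ cannot be a maximizer of $\psi^{+}$ (maximizer property), one gets $\psi^{+}(\param_0)<\psi^{+}(\paramt_{j})$. It then remains to show, using the domination property to control $\psi^{-}$, that $\psi^{+}-\psi^{-}$ does not decrease when $\param_0$ is replaced by a suitable point $\paramt_{j'}\in\supportt$, which produces $\kvbar{\scalprod{\atome(\paramt_{j'})}{\bsr}}>\kvbar{\scalprod{\atome(\param_0)}{\bsr}}$ and contradicts maximality. Claim (a) follows from the same decomposition: if $\sum_\ell\beta_\ell\,\atome(\paramt_\ell)={\bf0}_{\spaceObs}$ with the $\beta_\ell$ not all zero, then (after rescaling) $\psi^{+}\equiv\psi^{-}$, so a maximizer of $\psi^{+}$ lies in $\{\paramt_\ell\}_{\ell\in T^{+}}$ and, being a maximizer of $\psi^{-}$, also in $\{\paramt_\ell\}_{\ell\in T^{-}}$ — impossible since the $\paramt_\ell$ are distinct and $T^{+}\cap T^{-}=\emptyset$ (the cases $T^{\pm}=\emptyset$ being excluded as above). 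I expect the main obstacle to be exactly the last step of the core argument: turning the two properties of \Cref{def:admissible_support}, which only constrain behaviour at the finite set $\{\paramt_\ell\}_{\ell=1}^{\nbAtome}$, into a \emph{global} control over $\paramSet$ of the arg-max of the difference $\psi^{+}-\psi^{-}$ of two positive combinations — in particular securing a uniform bound on the negative part $\psi^{-}$, which is where the maximizer and domination properties have to be combined and where the rescalings needed to meet the normalization $\sum_{\ell\in T}\coeff_\ell<1$ come into play.
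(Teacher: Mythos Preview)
Your reduction to a single support and your proof of linear independence are correct and match the paper's argument exactly. The gap is in your core step for the ERC. You propose to show directly, via the sign decomposition $\scalprod{\atome(\param)}{\bsr}=\psi^{+}(\param)-\psi^{-}(\param)$, that no maximizer of $|\psi^{+}-\psi^{-}|$ can fall outside $\supportt$. But the domination property in \Cref{def:admissible_support}\ref{hyp:geometric:2} compares a positive combination to a \emph{single} kernel $\kernelPaper(\cdot,\paramt_\ell)$, not to another positive combination; there is no evident way to bootstrap it into a comparison of $\psi^{+}$ against $\psi^{-}$. In particular, nothing prevents $\psi^{-}$ from being large precisely at the $\paramt_j$ that maximizes $\psi^{+}$, so the inequality you need, $\psi^{+}(\paramt_{j'})-\psi^{-}(\paramt_{j'})>\psi^{+}(\param_0)-\psi^{-}(\param_0)$ for some $j'$, does not follow from the two admissibility items as you have stated them. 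The special case $\kernelPaper(\paramt_i,\paramt_\ell)\,\kernelPaper(\param,\paramt_i)\leq\kernelPaper(\param,\paramt_\ell)$ that you extract is correct but too weak for this purpose.

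The paper takes a different route that sidesteps this obstacle entirely. Instead of working with the reformulation \eqref{eq:ERC-no-bad-selection}, it proves \eqref{eq:ERCTropp} in its algebraic form by induction on $\card(\support')$ for $\support'\subseteq\supportt$, establishing three nested facts: (a) $\kinv{\bfG}{\bf1}$ has nonnegative entries; (b) $\kinv{\bfG}\gtheta$ has nonnegative entries for every $\param$; (c) $\kvvbar{\kinv{\bfG}\gtheta}_1<1$ for $\param\notin\support'$. The induction uses the Schur-complement block inversion of $\bfG$. The crucial point is that in step (b), the weight vector $\overline{\bfv}=\kinv{\overline{\bfG}}\,\overline{\bfg}_{\nbAtome'}$ arising from the Schur complement satisfies $\overline{\bfv}\geq 0$ and $\kvvbar{\overline{\bfv}}_1<1$ by the induction hypothesis, and makes the hypothesis of item~\ref{hyp:geometric:2} hold \emph{with equality} at every $\paramt_j$, $j<\nbAtome'$; item~\ref{hyp:geometric:2} then yields exactly the needed sign. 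Once (a) and (b) are known, (c) is immediate: $\kvvbar{\kinv{\bfG}\gtheta}_1={\bf1}^{\mathsf T}\kinv{\bfG}\gtheta=\sum_\ell \bfu[\ell]\,\kernelPaper(\param,\paramt_\ell)$ with $\bfu=\kinv{\bfG}{\bf1}\geq 0$, so item~\ref{hyp:geometric:1} applies directly and the maximum, equal to $1$, is attained only on $\support'$. In short, the paper uses item~\ref{hyp:geometric:2} to manufacture the right positive weights (via $\kinv{\bfG}$), rather than trying to control a difference of two arbitrary positive combinations.
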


\noindent
A proof of this result is available in \Cref{sec:technical_details}. 
\Cref{th:MainAbstractTheorem} provides some sufficient conditions for exact $\card(\support)$-step recovery of any $\support\subseteq \supportt$ via the definitions of ``admissible dictionary'' and ``admissible support''. 
In particular, the conditions of \Cref{th:MainAbstractTheorem} imply that \eqref{eq:ERCTropp} is satisfied for any $\support\subseteq \supportt$. As we will see in \Cref{subsec:contrib:ksparserecov}, the admissibility of $\dico$ and $\supportt$ may be much easier to prove in some cases than verifying directly that \eqref{eq:ERCTropp} holds. 

As the admissibility conditions stated in Definitions \ref{hyp:geometric:0} and \ref{def:admissible_support} may appear somewhat technical, we discuss hereafter the different items appearing in these definitions in order to shed some light on the scope of \Cref{th:MainAbstractTheorem}.

In \Cref{hyp:geometric:0}, \eqref{eq:propertiesKernel} ensures that the kernel $\kernelPaper$ induced by $\dico$ is continuous and that the dictionary $\dico$ only contains unit-norm atoms. The continuity assumption is crucial in the derivation of our result since it induces a specific structure on the dictionary.
The unit-norm hypothesis is only secondary but allows to avoid some unnecessary technicalities in the proofs.
Hypothesis \eqref{eq:contribution:evanecent} ensures the well-posedness of the ``atom selection'' step in \Cref{line:algo:continuousOMP:findtheta} of \Cref{alg:continuousOMP}, see \Cref{Fact:existence-maximizer}. Finally, $0 \leq \kernelPaper(\param, \param')$ implies that the inner product between two atoms of $\dico$ is always nonnegative, whereas $\kernelPaper(\param, \param')<1$ guarantees that $\atome(\param)\ne \atome(\param')$ for $\param \ne \param'$, \emph{i.e.}, that $\param \mapsto \atome(\param)$ is an injective function (remember that we assume $\kernelPaper(\param,\param)=1$ for all $\param\in \paramSet$; the fact that atoms are distinct is thus a direct consequence of the Cauchy-Schwarz inequality).
The atoms $\atome(\param)\ne \atome(\param')$
for $\param \ne \param'$ being normalized, distinct, and positively correlated, they are also linearly independent.

As for \Cref{def:admissible_support}, item~\ref{hyp:geometric:1} ensures that a correct atom selection always occurs when the residual $\bsr$ is a \emph{positive} combination\footnote{\label{footnote:positive-combili} In particular, we note that if item~\ref{hyp:geometric:1} of \Cref{def:admissible_support} is true, then its conclusion still holds without the hypothesis ``$\sum_{\ell\in T}\coeff_\ell<1$''.} of the atoms of the support and the kernel $\kernelPaper$ is admissible.
Indeed, if $\bsr = \sum_{\ell=1}^\nbAtome \coeff_\ell\,  \atome(\paramt_\ell)$ with $\coeff_1,\ldots,\coeff_{\nbAtome}>0$ and the kernel is admissible then from \Cref{hyp:geometric:0}
\begin{equation}
	\kvbar{\scalprod{\atome(\param)}{\bsr}} = \sum_{\ell=1}^\nbAtome \coeff_\ell\, \kernelPaper(\param,\paramt_\ell).
\end{equation}
In such a case, item~\ref{hyp:geometric:1} of \Cref{def:admissible_support} then implies
\begin{equation}
	\kargmax_{\param\in\paramSet}\;\kvbar{\scalprod{\atome(\param)}{\bsr}}\subseteq \supportt. 
\end{equation}
Item~\ref{hyp:geometric:2} of \Cref{def:admissible_support} does not have such a simple interpretation but a careful  inspection of our proof in \Cref{sec:technical_details} shows that this condition is instrumental for deriving the result stated in \Cref{th:MainAbstractTheorem}. 
Altogether, given some admissible dictionary $\dico$, \Cref{th:MainAbstractTheorem} allows us to establish recovery results valid without sign constraints by only proving the two assumptions gathered in \Cref{def:admissible_support}, which somehow correspond to establishing the result for the easier case of positive combinations of atoms.


\subsection{CMF dictionaries}
\label{sec:CMF_dico}

In the next section, we will particularize \Cref{th:MainAbstractTheorem} to a family of dictionaries whose kernel is defined via a completely monotone function (CMF).
In this section, we provide a precise definition of this family of dictionaries and some of their properties that will be used throughout the paper. 

\noindent
We first recall the definition of a CMF:

\begin{definition}[CMF~\protect{\cite[Def.~7.1]{Wendland2005}}]
	\label{def:intro:def_cmf}
	A function $\cmf{}:\kR+\longmapsto\kR$ is completely monotone on $\kintervco{0}{+\infty}$ if it is infinitely differentiable on $\kintervoo{0}{+\infty}$, right continuous at $0$, and if its derivatives obey
	\begin{equation}
		\label{eq:intro:def_cmf}
		(-1)^{n} \cmf[][(n)](x) \geq 0 \quad \forall x, n\in\kR*+\times\kN.
	\end{equation}
\end{definition}

\noindent
As described in the following example, many well-known functions are CMFs: 
\begin{example}
	\label{ex:CMF}
	The following functions are completely monotone~\citep{miller2001}:
	\begin{itemize}
		\item the function $x \mapsto \cste^{-\lambda x}$ for $\lambda>0$ which gives birth to the Laplace kernel,
		\item the function $x\mapsto \tfrac{1}{1+\lambda x}$ for $\lambda>0$,
		\item ratios of modified Bessel functions of the first kind,
		\item a subset of the confluent hypergeometric functions (Kummer’s function),
		\item a subset of the Gauss hypergeometric functions.\\
	\end{itemize}
\end{example}

\noindent
By definition, CMFs are non-negative, non-increasing and convex functions.
Moreover, they admit an integral formulation in terms of Laplace transform of a Borel measure: 
\begin{lemma}[Bernstein-Widder theorem, {\cite[Th. 7.11]{Wendland2005}}]
	\label{lemma:geometric:cmf:integral_representation}
	A function $\cmf$ is completely monotone on $\kintervco{0}{+\infty}$ if and only if there exists a non-negative finite measure $\nu$ on Borel sets of $\kintervco{0}{+\infty}$ such that 
	\begin{equation}
		\label{eq:geometric:cmf:integral_representation}
		\cmf{}(x) =
		\int_{0}^{+\infty}
		\cste^{-\varIntCmf x} \diff\nu(\varIntCmf),
	\end{equation}
	where the integral converges for all $x \geq 0$ since $\nu$ is finite and $e^{-u x} \leq 1$.
\end{lemma}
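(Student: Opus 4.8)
The ``if'' direction is routine. If $\varphi(x)=\int_{0}^{+\infty} e^{-ux}\diff\nu(u)$ for some non-negative finite $\nu$, then on any compact subinterval $[a,b]\subset\kintervoo{0}{+\infty}$ the integrand $u\mapsto u^{n}e^{-ux}$ is dominated by the $\nu$-integrable function $u\mapsto u^{n}e^{-au}$, so differentiation under the integral sign is justified and gives $\varphi^{(n)}(x)=\int_{0}^{+\infty}(-u)^{n}e^{-ux}\diff\nu(u)$ for every $n\in\kN$ and $x>0$; hence $(-1)^{n}\varphi^{(n)}(x)=\int_{0}^{+\infty}u^{n}e^{-ux}\diff\nu(u)\ge 0$. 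Right-continuity at $0$ follows from dominated convergence (using $e^{-ux}\le 1$ and finiteness of $\nu$). Thus $\varphi$ is completely monotone.

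For the ``only if'' direction I would follow the classical Bernstein--Widder argument. A first Taylor-with-remainder step exploits the sign conditions: expanding $\varphi$ around $s>0$ and evaluating at $\varepsilon\in\kintervoo{0}{s}$, the Lagrange remainder has the sign of $(-1)^{n+1}\varphi^{(n+1)}$ and is therefore non-negative, so letting $\varepsilon\downarrow 0$ and using right-continuity at $0$ gives $\sum_{k=0}^{n}\frac{(-s)^{k}}{k!}\varphi^{(k)}(s)\le\varphi(0)$ for all $n$, whence
\[
 0\le(-1)^{n}\varphi^{(n)}(s)\le\frac{n!\,\varphi(0)}{s^{n}},\qquad s>0,\ n\in\kN .
\]
This bound shows that the Taylor series of $\varphi$ centred at any $c>0$ has radius of convergence at least $c$, and a short chaining argument based on the identity theorem then shows that this series in fact equals $\varphi$ on all of $\kintervoo{0}{2c}$. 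Next, for each $c>0$ I would introduce the atomic non-negative measure $\nu_{c}\triangleq\sum_{k\ge 0}\frac{(-c)^{k}}{k!}\varphi^{(k)}(c)\,\delta_{k/c}$ on $\kintervco{0}{+\infty}$, whose weights are non-negative by complete monotonicity and whose total mass is $\le\varphi(0)$ by the bound above. For fixed $x>0$, substituting $q=e^{-x/c}\in\kintervoo{0}{1}$ identifies $\int_{0}^{+\infty}e^{-ux}\diff\nu_{c}(u)=\sum_{k\ge0}\frac{\varphi^{(k)}(c)}{k!}(-cq)^{k}$ with the Taylor series of $\varphi$ at $c$ evaluated at $c(1-q)\in\kintervoo{0}{c}$, and hence, by the previous step, $\int_{0}^{+\infty}e^{-ux}\diff\nu_{c}(u)=\varphi\bigl(c(1-e^{-x/c})\bigr)$.

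The representation is then obtained by a compactness argument. The family $\{\nu_{c}\}_{c>0}$ having total mass bounded by $\varphi(0)$, Banach--Alaoglu (equivalently Helly's selection theorem) provides a sequence $c_{j}\to+\infty$ along which $\nu_{c_{j}}$ converges vaguely to a non-negative measure $\nu$ on $\kintervco{0}{+\infty}$. For fixed $x>0$ the function $u\mapsto e^{-ux}$ belongs to $C_{0}(\kintervco{0}{+\infty})$, so vague convergence gives $\int e^{-ux}\diff\nu_{c_{j}}\to\int e^{-ux}\diff\nu$, while the right-hand side of the identity above tends to $\varphi(x)$ since $c_{j}(1-e^{-x/c_{j}})\to x$ and $\varphi$ is continuous; hence $\varphi(x)=\int_{0}^{+\infty}e^{-ux}\diff\nu(u)$ for all $x>0$. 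Finally, letting $x\downarrow 0$, monotone convergence gives $\int e^{-ux}\diff\nu\uparrow\nu\bigl(\kintervco{0}{+\infty}\bigr)$, and right-continuity of $\varphi$ at $0$ forces $\nu\bigl(\kintervco{0}{+\infty}\bigr)=\varphi(0)<+\infty$; so $\nu$ is finite and the representation extends to $x=0$. (If uniqueness of $\nu$ is wanted, it follows from injectivity of the Laplace transform.)

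The main obstacle is the assertion that $\varphi$ coincides with its Taylor series at distance up to $2c$ from the centre $c$, equivalently that $\int e^{-ux}\diff\nu_{c}(u)$ equals $\varphi\bigl(c(1-e^{-x/c})\bigr)$ exactly rather than only up to a controllable error: this is the genuinely analytic core of Bernstein's theorem, and is where real-analyticity of completely monotone functions on $\kintervoo{0}{+\infty}$ enters. Everything else is the standard machinery of weak-$*$ compactness of bounded measures, with only the mild point that one must check no mass escapes to infinity — here recovered a posteriori from the $x\downarrow 0$ limit. Alternatively one can sidestep the analytic estimate by reducing to the Hausdorff moment problem: for fixed $h>0$ the sequence $\bigl(\varphi(nh)\bigr)_{n\ge 0}$ is a completely monotone sequence (its $k$-th forward difference has the sign of $(-1)^{k}\varphi^{(k)}$), hence a Hausdorff moment sequence, so $\varphi(nh)=\int_{0}^{1}t^{n}\diff\mu_{h}(t)$; the change of variable $t=e^{-hs}$ together with a vague-limit argument along a nested family of grids $h\downarrow 0$, identical in spirit to the one above, recovers the representation on a dense set of $x$ and then, by continuity of $\varphi$, for all $x\ge 0$.
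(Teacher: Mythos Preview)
The paper does not prove this lemma: it is stated as the classical Bernstein--Widder theorem with a direct citation to \cite[Th.~7.11]{Wendland2005}, and no argument is given in the paper itself. So there is no ``paper's own proof'' to compare against.

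Your sketch follows one of the standard routes to Bernstein's theorem and is essentially sound. The ``if'' direction is correct as written. For the ``only if'' direction, your derivative bound $(-1)^{n}\varphi^{(n)}(s)\le n!\,\varphi(0)/s^{n}$ is derived correctly, and the construction of the atomic measures $\nu_{c}$ together with the vague-compactness limit is the classical Bernstein argument. Your use of Banach--Alaoglu on the dual of $C_{0}([0,+\infty))$ is legitimate: since the $\nu_{c}$ have uniformly bounded total mass, weak-$*$ sequential compactness applies, and testing against $u\mapsto e^{-ux}\in C_{0}$ is justified.

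You correctly flag the one genuinely delicate step, namely that the Taylor series of $\varphi$ centred at $c$ actually equals $\varphi$ on $(0,c)$ (not merely that it converges there). Your Lagrange-remainder bound, combined with the derivative estimate at the intermediate point $\xi$, only directly yields convergence to $\varphi$ on $(c/2,2c)$; extending to all of $(0,c)$ does require the ``chaining'' you allude to (iterating the estimate with smaller centres and invoking the identity theorem for real-analytic functions). Since you explicitly identify this as the analytic core and also sketch the cleaner bypass via the Hausdorff moment problem, the proposal is acceptable as a proof sketch. If you want a self-contained write-up, the Hausdorff route (your alternative) is the one that avoids the most technical baggage.
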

\noindent 
We note for example that the Laplace kernel (see \Cref{ex:CMF}) is a CMF with representation measure equal to $\nu=\spike{\lambda}$ with $\lambda>0$.

In the sequel we will consider the following class of kernels and dictionaries whose definitions rely on the concept of CMF:
\begin{definition}[CMF kernel and dictionary]
	\label{def:intro:cmf_kernel}
	The class of {\em CMF kernels} in dimension $\dimParam\geq1$, denoted $\cmfkernelclassP[\dimParam]$, consists of all kernels $\kernelPaper: \kR^{\dimParam}\times\kR^{\dimParam} \to \kR*+$ such that 
	\begin{equation}
		\label{eq:def:cmf_kernel}
		\kernelPaper{}(\param,\param') 
		= \cmf{}\kparen{ \kvvbar{\param -  \param'}_p^p } \quad \forall \param,\param'\in\kR^{\dimParam}
	\end{equation}
	where $\cmf{}$ is a CMF verifying $\cmf(0)=1$, $\lim_{x\to+\infty}\cmf(x)=0$ and $0 < p \leq 1$.\\
	\noindent
	By extension, we say that $\dico$ is a CMF dictionary in dimension $\dimParam\geq 1$ if its induced kernel belongs to $\cmfkernelclassP[\dimParam].$
\end{definition}
\noindent
We note that the constraint $\cmf(0)=1$ in the previous definition ensures that the \say{unit-norm} hypothesis \eqref{eq:kernel_Gconditions:unitNorm} is satisfied.
We also mention that the constraint $\lim_{x\to+\infty}\cmf(x)=0$ is necessary so that CMF kernels satisfy the vanishing property~\eqref{eq:contribution:evanecent}.

At this point, a legitimate question is whether kernel \eqref{eq:def:cmf_kernel} can be induced by some dictionary $\dico$? 
The answer is positive and is a corollary of the following lemma: 
\begin{lemma}\label{lemma:pd_CMF}
	Let $\cmf : \kR+ \to \kR$ be a CMF such that $\cmf(0)=1$, $\lim_{x \to \infty}\cmf(x)=0$ and $0<p\leq 1$. Then, any function of the form
	\begin{equation}
		\kfuncdef{\rho}{\kR^\dimParam}{\kR}[\boldsymbol{\omega}][\cmf{}(\|\boldsymbol{\omega}\|_p^p)]
	\end{equation}
	is positive definite. 
\end{lemma}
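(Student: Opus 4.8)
The plan is to combine the Bernstein--Widder representation of \Cref{lemma:geometric:cmf:integral_representation} with a single one-dimensional positive definiteness fact, exploiting that the set of positive definite functions on $\kR^{\dimParam}$ is a convex cone stable under pointwise products and under integration against non-negative finite Borel measures. Since $\varphi$ is completely monotone on $\kintervco{0}{+\infty}$ with $\varphi(0)=1$ and $\lim_{x\to+\infty}\varphi(x)=0$, \Cref{lemma:geometric:cmf:integral_representation} provides a non-negative finite measure $\nu$ on $\kintervco{0}{+\infty}$ with $\varphi(x)=\int_{0}^{+\infty} e^{-ux}\,\diff\nu(u)$. Evaluating at $x=\|\boldsymbol{\omega}\|_p^p$ gives
\begin{equation}
	\rho(\boldsymbol{\omega})=\int_{0}^{+\infty} e^{-u\,\|\boldsymbol{\omega}\|_p^p}\,\diff\nu(u),
\end{equation}
so that, by linearity of the integral (all relevant sums being finite, and $\nu$ finite), it suffices to show that $\boldsymbol{\omega}\mapsto e^{-u\,\|\boldsymbol{\omega}\|_p^p}$ is positive definite for every fixed $u\ge 0$; the case $u=0$ is the constant function $1$, which is trivially positive definite.

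\textbf{Reduction to dimension one.} Using $\|\boldsymbol{\omega}\|_p^p=\sum_{i=1}^{\dimParam}|\omega_i|^p$, factor
\begin{equation}
	e^{-u\,\|\boldsymbol{\omega}\|_p^p}=\prod_{i=1}^{\dimParam} e^{-u\,|\omega_i|^p}.
\end{equation}
Each factor depends only on the coordinate $\omega_i$, hence if $t\mapsto e^{-u|t|^p}$ is positive definite on $\kR$ then $\boldsymbol{\omega}\mapsto e^{-u|\omega_i|^p}$ is positive definite on $\kR^{\dimParam}$, and a finite product of positive definite functions is positive definite (the Hadamard product of positive semidefinite Gram matrices is positive semidefinite, by the Schur product theorem). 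Thus the whole statement reduces to the one-dimensional claim: for $0<p\le1$ and $u\ge0$, the function $t\mapsto e^{-u|t|^p}$ is positive definite on $\kR$.

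\textbf{The one-dimensional claim.} I would prove it by invoking complete monotonicity a second time. For $0<p\le1$ the map $x\mapsto x^p$ is a Bernstein function on $\kintervco{0}{+\infty}$: its derivative $x\mapsto p\,x^{p-1}$ is completely monotone, since $\tfrac{\diff^{n}}{\diff x^{n}}x^{p-1}=(p-1)(p-2)\cdots(p-n)\,x^{p-1-n}$ has sign $(-1)^{n}$. Composing the completely monotone function $x\mapsto e^{-ux}$ with the Bernstein function $x\mapsto x^p$ yields a completely monotone function $x\mapsto e^{-ux^p}$ on $\kintervco{0}{+\infty}$; applying \Cref{lemma:geometric:cmf:integral_representation} once more gives a non-negative finite measure $\mu_u$ with $e^{-ux^p}=\int_{0}^{+\infty}e^{-sx}\,\diff\mu_u(s)$, whence
\begin{equation}
	e^{-u|t|^p}=\int_{0}^{+\infty}e^{-s|t|}\,\diff\mu_u(s).
\end{equation}
Now $t\mapsto e^{-s|t|}$ is positive definite on $\kR$: for $s>0$ it is the Fourier transform of the Cauchy density $x\mapsto \tfrac{s}{\pi(s^2+x^2)}$, hence a characteristic function, and for $s=0$ it is the constant $1$. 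Therefore the right-hand side is a mixture of positive definite functions and is itself positive definite, which closes the argument. (An alternative for this step is Schoenberg's theorem that $t\mapsto|t|^p$ is negative definite for $0<p\le2$, or the existence of symmetric $p$-stable characteristic functions.)

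\textbf{Main obstacle.} The only genuinely non-routine ingredient is this last step, the positive definiteness of $e^{-u|t|^p}$ for fractional exponents $p\in(0,1)$; the rest is two applications of Bernstein--Widder together with standard stability properties of the cone of positive definite functions. One should still check the harmless measure-theoretic points (the integrals defining $\rho$ and the mixtures are over finite measures with bounded integrands, and the interchanges with the finite quadratic forms are mere linearity), and note that $p=1$ is the trivial boundary case where $x\mapsto x^p$ is the identity.
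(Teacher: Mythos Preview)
Your route to positive \emph{semi}-definiteness is clean and, for that part, genuinely different from the paper's: instead of invoking P\'olya's criterion to see that $t\mapsto e^{-u|t|^p}$ is a characteristic function, you compose the CMF $x\mapsto e^{-ux}$ with the Bernstein function $x\mapsto x^p$ and apply Bernstein--Widder a second time to write $e^{-u|t|^p}$ as a mixture of Laplace kernels. Both approaches then factor over coordinates via Schur products and integrate against $\nu$. Your variant is arguably more self-contained (no external P\'olya theorem), at the cost of using the Bernstein-function composition rule.

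However, there is a real gap: the lemma asserts \emph{strict} positive definiteness in the sense of \cite[Def.~6.1]{Wendland2005}, i.e.\ $\coeffv^{H}\Gtrue\coeffv>0$ for every nonzero $\coeffv$ and every choice of pairwise distinct nodes. This strictness is what powers \Cref{cor:existence CMF dictionary} (linear independence of any finite family of atoms), and the paper devotes roughly the second half of its proof to it. Your argument stops at ``mixture of positive definite functions is positive definite'', which, with the characteristic-function meaning you are using, only yields $\coeffv^{H}\Gtrue\coeffv\ge 0$. Two points make strictness non-trivial here: first, the measure $\nu$ may put mass near $0$, and for $u=0$ the integrand is the all-ones matrix, which is only positive semidefinite; the paper handles this via \Cref{lemma:strictCMF}, which shows $\nu(\{0\})=0$ under the hypothesis $\lim_{x\to\infty}\cmf(x)=0$. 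Second, even for fixed $u>0$, your Schur-product factorisation over coordinates does not immediately give a strictly positive definite matrix on $\kR^{\dimParam}$, because distinct points in $\kR^{\dimParam}$ can share coordinates, so each one-dimensional factor Gram matrix may be singular. The paper circumvents both issues by writing the quadratic form as $\int_{0}^{+\infty}\bbE_{\bfZ_1}\big[|\sum_\ell \coeffv[\ell]\,e^{\csti u^{1/p}\param_\ell^{\top}\bfZ_1}|^2\big]\diff\nu(u)$, then arguing that if this vanishes one can find $u_0>0$ with the inner expectation equal to zero, and finally using that $\bfZ_1$ has a density and the integrand is a trigonometric polynomial in a direction $\bfy$ chosen so that the $\param_\ell^{\top}\bfy$ are pairwise distinct. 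To complete your proof you would need either to reproduce an argument of this type, or to show directly that for $\nu$-a.e.\ $u>0$ the $\dimParam$-dimensional kernel $e^{-u\|\cdot\|_p^p}$ is strictly positive definite and then combine this with $\nu(\{0\})=0$.
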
 
\noindent
A proof of this result is provided in \Cref{sec:app:cmf_related}.
We refer the reader to \cite[Def.~6.1]{Wendland2005} for a precise definition of positive (semi-) definite functions and \cite[Th.~6.2]{Wendland2005} for a review of some of their basic properties. 
In particular, the positive definite nature of $\cmf{}(\|\cdot\|_p^p)$ used in conjunction with standard results in the theory of ``reproducing kernel Hilbert spaces'' (see \eg \cite[Th.~3.11]{Taylor2004}) implies the following corollary:
\begin{corollary}[Existence of CMF dictionaries] \label{cor:existence CMF dictionary}
	For any $\kernelPaper\in \cmfkernelclassP[\dimParam]$, there exists some Hilbert space $\spaceObs$ and some (continuous) function $\atome:\kR^\dimParam \to \spaceObs$ such that \eqref{eq:contribution:hyp_scalar_product} holds.~Moreover, any finite collection of distinct elements from $\dico=\{\atome(\param):\param\in\kR^\dimParam\}$ is linearly independent.
\end{corollary}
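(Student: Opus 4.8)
The plan is to deduce the corollary directly from Lemma \ref{lemma:pd_CMF} together with the standard RKHS machinery. First I would fix $\kernelPaper\in\cmfkernelclassP[\dimParam]$, so that $\kernelPaper(\param,\param') = \cmf(\|\param-\param'\|_p^p)$ for some CMF $\cmf$ with $\cmf(0)=1$, $\lim_{x\to\infty}\cmf(x)=0$ and $0<p\le 1$. Lemma \ref{lemma:pd_CMF} tells us that the function $\rho(\boldsymbol{\omega}) = \cmf(\|\boldsymbol{\omega}\|_p^p)$ is positive definite on $\kR^\dimParam$, hence $(\param,\param')\mapsto\kernelPaper(\param,\param') = \rho(\param-\param')$ is a (translation-invariant) positive definite kernel in the sense of \cite[Def.~6.1]{Wendland2005}. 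By the Moore–Aronszajn construction (or \cite[Th.~3.11]{Taylor2004}), there is a unique reproducing kernel Hilbert space $\spaceObs$ of functions on $\kR^\dimParam$ with reproducing kernel $\kernelPaper$. The natural candidate for the atom function is the canonical feature map $\atome(\param) \triangleq \kernelPaper(\param,\cdot)\in\spaceObs$.

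The next step is to verify the three claims. For \eqref{eq:contribution:hyp_scalar_product}: by the reproducing property, $\scalprod{\atome(\param)}{\atome(\param')}_{\spaceObs} = \scalprod{\kernelPaper(\param,\cdot)}{\kernelPaper(\param',\cdot)}_{\spaceObs} = \kernelPaper(\param,\param')$, which is exactly the required identity. For continuity of $\atome$: I would compute $\nor{\atome(\param)-\atome(\param')}^2 = \kernelPaper(\param,\param)-2\kernelPaper(\param,\param')+\kernelPaper(\param',\param') = 2\bigl(1-\cmf(\|\param-\param'\|_p^p)\bigr)$, which tends to $0$ as $\param'\to\param$ because $\cmf$ is right-continuous at $0$ with $\cmf(0)=1$; hence $\atome$ is continuous. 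For linear independence of any finite collection of distinct atoms $\atome(\paramt_1),\dots,\atome(\paramt_m)$: suppose $\sum_{\ell} c_\ell\,\atome(\paramt_\ell) = {\bf 0}_{\spaceObs}$; taking the squared norm gives $\sum_{\ell,\ell'} c_\ell c_{\ell'}\,\kernelPaper(\paramt_\ell,\paramt_{\ell'}) = 0$, i.e. $\boldsymbol{c}^{\!\top}\bfG\boldsymbol{c}=0$ where $\bfG$ is the Gram matrix. Since $\rho$ is positive definite (not merely positive semidefinite), $\bfG$ is positive definite for pairwise distinct $\paramt_\ell$, forcing $\boldsymbol{c}=\boldsymbol{0}$.

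I expect the only genuine subtlety to be the strictness: positive definiteness of $\rho$ in the sense of \cite[Def.~6.1]{Wendland2005} must give a \emph{strictly} positive definite Gram matrix on distinct points, which is precisely what \cite[Th.~6.2]{Wendland2005} provides and what Lemma \ref{lemma:pd_CMF} asserts (as opposed to mere positive semidefiniteness). Everything else is a routine application of reproducing-kernel theory. I would therefore keep the proof short: invoke Lemma \ref{lemma:pd_CMF} for positive definiteness, cite \cite[Th.~3.11]{Taylor2004} for the existence of $(\spaceObs,\atome)$ realizing \eqref{eq:contribution:hyp_scalar_product}, and then dispatch the continuity and linear-independence claims with the two short norm computations above.
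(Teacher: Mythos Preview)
Your proposal is correct and follows essentially the same approach as the paper: the paper simply states that the corollary follows from Lemma~\ref{lemma:pd_CMF} combined with standard RKHS theory (citing \cite[Th.~3.11]{Taylor2004}), and you have spelled out precisely those details. The only difference is that you supply the explicit verifications (the feature map, the continuity computation, and the Gram-matrix argument for linear independence) that the paper leaves implicit.
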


\noindent
We see from the last part of the corollary that CMF dictionaries are necessarily defined in infinite-dimensional Hilbert spaces $\spaceObs$. If not, any collection of $\dim(\spaceObs)+1$ elements of $\dico$ would be linearly dependent which is in contradiction with \Cref{cor:existence CMF dictionary}. 
The next example exhibits a family of atoms in $\spaceObs = L^2(\kR)$ which is a CMF dictionary in $\bbR$.
\begin{example}
	Let $\paramSet=\kR$ and consider the dictionary $\dico$ defined by 
	\begin{equation}
		\label{eq:intro:1D:atom_decayingPulse}
		\begin{split}
			\atome : \kR \;\longrightarrow\;& L^2(\kR) \\
			\param \;\longmapsto\;& 
					f_\param(t) = \sqrt{2\lambda}\,\cste^{-\lambda(t - \param)} {\bf1}_{\kbrace{t \geq \param}}(t)
		\end{split}
	\end{equation}
	for some $\lambda>0$, where ${\bf1}_{\kbrace{t \geq \param}}$ is the \say{indicator} function which is equal to $1$ if $t \geq \param$ and 0 otherwise.
	Straightforward calculations both show that $\Vert \atome({\param}) \Vert = 1$ for any $\param$ and the inner product in $\spaceObs = L^2(\kR)$ between two atoms writes 
	$\scalprod{\atome\big(\param\big)}{\atome\big(\param'\big)} = \cste^{-\lambda |{\param-\param'}|} $.
	The latter function corresponds to the so-called \say{Laplace kernel}. 
	This kernel is an element of $\cmfkernelclassP[1]$ according to \Cref{ex:CMF}.
\end{example}

We conclude this section by introducing a particular CMF kernel which will be used in the statement of some of our results in \Cref{subsec:contrib:ksparserecov}:

\begin{definition}[Generalized Laplace kernel and dictionary]
	\label{def:contrib:laplace_dico}
	The class of \emph{Generalized Laplace kernels} in dimension $\dimParam{}$, denoted $\laplacekernelclassP[\dimParam]$,  consists of all kernels $\kernelPaper: \kR^{\dimParam}\times\kR^{\dimParam} \to \kR*+$ such that 
	\begin{equation}\label{eq:laplacekernel}
		\kernelPaper{}(\param,\param') 
		=
		\cste^{- \lambda \|\param -  \param'\|_p^p } \quad \forall \param,\param'\in\kR^{\dimParam}
	\end{equation}
	where $\lambda>0$
	and $0 < p \leq 1$. \\ 
	\noindent
	By extension, a \emph{Generalized Laplace dictionary} in dimension $\dimParam\geq1$ is a collection of atoms $\dico=\kset{\atome(\param)}{\param\in\kR^\dimParam}$ whose induced kernel belongs to  $\laplacekernelclassP[\dimParam]$.
\end{definition}
\noindent
One immediately sees that $\laplacekernelclassP[\dimParam]\subset\cmfkernelclassP[\dimParam]$ since the function $t\mapsto \cste^{-\lambda t}$ defined on $\kR+$ is a CMF (see \Cref{ex:CMF}).


\subsection{Recovery conditions in {CMF} dictionaries}
	\label{subsec:contrib:ksparserecov}

In this section, we provide recovery results for OMP in CMF dictionaries. The proofs of our results are based on the sufficient conditions presented in \Cref{th:MainAbstractTheorem} and are reported in Appendix~\ref{sec:app:proofs-CMF}.
A first surprising result holds when $\paramSet=\kR$:
\begin{theorem}
	\label{th:contrib:cmf_uniformRecov_1D}
	Assume $\dico$ is a CMF dictionary in dimension $1$.
	Then, \comp{} achieves exact $\card(\supportt)$-step recovery of each finite support $\supportt\subset\kR$.
\end{theorem}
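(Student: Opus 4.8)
The plan is to reduce everything to \Cref{th:MainAbstractTheorem}: I will show that a one‑dimensional CMF dictionary $\dico$ is admissible (\Cref{hyp:geometric:0}) and that every finite $\supportt\subset\kR$ is admissible with respect to $\dico$ (\Cref{def:admissible_support}). Taking $\support=\supportt$ in the conclusion of \Cref{th:MainAbstractTheorem} then gives exact $\card(\supportt)$‑step recovery of $\supportt$, and since $\supportt$ is an arbitrary finite subset of $\kR$, the statement follows. The whole argument rests on a handful of analytic facts about the underlying CMF $\cmf{}$ (recall $\cmf(0)=1$, $\lim_{x\to+\infty}\cmf(x)=0$, $0<p\leq1$). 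By the Bernstein--Widder representation (\Cref{lemma:geometric:cmf:integral_representation}), $\cmf(x)=\int_{0}^{+\infty}\cste^{-\varIntCmf x}\diff\nu(\varIntCmf)$ with $\nu$ a probability measure having no mass at $0$ (the latter because $\nu(\{0\})=\lim_{x\to+\infty}\cmf(x)=0$); hence $\cmf{}$ is strictly positive and strictly decreasing on $\kintervco{0}{+\infty}$. In particular $0\le\kernelPaper(\param,\param')<1$ for $\param\ne\param'$, which together with $\cmf(0)=1$ and $\lim_{x\to+\infty}\cmf(x)=0$ yields the unit‑norm, continuity and vanishing properties, so $\dico$ is admissible. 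A Cauchy--Schwarz estimate on the integral shows $\cmf{}$ is log‑convex on $\kintervco{0}{+\infty}$; consequently $x\mapsto\cmf(x^p)$ is strictly decreasing and log‑convex (hence convex) on $\kintervco{0}{+\infty}$, being the composition of the convex non‑increasing map $\log\cmf{}$ with the concave increasing map $x\mapsto x^p$. Finally, super‑additivity of convex functions vanishing at $0$, the subadditivity of $t\mapsto t^p$ and the monotonicity of $\cmf{}$ combine into a ``path inequality'' $\kernelPaper(a,c)\geq\kernelPaper(a,b)\,\kernelPaper(b,c)$ valid for all $a,b,c\in\kR$.

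Now fix $\supportt=\{\paramt_1<\dots<\paramt_\nbAtome\}$, a nonempty $T\subseteq\intervint{1}{\nbAtome}$ and coefficients $\coeff_\ell>0$ ($\ell\in T$) with $\sum_{\ell\in T}\coeff_\ell<1$, and set $\condSumPositivKernelSymb(\param)=\sum_{\ell\in T}\coeff_\ell\,\kernelPaper(\param,\paramt_\ell)$. For item~\ref{hyp:geometric:1}: each summand $\param\mapsto\cmf(|\param-\paramt_\ell|^p)$ is strictly increasing on $(-\infty,\paramt_\ell]$, strictly decreasing on $[\paramt_\ell,+\infty)$, and convex and not affine on any nondegenerate subinterval of either half‑line; hence $\condSumPositivKernelSymb$ is strictly monotone outside $[\min_{\ell\in T}\paramt_\ell,\max_{\ell\in T}\paramt_\ell]$ and is convex and not affine on any subinterval of each segment $[\paramt_j,\paramt_{j'}]$ with $j<j'$ consecutive in $T$; since $\condSumPositivKernelSymb$ is continuous, vanishes at infinity and is positive at the $\paramt_\ell$, a maximizer exists, and since a convex function that is not affine on any subinterval attains its maximum over a segment only at the endpoints, the set of global maximizers of $\condSumPositivKernelSymb$ lies in $\{\paramt_\ell\}_{\ell\in T}$. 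For item~\ref{hyp:geometric:2}: write $h=\condSumPositivKernelSymb-\kernelPaper(\cdot,\paramt_{\ell_0})$ and assume $h(\paramt_{\ell'})\leq0$ for every $\ell'\in T$; note that $h(\paramt_{\ell_0})=\condSumPositivKernelSymb(\paramt_{\ell_0})-1\leq\sum_{\ell\in T}\coeff_\ell-1<0$ and $h(\param)\to0$ as $\param\to\pm\infty$. I then split $\kR$ at the finitely many points $\{\paramt_\ell\}_{\ell\in T}\cup\{\paramt_{\ell_0}\}$ and bound $h$ on each resulting interval: on the two unbounded rays and on the segment joining $\paramt_{\ell_0}$ to the nearest point of $T$, $h<0$ follows from $\sum_{\ell\in T}\coeff_\ell<1$ and the monotonicity of $\cmf{}$ alone; on a ray where $\paramt_{\ell_0}$ lies beyond all of $T$, the monotonicity of $x\mapsto\log\cmf((x+a)^p)-\log\cmf((x+b)^p)$ (convexity of $\log\cmf(\cdot^p)$) shows that $\param\mapsto\condSumPositivKernelSymb(\param)/\kernelPaper(\param,\paramt_{\ell_0})$ is maximized at the ray's endpoint — a point of $T$, where the hypothesis gives $\le1$ — so $h\leq0$ there; on a segment $[\paramt_j,\paramt_{j'}]$ between two consecutive points of $T$ not containing $\paramt_{\ell_0}$, one combines the convexity of $\condSumPositivKernelSymb$ there, the path inequality, and the log‑convexity of $\kernelPaper(\cdot,\paramt_{\ell_0})$ to control the contributions of the ``far'' atoms and reduce to the (nonpositive) values $h(\paramt_j)$, $h(\paramt_{j'})$. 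Collecting the cases yields $h\leq0$ on all of $\kR$, so $\supportt$ is admissible, and \Cref{th:MainAbstractTheorem} concludes the proof.

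The main obstacle is the last case of item~\ref{hyp:geometric:2}, namely bounding $h$ on a segment between two consecutive points of $T$ when $\paramt_{\ell_0}$ lies outside that segment: there $h$ is a difference of two convex functions, so convexity alone does not force its maximum to the endpoints, and one genuinely needs the finer log‑convexity / ratio‑monotonicity estimates for $\cmf{}$ — together with the hypothesis values at the two bracketing points of $T$ — to close the argument. Everything else (admissibility of the kernel, item~\ref{hyp:geometric:1}, and the ``one‑sided'' cases of item~\ref{hyp:geometric:2}) follows routinely from positivity, strict monotonicity and log‑convexity of $\cmf{}$ and the subadditivity of $t\mapsto t^p$.
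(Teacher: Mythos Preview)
Your reduction to \Cref{th:MainAbstractTheorem} matches the paper, and your treatment of kernel admissibility and of item~\ref{hyp:geometric:1} of \Cref{def:admissible_support} is correct, though by a different route: the paper computes $\psi^{(2)}$ explicitly from the Bernstein--Widder integral and shows it is strictly positive off the support, whereas you infer strict convexity of each summand from the log-convexity of $\cmf$ composed with $x\mapsto x^p$. Both arguments yield that $\psi$ is strictly convex on every segment between consecutive points of $\{\paramt_\ell\}_{\ell\in T}$, so its global maximizers must lie in that set.

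The genuine gap is item~\ref{hyp:geometric:2}. Your interval-by-interval scheme with log-convexity and the path inequality $\kappa(a,c)\ge\kappa(a,b)\kappa(b,c)$ disposes of the singleton case $|T|=1$ (this is exactly \Cref{lemma:inequality_cmf_kernel}), but the cases you list for general $T$ are neither complete nor closed. First, your claim that the two unbounded rays are handled ``by monotonicity of $\cmf$ alone'' fails when $\paramt_{\ell_0}$ sits strictly between two elements of $T$: on the ray beyond the rightmost element of $T$, some term $\kappa(\cdot,\paramt_\ell)$ with $\paramt_\ell>\paramt_{\ell_0}$ dominates $\kappa(\cdot,\paramt_{\ell_0})$, and monotonicity gives nothing. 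Second, and more seriously, in your acknowledged ``main obstacle'' --- a segment $[\paramt_j,\paramt_{j'}]$ between consecutive points of $T$ with $\paramt_{\ell_0}$ outside --- the ratio $\psi/\kappa(\cdot,\paramt_{\ell_0})$ splits into terms with opposite monotonicities (take $T=\{1,2\}$, $\paramt_1<\paramt_2<\paramt_{\ell_0}$: your own log-convexity argument shows $\kappa(\cdot,\paramt_1)/\kappa(\cdot,\paramt_{\ell_0})$ is decreasing while $\kappa(\cdot,\paramt_2)/\kappa(\cdot,\paramt_{\ell_0})$ is increasing), so neither convexity nor ratio-monotonicity forces the maximum of $h$ to the endpoints. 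The sentence ``one combines the convexity of $\psi$, the path inequality, and the log-convexity of $\kappa(\cdot,\paramt_{\ell_0})$'' is not an argument; as written, the proof is incomplete precisely at the point you flag.

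The paper takes a quite different route for item~\ref{hyp:geometric:2}. It argues by contradiction: assuming $\phi=\psi-\kappa(\cdot,\paramt_{\ell_0})$ has a positive global maximum at some $\theta_m\notin\{\paramt_\ell\}_{\ell\in T\cup\{\ell_0\}}$, it compares $\phi(\theta_m)$ with $\phi^{(2)}(\theta_m)$. Writing both via the Bernstein--Widder integral, the comparison reduces to an inequality of the form $\int f(u)P(u)\diff\nu(u)\ge f(u_0)\int P(u)\diff\nu(u)$ for a specific exponential polynomial $P$ with at most two sign changes; this is supplied by \Cref{corollary:function_analysis:polynome_one_sign_change} (a consequence of Laguerre's rule of signs), and yields $\phi^{(2)}(\theta_m)>C\,\phi(\theta_m)$ for some $C>0$, contradicting second-order optimality. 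The case split is on the position of $|\paramt_{\ell_0}-\theta_m|$ relative to the ordered distances $|\paramt_\ell-\theta_m|$, $\ell\in T$, not on the interval containing $\theta_m$; this is what lets a single maximizer-based argument cover all configurations. If you wish to salvage your convexity-based approach, you would need a new ingredient beyond log-convexity and the path inequality.
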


In essence, \Cref{th:contrib:cmf_uniformRecov_1D} identifies a class of dictionaries for which exact $\nbAtome$-step recovery is possible for \emph{any} support $\supportt$ of \emph{any} finite size $\nbAtome$.
We note that the notions of exact recovery of a support $\supportt$ defined in \Cref{subsec:recovery} do not involve any sign constraint on the coefficients $\{\coeff_\ell\}_{\ell=1}^\nbAtome$ used to generate the observation vector $\Vobs$. 
As a comparison, the results ensuring the success of continuous BP/BLasso with no sign constraints on $\{\coeff_\ell\}_{\ell=1}^\nbAtome$ require some ``minimum separation condition'' between parameters $\{\paramt_\ell\}_{\ell=1}^{\nbAtome}$ to hold (see~\eqref{eq:state_of_the_art:separation_BP} and related discussion).
Conversely, the recovery results for BLasso obtained in \cite{castro2012} without separation condition require weighting coefficients $\{\coeff_\ell\}_{\ell=1}^\nbAtome$ to be positive. 
The novelty of \Cref{th:contrib:cmf_uniformRecov_1D} is thus a separation-free recovery result for any signed finite linear combination of atoms. 
The strength of the result obtained in \Cref{th:contrib:cmf_uniformRecov_1D} comes however at a price: it applies to a specific family of dictionaries, namely CMF dictionaries.
In particular, as mentioned in \Cref{sec:CMF_dico}, the space $\spaceObs$ in which CMF dictionaries live is necessarily infinite-dimensional, and the corresponding kernels  exhibit a discontinuity in all their partial derivatives at $\param=\param'\in\paramSet$. 
Another price to pay is that the recovery guarantees are for OMP, an algorithm explicitly involving the search for the global maximum of an optimization problem, cf \Cref{line:algo:continuousOMP:findtheta} of \Cref{alg:continuousOMP}.

In higher dimension $\dimParam>1$, the ``universal'' exact recovery result stated in \Cref{th:contrib:cmf_uniformRecov_1D} no longer holds, as shown in the next example.
More precisely, if $\dimParam\geq3$, we emphasize that there always exists a configuration of parameters $\{\paramt_\ell\}_{\ell=1}^\nbAtome$ such that \comp{} fails at the first iteration for some $\{\coeff_\ell\}_{\ell=1}^\nbAtome\subset\kR*$:
\begin{example}
	\label{ex:intro:erc_required_dim2}
	Let $\dimParam \geq 3$ and $3 \leq \nbAtome\leq \dimParam$. Consider $\supportt\triangleq\{\paramt_\ell\}_{\ell=1}^{\nbAtome} \subset \kR^{\dimParam}$ and $\Delta>0$ such that
	\begin{align}\nonumber
		\begin{array}{rll}
		\|\paramt_\ell - \paramt_{\ell'}\|_p^p &= 2\Delta^p & \forall \ell\neq {\ell'}\\
		\|\paramt_\ell - {\bf0}_{\dimParam}\|_p^p &= \Delta^p & \forall \ell.
		\end{array}
	\end{align}
	\noindent
	Let $\atome: \kR^{\dimParam}\mapsto \spaceObs$ define a CMF dictionary in $\kR^{\dimParam}$ with kernel $\kernelPaper=\cmf\kparen{\|{\cdot-\cdot}\|_p^p}$.
	We next show that, if $\Delta$ is sufficiently small, there always exists a linear combination of $\{\atome(\paramt_\ell)\}_{\ell=1}^\nbAtome$ such that OMP selects a parameter \emph{not} in $\supportt$ at the first iteration. 

	Let us consider $\Vobs=\sum_{\ell=1}^{\nbAtome}\coeff_\ell\atome(\paramt_\ell)$ and assume that all coefficients $\coeff_\ell$ are equal. 
	We then have
	\begin{equation}
		\label{eq:contrib:ratio_scalprod}
		\frac{
			\scalprod{\atome({\bf0}_{\dimParam})}{\Vobs} 
		}{
			\scalprod{\atome(\paramt_\ell)}{ \Vobs }
		}
		=
		\frac{
			\nbAtome{} \cmf(\Delta^p)
		}{
			1 + (\nbAtome{} - 1) \cmf(2\Delta^p)
		} 
		.
	\end{equation}
	Then, $\param={\bf0}_{\dimParam}$ will be preferred to all \say{ground-truth} parameters $\paramt_{\ell}$ at the first iteration of OMP as soon as the quantity in \eqref{eq:contrib:ratio_scalprod} is larger than 1, or, equivalently,
	\begin{equation}
		\label{eq:contrib:ratio_scalprod2}
		(\nbAtome{} - 1) \cmf(2\Delta^p) - \nbAtome{}\cmf(\Delta^p) + 1 < 0.
	\end{equation}
	Let us show that~\eqref{eq:contrib:ratio_scalprod2} holds whenever $\Delta^p$ is \say{sufficiently small}.
	For simplicity, consider first the case where $\cmf(t) = \cste^{-\lambda t}$ with $\lambda>0$.
	Condition~\eqref{eq:contrib:ratio_scalprod2} writes
	\begin{equation}
		\label{eq:contrib:ratio_scalprod3}
		(\nbAtome-1)x^2 - \nbAtome x + 1 < 0
	\end{equation}
	with $x=\cmf(\Delta^p)=\cste^{-\lambda\Delta^p}$. 
	As $k \geq 3$, the left-hand side of~\eqref{eq:contrib:ratio_scalprod3} is a second order polynomial with two distinct roots, namely  $(\nbAtome-1)^{-1}$ and $1$.
	Therefore, OMP prefers ${\bf0}_{\dimParam}$ as soon as $\kinv{(\nbAtome-1)}<x<1$ or, equivalently, $\Delta^p < \lambda^{-1}\log(\nbAtome-1)$.
	The latter condition implies a \emph{necessary separation} condition such that \comp{} does not fail at the first iteration. 
	We note that it is possible to draw similar conclusions whenever $\cmf$ is a CMF function right-differentiable at zero. 
	The proof of this result requires extra work that is detailed in Appendix~\ref{sec:other:calcul_separation}.
\end{example}

Although a ``universal'' $\nbAtome$-step recovery result such as \Cref{th:contrib:cmf_uniformRecov_1D} no longer holds in CMF dictionaries when $\dimParam>1$, it is nevertheless possible to show that some form of exact recovery of a support $\supportt$ is possible under an additional condition on the kernel induced by the CMF dictionary (see \Cref{th:contrib:dimD:grid_recovery}). This additional condition is referred to as ``axis admissibility'' hereafter and is encapsulated in \Cref{def:contrib:grille_admissibility} below. 
Before moving on to this definition, it is necessary to introduce the notions of ``Cartesian grid'' and ``set augmenter operator'':
\begin{definition}[Cartesian grid]
	\label{def:contrib:cartesian_grid}
	A finite set $\calG\subset \kR^{\dimParam}$ is a \emph{Cartesian grid} in dimension $\dimParam\geq1$ if there exists $\dimParam$ one-dimensional finite sets $\{\support_d\}_{d=1}^\dimParam$ such that
	\begin{equation}
		\calG = \prod_{d=1}^\dimParam \support_d,
	\end{equation}
	where $\prod$ denotes the Cartesian product.
\end{definition}
\noindent
We moreover define the following \say{set augmenter} operator that, given a finite set $\support\subset\kR^\dimParam$, returns the smallest Cartesian grid containing $\support$:
\begin{equation}
	\label{eq:erc_ell1:defAtomeVirtuel}
	\SetAug(\support) \triangleq \prod_{d=1}^{\dimParam} \kset{ \param\element{d} }{ \param\in\support \vphantom{\tfrac{1}{\Vert_p}} }
	.
\end{equation}
It is quite straightforward to see that $\support\subseteq \SetAug(\support)$ for any finite set $\support\subset\kR^\dimParam$ and that the operator $\SetAug$ is idempotent.
We illustrate the definition of $\SetAug(\support)$ in Fig.~\ref{fig:laplaceKernel:defVirtualPonts} in dimension $\dimParam=2$ for $\support = \kbrace{\param_1, \param_2, \param_3}$.\\

\begin{figure}
	\centering
	\includegraphics[width=0.9\columnwidth]{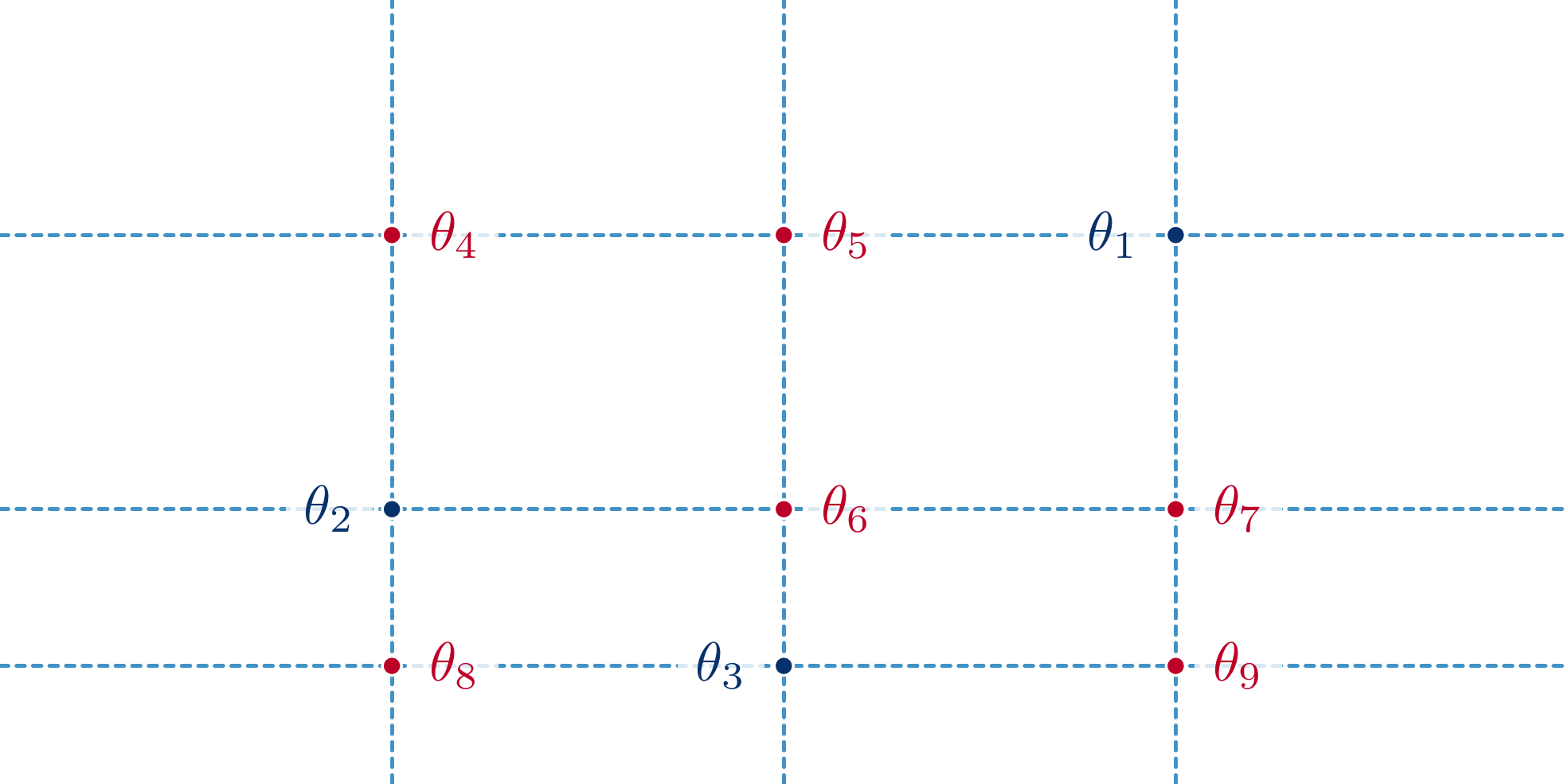}
	\caption{
		\label{fig:laplaceKernel:defVirtualPonts}
		Illustration in dimension $\dimParam=2$ with $\nbAtome=3$ of the definition of the set augmenter $\SetAug$ defined in~\eqref{eq:erc_ell1:defAtomeVirtuel}.
		The blue points, denoted $\param_\ell$ for $\ell\in\{1,2,3\}$, form the support $\support$.
		The red points, denoted $\param_\ell$, $\ell\in\intervint{4}{9}$ represent the elements of $\SetAug(\support)\setminus\support$.
	}
\end{figure}

\noindent
We are now ready to introduce the notion of ``axis admissibility'':

\begin{definition}[Axis admissibility with respect to a kernel]
	\label{def:contrib:grille_admissibility}
	\noindent
	A Cartesian grid $\calG = \prod_{d=1}^\dimParam\support_{d}=\{\param_\ell\}_{\ell=1}^{\card(\calG)}$ is said to be axis admissible with respect to a kernel $\kernelPaper$ if and only if 
	$\forall d\in\intervint{1}{\dimParam}$,  $\forall \param \in \kR^\dimParam$ with $\param[d]=0$ and $\forall \{\coeff_\ell\}_{\ell=1}^{\card(\calG)}\subset\kR$ such that the function
	\begin{equation}
		f_d(t) = \kvbar{\sum_{\ell=1}^{\card(\calG)} \coeff_\ell \, 
			\kernelPaper(\param + t\bfe_d, \param_\ell)}
	\end{equation}
	is not identically zero, we have
	\begin{equation}
		\label{eq:def-grid_admissibility}
		\emptyset \neq \kargmax_{t\in\kR} f_d(t) \subseteq \support_d.
	\end{equation}

	\noindent
	By extension, a Cartesian grid $\calG$ is said to be axis admissible with respect to a dictionary $\dico$ if it is axis admissible with respect to the kernel induced by $\dico$.
\end{definition}

The notion of axis admissibility will be central in our next result to ensure the exact recovery of some support $\supportt=\{\paramt_\ell\}_{\ell=1}^\nbAtome$ in a CMF dictionary. 
In particular, we will see that axis admissibility of $\SetAug(\supportt)$ ensures exact $\SetAug(\supportt)$-delayed recovery of each $\support\subseteq \supportt$.
Moreover, exact $\supportt$-delayed recovery of each $\support\subseteq \supportt$ is achievable by combining axis admissibility of $\SetAug(\supportt)$ with the following restricted version of the ERC:\footnote{We remind the reader that $\Gtrue$ is invertible as the Gram matrix of a set of linearly independent atoms (see \Cref{cor:existence CMF dictionary}). \label{footnote:inversibility-bfG}}
\begin{align}
			\stepcounter{equation}
			\tag{\theequation-R-ERC}
			\label{eq:contrib:dimD:l1erc}
			\max_{\param\,\in\,\textnormal{$\SetAug(\supportt)$}\setminus\supportt} \kvvbar{ 
				\kinv{\bfG} \gtheta
			}_1 \;<\; 1 
\end{align}
where
\begin{align}
	\begin{array}{rlr}
		\Gtrue{}\element{\ell,\ell'}&\triangleq\scalprod{\atome(\paramt_\ell)}{\atome(\paramt_{\ell'})} & \forall \ell,\ell'\in\intervint{1}{\nbAtome{}}\\
		\gtheta\element{\ell}&\triangleq\scalprod{\atome(\param)}{\atome(\paramt_\ell)} & \forall \ell\in\intervint{1}{\nbAtome}
	\end{array}		
	.
\end{align}
Formally, our next result writes as follows:
\begin{theorem}
	\label{th:contrib:dimD:grid_recovery}
	Let $\dico$ be a CMF dictionary in $\kR^\dimParam$ with induced kernel $\kernelPaper$ and let $\supportt=\{\paramt_\ell\}_{\ell=1}^{\nbAtome}$. 
	\begin{itemize}
		\item If $\SetAug(\supportt)$ is axis admissible with respect to $\kernelPaper$, then \comp{} achieves $\SetAug(\supportt)$-delayed recovery of each $\support\subseteq \supportt$.
		If \eqref{eq:contrib:dimD:l1erc} moreover holds, \comp{} achieves $\supportt$-delayed recovery of each $\support\subseteq \supportt$.
		\item Conversely, if \eqref{eq:contrib:dimD:l1erc} does not hold, there exists not all-zero coefficients $\{\coeff_\ell\}_{\ell=1}^{\nbAtome}$ such that \comp{} with $\Vobs = \sum_{\ell=1}^{\nbAtome} \coeff_\ell\, \atome(\paramt_\ell)$ as input selects some $\param \notin \supportt$ at the first iteration. 
\end{itemize} 
\end{theorem}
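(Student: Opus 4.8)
The plan is to treat the three assertions in turn, the common thread being to pin down, at each iteration of \Cref{alg:continuousOMP}, which atoms can possibly be selected. Write $\mathcal{G}\triangleq\SetAug(\supportt)=\prod_{d=1}^{\dimParam}\support_d$ with $\support_d=\{\param\element{d}:\param\in\supportt\}$, so that $\supportt\subseteq\mathcal{G}$ and, by \Cref{cor:existence CMF dictionary}, every finite subfamily of $\{\atome(\param):\param\in\mathcal{G}\}$ is linearly independent, hence all Gram matrices below are invertible. I will use repeatedly that the residual after any iteration equals $\Vobs$ minus its orthogonal projection onto the span of the atoms already selected; in particular it is a finite linear combination of dictionary atoms, so \Cref{Fact:existence-maximizer} applies to the next atom-selection step, and it is orthogonal to every selected atom, so a nonzero residual forces the next selection to be genuinely new and the run to terminate after finitely many iterations with a zero residual.

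\textbf{$\SetAug(\supportt)$-delayed recovery from axis admissibility.} Fix $\support\subseteq\supportt$ and nonzero coefficients, let $\Vobs=\sum_{\paramt_\ell\in\support}\coeff_\ell\atome(\paramt_\ell)$, and prove by induction the invariant: \emph{every residual $\bsr$ along the run lies in $\mathrm{span}\{\atome(\param):\param\in\mathcal{G}\}$, and every atom selected in \Cref{line:algo:continuousOMP:findtheta} belongs to $\mathcal{G}$}. The base case holds since $\Vobs\in\mathrm{span}\{\atome(\paramt_\ell):\paramt_\ell\in\support\}\subseteq\mathrm{span}\{\atome(\param):\param\in\mathcal{G}\}$. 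For the inductive step, let $\bsr\neq{\bf0}_\spaceObs$ lie in that span, write $\mathcal{G}=\{\param_\ell\}_{\ell=1}^{\card(\mathcal{G})}$ and $\bsr=\sum_{\ell}\bgamma\element{\ell}\atome(\param_\ell)$; the atom-selection objective is $g(\param)=\kvbar{\scalprod{\atome(\param)}{\bsr}}=\kvbar{\sum_{\ell}\bgamma\element{\ell}\kernelPaper(\param,\param_\ell)}$, it has a global maximizer $\param^\circ$ by \Cref{Fact:existence-maximizer}, and $g(\param^\circ)>0$ (otherwise $\bsr$ would be orthogonal to a span containing it). The crux is to turn this $\dimParam$-dimensional maximization into $\dimParam$ one-dimensional ones: being a \emph{global} maximizer of $g$, $\param^\circ$ is in particular a maximizer of $g$ along the line through it in each coordinate direction $\bfe_d$; writing that line as $\{\tilde\param+t\bfe_d:t\in\kR\}$ with $\tilde\param\element{d}=0$, the restriction $t\mapsto\kvbar{\sum_{\ell}\bgamma\element{\ell}\kernelPaper(\tilde\param+t\bfe_d,\param_\ell)}$ is exactly the function $f_d$ of \Cref{def:contrib:grille_admissibility} for the coefficient vector $\bgamma$ over $\mathcal{G}$, it is not identically zero (it equals $g(\param^\circ)>0$ at $t=\param^\circ\element{d}$), and it attains its maximum at $t=\param^\circ\element{d}$; hence axis admissibility of $\mathcal{G}$ gives $\param^\circ\element{d}\in\support_d$, and as $d$ was arbitrary, $\param^\circ\in\prod_d\support_d=\mathcal{G}$. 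This closes the induction, so every reachable support satisfies $\supporte\subseteq\mathcal{G}$; since the run terminates with $\bsr={\bf0}_\spaceObs$ and the atoms of $\mathcal{G}$ are linearly independent, the two expansions $\Vobs=\sum_{\param\in\supporte}\coeffe_\param\atome(\param)=\sum_{\paramt_\ell\in\support}\coeff_\ell\atome(\paramt_\ell)$ must agree, forcing $\support\subseteq\supporte$. Thus $\support\subseteq\supporte\subseteq\mathcal{G}$.

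\textbf{$\supportt$-delayed recovery under \eqref{eq:contrib:dimD:l1erc}.} I plan to upgrade the invariant to: \emph{every residual lies in $\calR_{\supportt}\triangleq\mathrm{span}\{\atome(\paramt_\ell)\}_{\ell=1}^{\nbAtome}$ and every selected atom belongs to $\supportt$}. Given a nonzero residual $\bsr=\sum_{\ell}\bbeta\element{\ell}\atome(\paramt_\ell)\in\calR_{\supportt}$, the Tropp identity $\scalprod{\atome(\param)}{\bsr}=\ktranspose{(\kinv{\Gtrue}\gtheta)}\,\Gtrue\bbeta$ together with Hölder yields $\kvbar{\scalprod{\atome(\param)}{\bsr}}\leq\kvvbar{\kinv{\Gtrue}\gtheta}_1\,\max_{\ell\in\intervint{1}{\nbAtome}}\kvbar{\scalprod{\atome(\paramt_\ell)}{\bsr}}$, the right-hand maximum being strictly positive ($\Gtrue\bbeta\neq{\bf0}_\nbAtome$ since $\Gtrue$ is invertible and $\bbeta\neq{\bf0}_\nbAtome$). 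For $\param\in\mathcal{G}\setminus\supportt$, \eqref{eq:contrib:dimD:l1erc} makes $\kvvbar{\kinv{\Gtrue}\gtheta}_1<1$, so no such $\param$ maximizes $g$; combined with the previous point (the maximizer lies in $\mathcal{G}$), the selected atom lies in $\supportt$ and the residual stays in $\calR_{\supportt}$. The induction closes, the run terminates inside $\supportt$, and the uniqueness argument above gives $\support\subseteq\supporte\subseteq\supportt$. Note that both hypotheses are genuinely used here: axis admissibility alone only confines selections to $\mathcal{G}$, and once a ``grid-but-not-support'' atom were chosen the residual would leave $\calR_{\supportt}$ — which is exactly what \eqref{eq:contrib:dimD:l1erc} prevents.

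\textbf{Necessity of \eqref{eq:contrib:dimD:l1erc}.} For the converse I would adapt Tropp's worst-case construction. Since $\mathcal{G}\setminus\supportt$ is finite, the failure of \eqref{eq:contrib:dimD:l1erc} exhibits $\param^\circ\in\mathcal{G}\setminus\supportt$ with $\kvvbar{\kinv{\Gtrue}\gtheta[\param^\circ]}_1\geq1$. Take $\coeffv\in\kR^\nbAtome$ with $\Gtrue\coeffv$ equal to a vector in $\{-1,+1\}^\nbAtome$ whose $\ell$-th entry has the sign of $(\kinv{\Gtrue}\gtheta[\param^\circ])\element{\ell}$ (any choice when the latter vanishes); then $\coeffv\neq{\bf0}_\nbAtome$, and for $\Vobs=\sum_{\ell}\coeffv\element{\ell}\atome(\paramt_\ell)$ one gets $\scalprod{\atome(\paramt_\ell)}{\Vobs}=(\Gtrue\coeffv)\element{\ell}\in\{-1,+1\}$, so $\max_{\param\in\supportt}\kvbar{\scalprod{\atome(\param)}{\Vobs}}=1$, while $\scalprod{\atome(\param^\circ)}{\Vobs}=\ktranspose{(\kinv{\Gtrue}\gtheta[\param^\circ])}(\Gtrue\coeffv)=\kvvbar{\kinv{\Gtrue}\gtheta[\param^\circ]}_1\geq1$. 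As $\sup_{\param}\kvbar{\scalprod{\atome(\param)}{\Vobs}}$ is attained (\Cref{Fact:existence-maximizer}): if it exceeds $1$, no $\paramt_\ell$ attains it; if it equals $1$, then $\param^\circ$ attains it; either way $\kargmax_{\param}\kvbar{\scalprod{\atome(\param)}{\Vobs}}$ contains an element of $\paramSet\setminus\supportt$, so for this not all-zero $\coeffv$ OMP may select some $\param\notin\supportt$ at the first iteration. The one genuinely delicate point of the whole argument is the coordinate reduction in the first part: axis admissibility constrains only axis-parallel slices, and what makes it usable is the elementary observation that a global maximizer over $\kR^\dimParam$ is automatically a maximizer of every axis-parallel slice through it; the remaining work is careful bookkeeping of the OMP invariants (in particular, that the residual stays in $\mathrm{span}\{\atome(\param):\param\in\mathcal{G}\}$ until \eqref{eq:contrib:dimD:l1erc} lets one shrink it to $\calR_{\supportt}$) and the standard linear-independence and Hölder manipulations inherited from the discrete theory.
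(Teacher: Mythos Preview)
Your proposal is correct and follows essentially the same route as the paper's proof. The core mechanism---using axis admissibility to pin each coordinate of a global maximizer to $\support_d$ via the one-dimensional sections $f_d$, then iterating the resulting invariant on the residual---is identical; you spell out the H\"older/Tropp-identity step and the explicit worst-case construction for the converse, whereas the paper packages both by citing the equivalence in \cite[Prop.~3.15]{Foucart2013}, but these are the same arguments unfolded.
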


A first outcome of \Cref{th:contrib:dimD:grid_recovery} is a (pessimistic) upper bound on the number of iterations needed to identify $\supportt$ when $\dico$ is a CMF dictionary and $\SetAug(\supportt)$ is axis admissible with respect to the kernel induced by $\dico$.  
In particular, the first part of the theorem states that \comp{} needs no more than $\card(\SetAug(\supportt)) \leq 
\nbAtome^\dimParam$ iterations to succeed. 
As shown in the second part of the theorem, this (rather pessimistic) upper bound on the number of iterations can be decreased to $\nbAtome$ if an additional restricted ERC \eqref{eq:contrib:dimD:l1erc} is verified.
Interestingly,  whereas the parameter space $\paramSet$ is a continuum, \eqref{eq:contrib:dimD:l1erc} only depends on a finite subset of the elements of $\paramSet$ (namely $\SetAug(\supportt)$) and its numerical evaluation is therefore possible. 
We will see in \Cref{prop:intro:erc_ell1:separation} below, that this restricted ERC allows us to derive a separability condition for exact $\nbAtome$-step recovery in Generalized Laplace dictionaries.
Besides, we note that additional strategies could be investigated to improve the upper bound, exploiting, \textit{e.g.}, coefficients decay~\cite{Herzet2016}.

In our next result, we show that the property of ``axis admissibility'' can be (at least) satisfied for some CMF dictionaries.
In particular, the next lemma emphasizes that any Cartesian grid is axis admissible for Generalized Laplace dictionaries (see \Cref{def:contrib:laplace_dico}): 
\begin{lemma}
	\label{th:contrib:laplace_recovery}
	Let $\dico$ be a Generalized Laplace dictionary in $\kR^\dimParam$.
	Then {\em all} Cartesian grids $\calG$ are admissible with respect to $\dico$.
\end{lemma}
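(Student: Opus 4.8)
The goal is to show that for a Generalized Laplace dictionary in $\kR^\dimParam$ — whose kernel factorizes as $\kernelPaper(\param,\param') = \cste^{-\lambda\|\param-\param'\|_p^p} = \prod_{d=1}^\dimParam \cste^{-\lambda|\param\element{d} - \param'\element{d}|^p}$ — every Cartesian grid $\calG = \prod_{d=1}^\dimParam \support_d$ is axis admissible. Fix a direction $d\in\intervint{1}{\dimParam}$, a point $\param\in\kR^\dimParam$ with $\param\element{d}=0$, and coefficients $\{\coeff_\ell\}_{\ell=1}^{\card(\calG)}\subset\kR$ such that $f_d(t) = \bigl|\sum_\ell \coeff_\ell\, \kernelPaper(\param + t\bfe_d, \param_\ell)\bigr|$ is not identically zero. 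The plan is to exploit the tensor-product structure to collapse the problem to a one-dimensional statement and then invoke the $D=1$ machinery already established for CMF dictionaries (essentially \Cref{th:contrib:cmf_uniformRecov_1D} and its underlying admissibility analysis in Appendix~\ref{sec:app:proofs-CMF}).

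\textbf{Step 1: Separate variables.} Write $\param_\ell = (\param_\ell\element{1},\dots,\param_\ell\element{\dimParam})$ and note that for a point $\param+t\bfe_d$ the kernel factors as
\begin{equation}
	\kernelPaper(\param+t\bfe_d,\param_\ell) = \Bigl(\textstyle\prod_{d'\neq d} \cste^{-\lambda|\param\element{d'}-\param_\ell\element{d'}|^p}\Bigr)\,\cste^{-\lambda|t-\param_\ell\element{d}|^p}.
\end{equation}
Group the grid points by their $d$-th coordinate: since $\calG$ is Cartesian, the $d$-th coordinates of the $\param_\ell$ range exactly over $\support_d = \{s_1,\dots,s_m\}$. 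For each $j\in\intervint{1}{m}$, collect $\coeff_{\ell,j}' \triangleq \coeff_\ell \prod_{d'\neq d}\cste^{-\lambda|\param\element{d'}-\param_\ell\element{d'}|^p}$ over all $\ell$ with $\param_\ell\element{d}=s_j$ and set $\beta_j \triangleq \sum_{\ell:\,\param_\ell\element{d}=s_j}\coeff_{\ell,j}'$. Then
\begin{equation}
	f_d(t) = \Bigl|\,\sum_{j=1}^m \beta_j\, \cste^{-\lambda|t - s_j|^p}\,\Bigr|,
\end{equation}
which is now literally the modulus of a one-dimensional Laplace-kernel combination over the node set $\support_d$, with real (possibly signed, possibly zero) coefficients $\{\beta_j\}$.

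\textbf{Step 2: Apply the one-dimensional result.} The hypothesis that $f_d$ is not identically zero means the coefficient vector $\{\beta_j\}$ is not such that the combination vanishes identically; since the atoms $\{\cste^{-\lambda|\cdot-s_j|^p}\}_j$ associated to the distinct nodes $s_j$ are linearly independent (\Cref{cor:existence CMF dictionary}), this forces $\{\beta_j\}$ to be not all zero. Now I would invoke the $D=1$ analysis: for a one-dimensional Laplace (more generally CMF) dictionary, any finite support is admissible (the content behind \Cref{th:contrib:cmf_uniformRecov_1D}), and in particular — reformulating via the ``no bad selection'' characterization \eqref{eq:ERC-no-bad-selection} — for any nonzero residual lying in the span of $\{\atome(s_j)\}_{j=1}^m$, the function $t\mapsto |\langle \atome(t), \bsr\rangle|$ attains its maximum only at points of $\{s_j\}_{j=1}^m$. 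Since $\langle \atome(t),\bsr\rangle = \sum_j \beta_j \cste^{-\lambda|t-s_j|^p}$ for the appropriate residual, this says exactly
\begin{equation}
	\emptyset \neq \kargmax_{t\in\kR} f_d(t) \subseteq \support_d,
\end{equation}
which is \eqref{eq:def-grid_admissibility}. Non-emptiness of the argmax follows from \Cref{Fact:existence-maximizer} (the vanishing property \eqref{eq:contribution:evanecent} holds for Laplace kernels, so $f_d$ attains its supremum). As $d$, $\param$ and $\{\coeff_\ell\}$ were arbitrary, $\calG$ is axis admissible.

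\textbf{Main obstacle.} The conceptually delicate point is Step 2: I am implicitly using that the one-dimensional Laplace kernel not only yields admissible \emph{supports} (as in \Cref{th:contrib:cmf_uniformRecov_1D}) but that the maximizers of a signed one-dimensional combination of such atoms are confined to the nodes. One must be careful that \Cref{th:contrib:cmf_uniformRecov_1D} is stated as an $\nbAtome$-step recovery result and not verbatim as a ``maximizer location'' statement; extracting the latter requires either citing the intermediate ``admissible support'' property (items \ref{hyp:geometric:1}–\ref{hyp:geometric:2} of \Cref{def:admissible_support}) proved for 1-D CMF dictionaries, or redoing the one-dimensional argument directly — showing that $t\mapsto \sum_j\beta_j\cste^{-\lambda|t-s_j|^p}$, being a finite signed combination of Laplace atoms, cannot have a local maximum of its absolute value strictly between consecutive nodes. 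The cleanest route is to extract from Appendix~\ref{sec:app:proofs-CMF} the precise 1-D lemma asserting that argmaxes of $|\langle\atome(\cdot),\bsr\rangle|$ over a span of $m$ Laplace atoms lie among the $m$ generating nodes, and then the present proof is essentially a one-line consequence of the tensorization in Step 1.
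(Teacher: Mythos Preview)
Your proposal is correct and follows essentially the same route as the paper: exploit the tensor factorization of the Laplace kernel to collapse $f_d$ to a one-dimensional signed combination over the nodes $\support_d$, then invoke \Cref{th:contrib:cmf_uniformRecov_1D} to confine the maximizers to those nodes. The ``main obstacle'' you flag is resolved in the paper exactly as you suggest --- by reading off the first-iteration selection property from exact $\card(\widetilde{\calS})$-step recovery applied to the nonzero-coefficient subset $\widetilde{\calS}\subseteq\support_d$.
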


\noindent
A proof of this result is given in \Cref{subsec:proof:laplace-grid-are-admissible}. 
Combining this lemma with \Cref{th:contrib:dimD:grid_recovery} immediately leads to the following corollary:

\begin{corollary}
	\label{cor:contrib:laplace_recovery}
	Let $\dico$ be a Generalized Laplace dictionary in $\kR^\dimParam$.
	Then \comp{} achieves exact $\SetAug(\supportt)$-delayed recovery of each finite support $\supportt \subset \kR^\dimParam$.   
\end{corollary}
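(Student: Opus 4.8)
The plan is to obtain this corollary as a direct composition of \Cref{th:contrib:laplace_recovery} and the first part of \Cref{th:contrib:dimD:grid_recovery}; no new estimate is required, and the only thing to verify is that the set $\SetAug(\supportt)$ fed into \Cref{th:contrib:dimD:grid_recovery} meets the hypotheses of that theorem.

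First I would observe that, for any finite $\supportt\subset\kR^{\dimParam}$, the set $\SetAug(\supportt)$ defined in \eqref{eq:erc_ell1:defAtomeVirtuel} is itself a Cartesian grid in the sense of \Cref{def:contrib:cartesian_grid}: indeed $\SetAug(\supportt)=\prod_{d=1}^{\dimParam}\support_d$ with $\support_d\triangleq\kset{\param\element{d}}{\param\in\supportt}$, and each $\support_d$ is a finite subset of $\kR$ because $\supportt$ is finite. Next, since $\dico$ is a Generalized Laplace dictionary, \Cref{th:contrib:laplace_recovery} applies to it and asserts that \emph{all} Cartesian grids are axis admissible with respect to $\dico$; in particular $\SetAug(\supportt)$ is axis admissible with respect to the kernel induced by $\dico$. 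Finally, I would recall that a Generalized Laplace dictionary is a CMF dictionary, because $\laplacekernelclassP[\dimParam]\subset\cmfkernelclassP[\dimParam]$ (the map $t\mapsto\cste^{-\lambda t}$ on $\kR+$ is completely monotone, as noted right after \Cref{def:contrib:laplace_dico}). The hypotheses of the first bullet of \Cref{th:contrib:dimD:grid_recovery} are therefore satisfied: $\dico$ is a CMF dictionary in $\kR^{\dimParam}$ and $\SetAug(\supportt)$ is axis admissible with respect to its kernel. Applying that bullet and specializing its conclusion to the choice $\support=\supportt$ yields that \comp{} achieves $\SetAug(\supportt)$-delayed recovery of $\supportt$ (and, more generally, of every $\support\subseteq\supportt$), which is exactly the claim.

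Because the argument merely chains two already-established statements, there is no genuine obstacle here. The only point needing minimal care is the bookkeeping remark that $\SetAug(\supportt)$ is a Cartesian grid and that Generalized Laplace kernels form a subclass of CMF kernels, so that \Cref{th:contrib:laplace_recovery} and \Cref{th:contrib:dimD:grid_recovery} can both legitimately be invoked for this particular set and dictionary.
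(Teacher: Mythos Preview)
Your proof is correct and follows exactly the same approach as the paper, which simply states that the corollary follows by combining \Cref{th:contrib:laplace_recovery} with \Cref{th:contrib:dimD:grid_recovery}. Your additional bookkeeping (verifying that $\SetAug(\supportt)$ is a Cartesian grid and that $\laplacekernelclassP[\dimParam]\subset\cmfkernelclassP[\dimParam]$) makes the argument more self-contained than the paper's one-line justification.
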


Interestingly, although \Cref{ex:intro:erc_required_dim2} showed that exact $\card{(\supportt)}$-step recovery does not hold  for arbitrary $\supportt$ in CMF dictionaries, \Cref{cor:contrib:laplace_recovery} emphasizes that exact $\SetAug(\supportt)$-delayed recovery is achievable by OMP in Generalized Laplace dictionaries for any $\supportt$ and any $\nbAtome=\card(\supportt)\in\kN*$.
Following our remark below \Cref{th:contrib:dimD:grid_recovery}, \comp{} is thus ensured to identify any support of size $\nbAtome$ in at most $\nbAtome^\dimParam$ iterations in this type of dictionaries.
Similar to \Cref{th:contrib:cmf_uniformRecov_1D}, no separation assumptions nor sign constraints are needed here to ensure our recovery result, although it applies to a very specific family of dictionaries.
We will see in \Cref{prop:intro:erc_ell1:separation} below that adding some separation condition on the elements of $\supportt$ enables to verify \eqref{eq:contrib:dimD:l1erc} and therefore leads to an exact-recovery result in at most $\nbAtome$ steps.

Before moving on to the statement of this result, let us mention that, 
although \Cref{th:contrib:laplace_recovery} shows that any Cartesian grid is axis admissible with respect to Generalized Laplace dictionaries, such a result does in general not hold for CMF dictionaries without extra assumptions on the grid. 
Nevertheless, our empirical evidence suggests that the admissible grid assumption is only an artifact of our proof technique. 
We conjecture that \Cref{th:contrib:dimD:grid_recovery} remains valid even when the Cartesian grid $\calG$ is not axis admissible. 
To support our conjecture, we show in \Cref{sec:app:dimD:uniformrecovery_k_2} that 
the second part of \Cref{th:contrib:dimD:grid_recovery} still holds for {\em any} CMF dictionary and 
for any $\supportt\subset \kR^\dimParam$ with $\card(\supportt)=2$, even though $\SetAug(\supportt)$ is generally not axis admissible. 
The proof of this kind of result in the general case is still under investigation.


In the last result of this section, we particularize \eqref{eq:contrib:dimD:l1erc} to derive a separation condition on the elements of $\supportt=\{\paramt_\ell\}_{\ell=1}^{\nbAtome}$ that ensures exact $\card(\support)$-step recovery of 
each  $\support \subseteq\supportt$ in Generalized Laplace dictionaries.  
We first note that, following standard results of the literature (see \eg \cite{tropp2004}), \eqref{eq:contrib:dimD:l1erc} can be relaxed to a mutual coherence condition:
\begin{equation}
	\mu < \frac{1}{2\nbAtome-1}
\end{equation}
where 
\begin{align}
	\label{eq:restricted-mutual-coherence}
	\mu \;\triangleq \max_{\substack{\param,\param'\in\SetAug(\supportt)\\ 
	\param\neq\param'}} 
	\kvbar{
		\scalprod{\atome(\param)}{\atome(\param')}
	}.
\end{align}	
Our separation result is then a simple consequence of this mutual coherence condition: 	
\begin{theorem}
	\label{prop:intro:erc_ell1:separation}
	Let $\dico$ be a Generalized Laplace dictionary in $\kR^\dimParam$ with  parameters $\lambda>0$ and $0<p \leq 1$. 
	Consider $\supportt$$=\{\paramt_\ell\}_{\ell=1}^{\nbAtome}$ and let
	\begin{equation}
		\label{eq:laplaceKernel:minSepNonZero}
		\minSepNonZero \, \triangleq 
		\min_{d\in\intervint{1}{\dimParam}} \min \left\{
		 \kvbar{\paramt_\ell\element{d} - \paramt_{\ell'}\element{d} }:
		 \ell,\ell'\in\intervint{1}{\nbAtome}\ \text{s.t. } 
			\paramt_\ell\element{d} \neq \paramt_{\ell'}\element{d}\right\}.
	\end{equation}
	If
	\begin{equation}
		\label{eq:laplaceKernel:erc_ell1:minSepNonZero}
		\minSepNonZero^{p} \geq \frac{\log(2k-1)}{\lambda}
	\end{equation}
	then, \comp{} achieves exact $\card(\support)$-steps recovery of each $\support\subseteq\supportt$.
\end{theorem}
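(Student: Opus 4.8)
The plan is to deduce the claim from \Cref{th:contrib:dimD:grid_recovery} and \Cref{th:contrib:laplace_recovery}; essentially all the work is to turn the separation hypothesis \eqref{eq:laplaceKernel:erc_ell1:minSepNonZero} into the restricted ERC \eqref{eq:contrib:dimD:l1erc} via a mutual-coherence estimate governed by $\minSepNonZero$. Fix a nonempty $\support\subseteq\supportt$ and let $m=\card(\support)$ (the case $\support=\emptyset$ is trivial). Since $\SetAug(\support)$ is by construction a Cartesian grid, \Cref{th:contrib:laplace_recovery} shows it is axis admissible with respect to $\dico$. So, applying \Cref{th:contrib:dimD:grid_recovery} with $\support$ in the role of the ground-truth support, it remains only to verify \eqref{eq:contrib:dimD:l1erc} for $\support$, namely $\max_{\param\in\SetAug(\support)\setminus\support}\kvvbar{\kinv{\bfG}\gtheta}_1<1$, where now $\bfG$ and $\gtheta$ are built from the atoms $\kset{\atome(\param)}{\param\in\support}$ (these are linearly independent, hence $\bfG$ is invertible, by \Cref{cor:existence CMF dictionary}). \Cref{th:contrib:dimD:grid_recovery} then delivers, in particular, $\support$-delayed recovery of $\support$, which is exactly $m$-step recovery of $\support$.

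To establish \eqref{eq:contrib:dimD:l1erc} for $\support$, I would invoke the coherence relaxation recalled just before the statement (valid verbatim with $\support$ and $m$ in place of $\supportt$ and $\nbAtome$): it is enough that the restricted coherence $\mu=\max_{\param\neq\param'\in\SetAug(\support)}\kvbar{\scalprod{\atome(\param)}{\atome(\param')}}$ satisfies $\mu<\tfrac{1}{2m-1}$. The geometric heart of the matter is that, for a Generalized Laplace kernel, $\minSepNonZero$ controls this coherence. Since $\support\subseteq\supportt$ implies $\SetAug(\support)\subseteq\SetAug(\supportt)$ (the set of $d$-th coordinates of $\support$ is a subset of that of $\supportt$), it suffices to bound the coherence of $\SetAug(\supportt)$: for distinct $\param,\param'\in\SetAug(\supportt)$ there is an axis $d$ with $\param\element{d}\neq\param'\element{d}$, and both numbers are $d$-th coordinates of points of $\supportt$, so \eqref{eq:laplaceKernel:minSepNonZero} forces $\kvbar{\param\element{d}-\param'\element{d}}\geq\minSepNonZero$; hence $\|\param-\param'\|_p^p\geq\minSepNonZero^{p}$ and, as $x\mapsto\cste^{-\lambda x}$ is decreasing,
\[
	\kvbar{\scalprod{\atome(\param)}{\atome(\param')}}=\cste^{-\lambda\|\param-\param'\|_p^p}\leq\cste^{-\lambda\minSepNonZero^{p}}.
\]
Plugging in \eqref{eq:laplaceKernel:erc_ell1:minSepNonZero} yields $\mu\leq\cste^{-\lambda\minSepNonZero^{p}}\leq\tfrac{1}{2\nbAtome-1}\leq\tfrac{1}{2m-1}$.

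The one genuinely delicate step — which I expect to be the main obstacle — is upgrading $\mu\leq\tfrac{1}{2m-1}$ to the \emph{strict} inequality $\mu<\tfrac{1}{2m-1}$ that the coherence argument requires. When $m<\nbAtome$ (i.e.\ $\support\subsetneq\supportt$) this is automatic since $\tfrac{1}{2\nbAtome-1}<\tfrac{1}{2m-1}$, so the only remaining situation is $\support=\supportt$ with equality in \eqref{eq:laplaceKernel:erc_ell1:minSepNonZero}, where $\mu$ can be exactly $\tfrac{1}{2\nbAtome-1}$ and the generic Tropp bound $\tfrac{\nbAtome\mu}{1-(\nbAtome-1)\mu}$ degrades to $1$. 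I would treat this case directly on $\kvvbar{\kinv{\bfG}\gtheta}_1$, using the elementary estimate $\kvvbar{\kinv{\bfG}\gtheta}_1\leq \frac{\kvvbar{\gtheta}_1}{1-\max_{\ell}\sum_{\ell'\neq\ell}\bfG\element{\ell,\ell'}}$ together with a dichotomy on $\kvvbar{\gtheta}_1$. If $\kvvbar{\gtheta}_1<\nbAtome\mu$ (some $\paramt_\ell$ is at $\ell_p^p$-distance strictly larger than $\minSepNonZero^{p}$ from $\param$), the estimate is already below $1$ at $\mu=\tfrac{1}{2\nbAtome-1}$; if $\kvvbar{\gtheta}_1=\nbAtome\mu$, then $\param$ sits at the minimal $\ell_p^p$-distance $\minSepNonZero^{p}$ from \emph{every} $\paramt_\ell$, which (a point at minimal $\ell_p^p$-distance from two distinct grid points differs from each of them in a single coordinate, by exactly $\minSepNonZero$) forces $\|\paramt_\ell-\paramt_{\ell'}\|_p^p\geq 2^{p}\minSepNonZero^{p}$ and hence $\bfG\element{\ell,\ell'}\leq\cste^{-\lambda 2^{p}\minSepNonZero^{p}}=\mu^{2^{p}}<\mu$ for all $\ell\neq\ell'$; reinjecting this strictly smaller off-diagonal bound makes the estimate strictly below $1$. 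This settles \eqref{eq:contrib:dimD:l1erc} for every $\support\subseteq\supportt$ and finishes the proof. (This edge case evaporates if one is content to state \eqref{eq:laplaceKernel:erc_ell1:minSepNonZero} with a strict inequality.)
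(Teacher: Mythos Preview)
Your proof follows the same route as the paper's: bound the restricted mutual coherence on $\SetAug(\supportt)$ by $\exp(-\lambda\minSepNonZero^{p})$ and feed this into the coherence relaxation of \eqref{eq:contrib:dimD:l1erc}, then invoke \Cref{th:contrib:laplace_recovery} and \Cref{th:contrib:dimD:grid_recovery}. The paper's own proof does exactly this and simply asserts $\mu<(2\nbAtome-1)^{-1}$, whereas you correctly observe that the non-strict hypothesis \eqref{eq:laplaceKernel:erc_ell1:minSepNonZero} only yields $\mu\leq(2\nbAtome-1)^{-1}$; you also make explicit that one should apply \Cref{th:contrib:dimD:grid_recovery} with each $\support\subseteq\supportt$ separately (so as to land on $\card(\support)$-step rather than merely $\supportt$-delayed recovery). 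Your dichotomy for the borderline case $\mu=(2\nbAtome-1)^{-1}$ is sound: when $\kvvbar{\gtheta}_1<\nbAtome\mu$ the Neumann bound $\kvvbar{\kinv{\bfG}\gtheta}_1\leq\kvvbar{\gtheta}_1/(1-(\nbAtome-1)\mu)$ is already strictly below $1$; and when every $\paramt_\ell$ is exactly at $\ell_p^p$-distance $\minSepNonZero^{p}$ from $\param$, each $\paramt_\ell$ differs from $\param$ in a single coordinate by $\pm\minSepNonZero$, so any two distinct $\paramt_\ell,\paramt_{\ell'}$ are at $\ell_p^p$-distance at least $2^{p}\minSepNonZero^{p}$, giving $\bfG\element{\ell,\ell'}\leq\mu^{2^{p}}<\mu$ and hence $\kvvbar{\kinv{\bfG}\gtheta}_1\leq \nbAtome\mu/(1-(\nbAtome-1)\mu^{2^{p}})<\nbAtome\mu/(1-(\nbAtome-1)\mu)=1$. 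In short, your argument is the paper's argument done with more care; the extra paragraph you supply patches a genuine (if minor) gap that the paper glosses over, and would be unnecessary had \eqref{eq:laplaceKernel:erc_ell1:minSepNonZero} been stated with a strict inequality.
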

\begin{proof}
	By definition of $\minSepNonZero$ and of $\SetAug(\supportt)$, we have $\| \param - \param' \|_p^p \geq \minSepNonZero^p$ for all $\param,\param' \in \SetAug(\supportt)$. Hence, using the definition of the mutual coherence in \eqref{eq:restricted-mutual-coherence} we have $\mu=\exp(-\lambda \| \param - \param' \|_p^p)$ for some $\param,\param'\in\supportt$ so $\mu \leq \exp(-\lambda \minSepNonZero^{p})$ and \eqref{eq:laplaceKernel:erc_ell1:minSepNonZero} implies that $\mu < (2\nbAtome-1)^{-1}$ holds.
\end{proof}

\Cref{prop:intro:erc_ell1:separation} states that, with Generalized Laplace dictionaries, \comp{} recovers any linear combination of $\nbAtome$ sufficiently separated atoms in $\nbAtome$ steps.
Although condition~\eqref{eq:laplaceKernel:erc_ell1:minSepNonZero} is expressed in terms of minimal distance between parameters, it can be seen as a condition on the mutual coherence between atoms.
However, in contrast to the discrete case, this mutual coherence guarantee is only related to a particular finite subset of the (continuous) Generalized Laplace dictionary, namely the atoms with parameters in $\SetAug(\supportt)$.

Furthermore, condition~\eqref{eq:laplaceKernel:erc_ell1:minSepNonZero} is reminiscent of the separation condition for off-the-grid super-resolution proposed in~\cite{candes2014}, see~\eqref{eq:state_of_the_art:separation_BP}.
The so-called \emph{separation condition} discussed in~\eqref{eq:state_of_the_art:separation_BP} is expressed on a $\dimParam$-dimensional torus preventing also high values of $\nbAtome$.
For example, in a unit-length $1$-dimensional torus and with the notations of~\eqref{eq:state_of_the_art:separation_BP}, the minimum separation condition for BP requires $\nbAtome \leq \tfrac{f_c}{C}-1$. 
Note however that these results involve different dictionaries and settings making relevant comparison tedious.

\conditionalPagebreak


\section{Conclusion - discussion}
	\label{sec:discussion}

In this work, we have shown that the study of the recovery properties of greedy procedures such \emph{Orthogonal Matching Pursuit} (\comp{}) can be extended to the setting of continuous dictionaries where the atoms continuously depend on some parameters.
Capitalizing on the formulation of \comp{} in terms of inner products between atoms, our results rely on the properties of the kernel implicitly defined by the inner product between atoms.
More particularly, we have identified two key notions which we have called \emph{admissible kernel} and \emph{admissible support}, that are sufficient to ensure exact recovery irrespective of the value of the coefficients involved in the representation.
For the class of CMF dictionaries, we have shown that when the dimension of the parameter space is $1$, all implicitly defined kernels as well as all supports are admissible.
Up to our knowledge, this is the first class of kernels for which no separation is needed to achieve exact recovery, even for signed combinations of atoms.
However, such a ``universal'' recovery result comes at a price since CMF dictionaries can only live in infinite-dimensional observation spaces $\spaceObs$ and the corresponding kernels exhibit some discontinuities in their derivatives.

\noindent
Although exact recovery can also be ensured for CMF dictionaries with a parameter space of dimension greater then 1, extra conditions have to be imposed on the support to be recovered, as some supports may not be admissible anymore.
The cornerstone of our analysis in the multi-dimensional case is the notion of \emph{axis admissible} Cartesian grid.
Indeed, axis admissibility is sufficient to allow \comp{} to identify supports, leading to a form of \say{delayed recovery} for all supports of size $\nbAtome$ embedded in some admissible Cartesian grid.
For such supports, exact $\nbAtome$-step recovery can also be achieved whenever a condition on a finite number of 
(known) atoms is fulfilled. 
In the special case of Generalized Laplace dictionaries, any Cartesian grid turns out to be axis admissible, and a simplified \emph{coherence-based analysis} can be revisited, leading to exact $\nbAtome$-step recovery under a \emph{minimal separation condition}.\\

\noindent
We now review some prospects of this work:

\paragraph{Beyond axis-admissible grids for CMF kernels}
Our analysis for multi-dimensional parameter sets relies on the notion of axis-admissible grids.
While axis admissibility holds for any grid with respect to Generalized Laplace dictionaries, this is apparently no longer the case with respect to more general CMF dictionaries.
Even for grids which seem to violate the axis-admissibility condition with respect to a CMF dictionary, empirical evidence suggests that \Cref{th:contrib:dimD:grid_recovery} remains valid.
As a first step towards a better understanding of this phenomenon, we showed in \Cref{sec:app:dimD:uniformrecovery_k_2} that, for supports of size 2, axis-admissibility is not necessary for the conclusion of \Cref{th:contrib:dimD:grid_recovery} to hold.

\paragraph{Connection with TV-minimization}
In light of the existing links between Tropp's ERC \citep{tropp2004} and recovery guarantees for $\ell_1$ minimization~\cite{Tropp2006}, an interesting question is whether the guarantees developed in this paper can be extended to sparse spike recovery with total variation norm minimization (see~\Cref{sec:state_of_the_art}).
More particularly, one could benefit from the \emph{null-space properties for measures}~\cite{castro2012} which characterize the solution of the continuous version of Basis Pursuit.
Such a connection may yield support recovery results for signed combinations of atoms with TV-norm minimization without separation conditions.

\paragraph{Robustness to estimation error}
In the discrete setting, one advantage of greedy procedures over convex relaxations is that the associated recovery guarantees involve solutions provided by actual \emph{algorithms} rather than merely expressed as the minimizer of some \emph{optimization problem}. 
In the continuous setting, this has to be tempered with the fact that implementing \comp{} requires a (possibly intractable) global maximization procedure at each iteration.
Our current analysis does not take into account the resulting numerical estimation error or the fact that there may be spurious local maxima.
One could envision overcoming some of these limitations by analyzing the behavior of \comp{} when a small error is systematically done when maximizing the inner product in \Cref{line:algo:continuousOMP:findtheta} of \Cref{alg:continuousOMP}.
Note that such an approximation error may also be useful to account for discretized implementations of the latter step of \comp{} using a fine grid over the parameter set $\paramSet$.

\conditionalPagebreak

\appendix


	\newcommand{\hyprec}{\calH_r}

	\newcommand{\Gk}{\overline{\Gtrue}}
	\newcommand{\gkthetaSymb}{\overline{\gthetaSymb}}
	\newcommand{\gktheta}[1][\param]{\gkthetaSymb_{#1}}
	\newcommand{\gkthetanew}{\gktheta[\nbAtome']}
	\newcommand{\gkthetanewTransp}{\ktranspose{\gkthetaSymb}_{\nbAtome'}}
	\newcommand{\supportk}{\overline{\support}}

	\newcommand{\Gnew}{\Gtrue}
	\newcommand{\gnew}[1][\paramt_{\nbAtome}]{\gtheta[#1]}
	\newcommand{\gnewTransp}[1][\paramt_{\nbAtome}]{\ktranspose{\gthetaSymb}_{#1}}
	\newcommand{\gnewtheta}[1][\param]{\gtheta[#1]}
	\newcommand{\gnewthetaTransp}[1][\param]{\ktranspose{\gthetaSymb}_{#1}}
	\newcommand{\supportnew}{\supportt}

	\newcommand{\vecOne}[1][\nbAtome]{{\bf1}_{#1}}
	\newcommand{\vecOneTransp}[1][\nbAtome]{\ktranspose{\bf1}_{#1}}

	\newcommand{\shortcutgkinv}{ ( 1 - \gkthetanewTransp\kinv{\Gk}\gkthetanew{} )^{-1} }
	\newcommand{\Gpolar}{\bfW}

\section{Proof of \texorpdfstring{\Cref{th:MainAbstractTheorem}}{Theorem~\ref{th:MainAbstractTheorem}}}
\label{sec:technical_details}

Let $\support\subseteq \supportt=\{\paramt_\ell\}_{\ell=1}^{\nbAtome}$.~Without loss of generality, we assume that $\support\neq \emptyset$ corresponds to the first $\card(\support)$ elements of $\supportt$, that is $\support=\{\paramt_\ell\}_{\ell=1}^{\card(\support)}$. 

We first notice that, as a direct consequence of \Cref{def:admissible_support}, if $\supportt$ is admissible with respect to $\kernelPaper$ then any $\support\subseteq\supportt$ is also admissible with respect to $\kernelPaper$.
The result stated in \Cref{th:MainAbstractTheorem} is then a direct consequence of \Cref{th:Tropp's-ERC} and the following proposition: 	
\begin{proposition}
	\label{lemma:MainAbstractTheorem}
		Assume kernel $\kernelPaper$ is admissible and $\support=\{\paramt_\ell\}_{\ell=1}^{\card(\support)}$ is admissible with respect to $\kernelPaper$. Then we have that 
		\begin{enumerate}[label=\itshape\roman*)]
			\item the atoms $\{\atome(\paramt_\ell)\}_{\ell=1}^{\nbAtome}$ are linearly independent,
			\label{item:prop-buf-theorem-abstract-condition:linear-indep}
			\item $\kforall[\param\in\paramSet\backslash\support{}] \kvvbar{ \kinv{\Gtrue} \gtheta }_1 < 1$, 
			where 
			\begin{align}
			\Gtrue\element{\ell,\ell'} &\;\triangleq\; \kernelPaper\kparen{\paramt_\ell, \paramt_{\ell'}} &&\forall \ell,\ell'\in\intervint{1}{\card(\support)} \phantom{.}
			\label{eq:MainAbstractTheorem:Gtrue} \\
			\gtheta\element{\ell}  &\;\triangleq\; \kernelPaper\kparen{\param, \paramt_\ell}  && \hphantom{,j}\forall \ell\in\intervint{1}{\card(\support)} \label{eq:MainAbstractTheorem:gtheta}
			.
			\end{align}
			\label{item:prop-buf-theorem-abstract-condition:erc}
		\end{enumerate}
\end{proposition}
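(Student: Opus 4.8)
The plan is to establish the two items in the stated order, since the reformulation used for item~\ref{item:prop-buf-theorem-abstract-condition:erc} relies on the invertibility of $\Gtrue$ granted by item~\ref{item:prop-buf-theorem-abstract-condition:linear-indep}. Write $s:=\card(\support)$, so $\support=\{\paramt_\ell\}_{\ell=1}^{s}$, and let $\calR_\support:=\mathrm{span}\{\atome(\paramt_\ell)\}_{\ell=1}^{s}$, so that every $\bsr\in\calR_\support$ can be written $\bsr=\sum_{\ell=1}^{s}\coeff_\ell\,\atome(\paramt_\ell)$.

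For item~\ref{item:prop-buf-theorem-abstract-condition:linear-indep} I would argue by contradiction. Suppose $\sum_{\ell=1}^{s}\coeff_\ell\,\atome(\paramt_\ell)={\bf0}_{\spaceObs}$ with not all $\coeff_\ell=0$; split the indices into $T^{+}:=\{\ell:\coeff_\ell>0\}$ and $T^{-}:=\{\ell:\coeff_\ell<0\}$ and set $\bsv:=\sum_{\ell\in T^{+}}\coeff_\ell\,\atome(\paramt_\ell)$, $\bsw:=\sum_{\ell\in T^{-}}(-\coeff_\ell)\,\atome(\paramt_\ell)$, so the relation becomes $\bsv=\bsw$. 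Since $\kernelPaper\geq 0$ and $\kernelPaper(\param,\param)=1$, any positive combination of atoms over a nonempty index set has squared norm at least the sum of the squared coefficients, hence is nonzero; this forces both $T^{+},T^{-}$ to be nonempty (otherwise the relation would express ${\bf0}_{\spaceObs}$ as such a positive combination) and $\bsv=\bsw\neq{\bf0}_{\spaceObs}$. The map $\param\mapsto\scalprod{\atome(\param)}{\bsv}$ is continuous, nonnegative, strictly positive at $\paramt_\ell$ for $\ell\in T^{+}$, and vanishes at infinity by~\eqref{eq:contribution:evanecent}; as in the proof of \Cref{Fact:existence-maximizer} its set of global maximizers is therefore nonempty. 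But this function equals $\sum_{\ell\in T^{+}}\coeff_\ell\,\kernelPaper(\param,\paramt_\ell)$ and also, using $\bsv=\bsw$, $\sum_{\ell\in T^{-}}(-\coeff_\ell)\,\kernelPaper(\param,\paramt_\ell)$; applying item~\ref{hyp:geometric:1} of \Cref{def:admissible_support} once to $T^{+}$ and once to $T^{-}$ (rescaling the positive coefficients so their sum is below $1$, which does not alter the $\kargmax$) forces its maximizers to lie in $\{\paramt_\ell\}_{\ell\in T^{+}}$ and simultaneously in $\{\paramt_\ell\}_{\ell\in T^{-}}$, two disjoint sets since the $\paramt_\ell$ are pairwise distinct — a contradiction.

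For item~\ref{item:prop-buf-theorem-abstract-condition:erc}, I would first reduce the $\ell_1$ bound to a ``no bad selection'' inequality. Using item~\ref{item:prop-buf-theorem-abstract-condition:linear-indep}, $\Gtrue$ is invertible, the map $(\coeff_\ell)_\ell\mapsto\bigl(\scalprod{\atome(\paramt_\ell)}{\bsr}\bigr)_\ell=\Gtrue(\coeff_\ell)_\ell$ is a bijection of $\kR^{s}$, and $\scalprod{\atome(\param)}{\bsr}=\sum_{\ell=1}^{s}\coeff_\ell\,\gtheta\element{\ell}$; hence, by $\ell_1$--$\ell_\infty$ duality,
\begin{equation*}
	\kvvbar{\kinv{\Gtrue}\gtheta}_1
	\;=\;\sup_{\bsr\in\calR_\support\setminus\{{\bf0}_{\spaceObs}\}}
	\frac{\kvbar{\scalprod{\atome(\param)}{\bsr}}}{\max_{\ell\in\intervint{1}{s}}\kvbar{\scalprod{\atome(\paramt_\ell)}{\bsr}}},
\end{equation*}
and the supremum is attained (at an $\bsr$ with $\kvbar{\scalprod{\atome(\paramt_\ell)}{\bsr}}=1$ for all $\ell$). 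Therefore item~\ref{item:prop-buf-theorem-abstract-condition:erc} is equivalent to: for every $\param\in\paramSet\setminus\support$ and every $\bsr\in\calR_\support\setminus\{{\bf0}_{\spaceObs}\}$ one has $\kvbar{\scalprod{\atome(\param)}{\bsr}}<M:=\max_{\ell\in\intervint{1}{s}}\kvbar{\scalprod{\atome(\paramt_\ell)}{\bsr}}$ (note $M>0$). To prove this, fix $\param\in\paramSet\setminus\support$; replacing $\bsr$ by $-\bsr$ if needed, assume $\scalprod{\atome(\param)}{\bsr}\geq 0$, and decompose $\bsr=\bsv-\bsw$ with $\bsv,\bsw$ the positive combinations over $T^{+}=\{\ell:\coeff_\ell>0\}$ and $T^{-}=\{\ell:\coeff_\ell<0\}$, so that $\scalprod{\atome(\param')}{\bsv}\geq 0$ and $\scalprod{\atome(\param')}{\bsw}\geq 0$ for all $\param'$. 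If $T^{+}=\emptyset$ then $\scalprod{\atome(\param)}{\bsr}\leq 0$, hence $=0<M$. If $T^{+}\neq\emptyset$, item~\ref{hyp:geometric:1} applied to $T^{+}$ (after rescaling) places the nonempty $\kargmax$ of $\param'\mapsto\scalprod{\atome(\param')}{\bsv}$ inside $\{\paramt_\ell\}_{\ell\in T^{+}}\subseteq\support$, so, picking $m\in T^{+}$ with $\scalprod{\atome(\paramt_m)}{\bsv}=\max_{\param'\in\paramSet}\scalprod{\atome(\param')}{\bsv}$ and recalling $\param\notin\support$,
\begin{equation*}
	\scalprod{\atome(\param)}{\bsr}\;\leq\;\scalprod{\atome(\param)}{\bsv}\;<\;\scalprod{\atome(\paramt_m)}{\bsv}\;=\;\scalprod{\atome(\paramt_m)}{\bsr}+\scalprod{\atome(\paramt_m)}{\bsw}.
\end{equation*}

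When $T^{-}=\emptyset$ the right-hand side is $\scalprod{\atome(\paramt_m)}{\bsr}\leq M$ and we are done. The remaining, mixed-sign case is the crux, and it is precisely here that item~\ref{hyp:geometric:2} of \Cref{def:admissible_support} enters: it is the only available hypothesis that lets one transfer a pointwise domination of $\scalprod{\atome(\cdot)}{\bsw}$ — a positive combination over $T^{-}$ — by a single atom $\atome(\paramt_m)$ from the finite test set $\{\paramt_\ell\}_{\ell\in T^{-}}$ to all of $\paramSet$. I would close the argument by a case distinction on whether such a domination holds on that test set: in the favorable branch item~\ref{hyp:geometric:2} controls $\scalprod{\atome(\param)}{\bsw}$, while in the other branch its failure forces $\kvbar{\scalprod{\atome(\paramt_\ell)}{\bsr}}$ to be large for some $\ell\in T^{-}$ — in both branches yielding $\scalprod{\atome(\paramt_m)}{\bsv}=\max_{\param'}\scalprod{\atome(\param')}{\bsv}\leq M$ and hence $\scalprod{\atome(\param)}{\bsr}<M$. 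I expect this final case analysis, and the exact way item~\ref{hyp:geometric:2} is deployed, to be the main obstacle; the rest is bookkeeping.
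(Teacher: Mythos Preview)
Your argument for item~\ref{item:prop-buf-theorem-abstract-condition:linear-indep} is correct and essentially identical to the paper's: split into positive and negative parts, apply item~\ref{hyp:geometric:1} of \Cref{def:admissible_support} to each, and derive a contradiction from the disjointness of the two index sets.

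Your approach to item~\ref{item:prop-buf-theorem-abstract-condition:erc}, however, has a genuine gap. The inequality you aim for in the mixed-sign case, namely $\scalprod{\atome(\paramt_m)}{\bsv}=\max_{\param'}\scalprod{\atome(\param')}{\bsv}\leq M$, is simply false in general. Take $s=2$, $\coeff_1=-\coeff_2=C>0$, so $\bsv=C\,\atome(\paramt_1)$, $\bsw=C\,\atome(\paramt_2)$, $\bsr=\bsv-\bsw$. Then $m=1$, $\scalprod{\atome(\paramt_1)}{\bsv}=C$, while $M=\max_{\ell=1,2}\kvbar{\scalprod{\atome(\paramt_\ell)}{\bsr}}=C\bigl(1-\kernelPaper(\paramt_1,\paramt_2)\bigr)<C$. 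So the chain $\scalprod{\atome(\param)}{\bsr}<\scalprod{\atome(\paramt_m)}{\bsv}\leq M$ cannot close, and no case distinction on whether the hypothesis of item~\ref{hyp:geometric:2} holds on $\{\paramt_\ell\}_{\ell\in T^{-}}$ will repair this: the failure of that hypothesis gives you information about $\scalprod{\atome(\paramt_\ell)}{\bsw}$ versus $\kernelPaper(\paramt_\ell,\paramt_m)$, not the comparison with $M$ that you need. The difficulty is structural: once you discard the $-\scalprod{\atome(\param)}{\bsw}$ term in the very first step, the bound becomes too loose to recover.

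The paper proceeds quite differently. It proves by induction on $\card(\support')$, for every $\support'\subseteq\support$, the three statements (a) $\kinv{\Gtrue}\mathbf{1}\geq 0$ entrywise, (b) $\kinv{\Gtrue}\gtheta\geq 0$ entrywise for every $\param$, and (c) $\kvvbar{\kinv{\Gtrue}\gtheta}_1<1$ for $\param\notin\support'$. The induction step uses the block-inversion formula for $\Gtrue$ in terms of the $(\card(\support')-1)\times(\card(\support')-1)$ Gram matrix $\Gk$. Item~\ref{hyp:geometric:2} of \Cref{def:admissible_support} is used precisely to establish (b): the last entry of $\kinv{\Gtrue}\gtheta$ equals (up to a positive factor) $\kernelPaper(\param,\paramt_{\card(\support')})-\ktranspose{\overline{\bfv}}\gktheta$ with $\overline{\bfv}=\kinv{\Gk}\gktheta[\card(\support')]$; by the induction hypothesis $\overline{\bfv}\geq 0$ and $\kvvbar{\overline{\bfv}}_1<1$, and one checks that $\ktranspose{\overline{\bfv}}\gktheta[\paramt_j]=\kernelPaper(\paramt_j,\paramt_{\card(\support')})$ for every $j<\card(\support')$, so the finite-test-set hypothesis of item~\ref{hyp:geometric:2} is \emph{automatically} satisfied with equality. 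This is why item~\ref{hyp:geometric:2} fits the inductive structure but not your direct attack: the coefficients of the relevant positive combination are not arbitrary but are themselves produced by a smaller $\kinv{\Gk}$. Once (a) and (b) are in hand, (c) follows because $\kvvbar{\kinv{\Gtrue}\gtheta}_1=\ktranspose{\mathbf{1}}\kinv{\Gtrue}\gtheta=\ktranspose{\bfu}\gtheta$ with $\bfu=\kinv{\Gtrue}\mathbf{1}\geq 0$, and item~\ref{hyp:geometric:1} applied to this positive combination forces its maximum (equal to $1$) to be attained only on $\support'$.
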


\noindent
We thus spend the rest of this section in proving \Cref{lemma:MainAbstractTheorem}. \\

\noindent
\textit{Proof of item~\ref{item:prop-buf-theorem-abstract-condition:linear-indep} of \Cref{lemma:MainAbstractTheorem}}.
Let $\{\coeff_\ell\}_{\ell=1}^{\card(\support)} \subset\kR$ be such that $\Vobs\triangleq\sum_{\ell=1}^{\card(\support)} \coeff_\ell\, \atome(\paramt_\ell) = {\bf0}_{\spaceObs}$, and let $T$ be the set of indices such that $\coeff_\ell\neq0$.
Without loss of generality, we can assume that $\sum_{\ell=1}^{\card(\support)} |\coeff_\ell|<1$. 
We will prove by contradiction that $T$ is empty.
	
Assuming that $T$ is not empty, we first prove that the sign of the coefficients $\{c_{\ell}\}_{\ell\in T}$ cannot be all equal.
To this end,  let us assume (without loss of generality) that $\coeff_\ell>0$ for all $\ell \in T$ and show that  a contradiction occurs with the hypothesis of admissibility of $\support$. 
Since $\Vobs={\bf0}_{\spaceObs}$, the function $\psi:\param \mapsto \scalprod{\atome(\param)}{\Vobs}$ is identically equal to zero. Hence, on the one hand, any point of $\paramSet$ is a maximizer.
On the other hand, since all the elements of $\{c_{\ell}\}_{\ell\in T}$ are positive and $\support$ is (by hypothesis) admissible with respect to $\kernelPaper$, we have from item~\ref{hyp:geometric:1} of \Cref{def:admissible_support} that the maximizers of $\psi$ must belong to $\support$. This implies that $\paramSet{}\subset\support$ which contradicts the definition of $\support$ and $\paramSet$. Therefore, if $T$ is not empty, not all the elements of $\{c_{\ell}\}_{\ell \in T}$ have the same sign.

We can thus partition $T$ into two non-empty disjoint subsets:
\begin{align}
	T_{+} &= \kset{\ell \in T}{c_{\ell}>0},\nonumber\\
	T_{-} &= \kset{\ell \in T}{c_{\ell}<0}. \nonumber
\end{align}
Similarly, we let 
\begin{align}
	\support_{+} &= \kset{\paramt_\ell \in \support}{\ell\in T_+},\nonumber\\
	\support_{-} &= \kset{\paramt_\ell \in \support}{\ell\in T_-}.\nonumber
\end{align}
Since the elements of $\support \supseteq \support_{-} \cup \support_{+}$ are pairwise distinct, we have $\support_{-} \cap \support_{+} = \emptyset$. 
Defining $\Vobs_{+} = \sum_{\ell \in T_{+}} \coeff_\ell\, \atome(\paramt_\ell)$ and $\Vobs_{-} = \sum_{\ell \in T_{-}} \coeff_\ell\, \atome(\paramt_\ell)$, we note that $\Vobs=\Vobs_+ + \Vobs_-$.
Using the fact that $\Vobs={\bf0}_{\spaceObs}$, one deduces that $\Vobs_+=-\Vobs_-$. 
Moreover, $\Vobs_+$ (resp. $-\Vobs_-$) is a positive linear combination of atoms with parameters in $\support_+\subset \support$ (resp. $\support_-\subset \support$).
Therefore, since $\support$ is admissible with respect to $\kernelPaper$, item~\ref{hyp:geometric:1} of \Cref{def:admissible_support} applies and we have that any maximizer of $\psi:\param\mapsto\scalprod{\atome(\param)}{\Vobs_+}=\scalprod{\atome(\param)}{-\Vobs_-}$  must belong to $\support_+\cap\support_-$.
Now, on the one hand, by \Cref{Fact:existence-maximizer}, the set of maximizers of $\psi$ cannot be empty.
On the other hand $\support_+\cap\support_-= \emptyset$.
This leads to a contradiction.
Therefore we must have $T = \emptyset$. In other words, $\Vobs{} = {\bf0}_{\spaceObs}$ implies $\coeff_\ell=0$ $\forall\ell\in\support$, so that the atoms $\{\atome(\paramt_\ell)\}_{\ell=1}^{\card(\support)}$ are linearly independent.

As a consequence of this first part of the proposition, the Gram matrix of any subset of  $\{\atome(\paramt_\ell)\}_{\ell=1}^{\card(\support)}$ is a positive definite matrix, and therefore invertible. 
In particular, the inverse of matrices $\Gtrue$ and $\Gk$ appearing in the second part of the proof is always well-defined. \\

\noindent
\textit{Proof of item~\ref{item:prop-buf-theorem-abstract-condition:erc} of \Cref{lemma:MainAbstractTheorem}}. 
Recall that, as a consequence of \Cref{def:admissible_support}, if $\support$ is admissible with respect to $\kernelPaper$, then any support $\support'\subset\support$ is also admissible.
We thus show our result by induction on the cardinality of $\support'$. 
For notational convenience, we let hereafter $\nbAtome'\triangleq \card(\support')$. 
We prove by induction on $\nbAtome'$ that:
	\begin{enumerate}[label=\textit{\alph*)}]
		\item $\kinv{\Gtrue}\vecOne[\nbAtome']$ has nonnegative entries,
		\label{item:geometric:a}
		\item $\kforall[\param\in\paramSet] \ \kinv{\Gtrue}\gtheta$ has nonnegative entries,
		\label{item:geometric:b}
		\item $\forall \param \in \paramSet\setminus\support'$,\quad $\|\kinv{\Gtrue} \gtheta\|_1 < 1$.
		\label{item:geometric:c}
	\end{enumerate}
\noindent
The quantities $\Gtrue$ and $\gtheta$ appearing above are defined in \eqref{eq:MainAbstractTheorem:Gtrue}-\eqref{eq:MainAbstractTheorem:gtheta} with the substitution $\support'\leftrightarrow \support$. Item~\ref{item:geometric:c} corresponds to result~\ref{item:prop-buf-theorem-abstract-condition:erc} of \Cref{lemma:MainAbstractTheorem}.
Items~\ref{item:geometric:a} and~\ref{item:geometric:b} are intermediate results that allow a subdivision of the proof into steps.

\paragraph{Initialization: $\nbAtome'=1$} 	
In this case, both $\Gtrue$ and $\gtheta$ are scalars.
Since $\kappa$ is admissible, we have $\Gtrue=1$ and $\gtheta \geq 0$ (cf \Cref{hyp:geometric:0}).
Therefore, items~\ref{item:geometric:a} and~\ref{item:geometric:b} are fulfilled and $\| \kinv{\Gtrue}\gtheta\|_1 = \gtheta = \kernelPaper(\param,\paramt_1)$ and, using \Cref{hyp:geometric:0}-\ref{item:admissible-kernel:decreasing}, we have $\kernelPaper(\param,\paramt_1) < 1$.
Hence, item~\ref{item:geometric:c} is also true.

\paragraph{Induction: $1 < \nbAtome'\leq \nbAtome$}
We assume items \ref{item:geometric:a}-\ref{item:geometric:b}-\ref{item:geometric:c} hold for any $\support'\subset\support$ of cardinality $\nbAtome'-1 \geq 1$.
Considering $\support'\subseteq \support$ an arbitrary support of size $\nbAtome'$, we show that items \ref{item:geometric:a}-\ref{item:geometric:b}-\ref{item:geometric:c} also hold for $\support'$.
Without loss of generality, we will assume that $\support'$ corresponds to the first $\card{(\support')}$ elements of $\support$, that is $\support'=\{\paramt_\ell\}_{\ell=1}^{\nbAtome'}$.

\noindent
We consider $\supportk = \{\paramt_\ell\}_{\ell=1}^{\nbAtome'-1} \subset \support'$ and use over-lined notations for quantities related to $\supportk$: we denote by $\Gk\in\kR^{(\nbAtome'-1)\times(\nbAtome'-1)}$, $\gktheta\in\kR^{\nbAtome'-1}$ the quantities defined in~\eqref{eq:MainAbstractTheorem:Gtrue}-\eqref{eq:MainAbstractTheorem:gtheta} for $\supportk$ and by $\Gnew{}\in\kR^{\nbAtome'\times\nbAtome'},\gnewtheta\in\kR^{\nbAtome'}$ the same quantities for $\support'$.
Likewise, the notations $\gktheta[\ell]\in\kR^{\nbAtome'-1}, \gnew[\ell']\in\kR^{\nbAtome'}$ for $\ell=1\ldots\nbAtome'-1$, $\ell'=1\ldots\nbAtome'$ will refer to the columns of $\Gk$ and $\Gnew$, respectively.
With these notations we have:
\begin{align}
	\gnewtheta \;=\;&
	\begin{pmatrix}
		\gktheta \\  \kernelPaper(\param,\paramt_{\nbAtome'})
	\end{pmatrix}
	\in \kR^{\nbAtome'} \quad \kforall \param\in\paramSet
	\label{eq:geometric:proof:recurrence:defgnew} \\
	\Gnew \;=\;&
	\begin{pmatrix}
		\Gk           & \gkthetanew \\
		\gkthetanewTransp   & 1 
	\end{pmatrix}
	\in \kR^{\nbAtome'\times\nbAtome'}
	\label{eq:geometric:proof:recurrence:defGnew}
\end{align}
where we denote $\gkthetanew\triangleq\gktheta[{\param_{k'}^\star}]$ for notational convenience. 
We note that, as mentioned above, item~\ref{item:prop-buf-theorem-abstract-condition:linear-indep} of \Cref{lemma:MainAbstractTheorem} ensures that both $\Gnew$ and $\Gk$ are invertible.

\paragraph{Item~\ref{item:geometric:a}}
We show that the last entry of $\bfu \triangleq \kinv{\Gnew}{\bf1}_{\nbAtome'}$ is positive.
Since the reasoning holds for any ordering of the $\paramt_\ell$'s, we then deduce that all the entries of $\bfu$ are positive.
Block inversion results \citep[Cor.~2.8.9]{Bernstein2005} give
\begin{equation}
	\label{eq:geometric:proof:recurrence:GnewInv}
	\kinv{\Gnew} \;=\;
	\begin{pmatrix}
		\kinv{\Gk} + s \kinv{\Gk} \gkthetanew{} \gkthetanewTransp \kinv{\Gk}
		& - s\kinv{\Gk} \gkthetanew{} \\
		- s\gkthetanewTransp \kinv{\Gk} &  s
	\end{pmatrix}
	,
\end{equation}
where $s \triangleq  \shortcutgkinv{}$.
Notice that
\begin{align}
	\gkthetanewTransp\kinv{\Gk}\gkthetanew{}
	\;\leq \;
	& 
	\kvvbar{ \gkthetanew{} }_\infty \, \kvvbar{ \kinv{\Gk}\gkthetanew{} }_1
	\nonumber \\
	\;\leq \;
	& 
	\kvvbar{ \kinv{\Gk}\gkthetanew{} }_1 
	\nonumber \\
	\;<\;
	& 1.
	\label{eq:geometric:proof:recurrence:showThat_s_isPositive2}
\end{align}
The first inequality is a consequence of Hölder's inequality, the second of \Cref{hyp:geometric:0} and the third follows from induction hypothesis \ref{item:geometric:c}. 
Hence $s>0$.

\noindent
The last entry of $\bfu = \kinv{\Gnew}{\bf1}_{\nbAtome'}$ now writes $\bfu\element{\nbAtome'} \;=\; s ( 1 - \gkthetanewTransp \kinv{\Gk} \vone{}_{\nbAtome'-1})$.
By induction hypothesis~\ref{item:geometric:b}, we have $\|\kinv{\Gk} \gkthetanew{}\|_1 =\gkthetanewTransp \kinv{\Gk} \vone{}_{\nbAtome'-1}$.
Using~\eqref{eq:geometric:proof:recurrence:showThat_s_isPositive2} and the fact that $s>0$, we thus have $\bfu\element{\nbAtome'}>0$.

\paragraph{Item~\ref{item:geometric:b}}
We first show that the last entry of $\bfv \triangleq \kinv{\Gnew}\gnewtheta$ is non-negative.
Given the decomposition of $\Gnew$ in~\eqref{eq:geometric:proof:recurrence:GnewInv}, the last entry of $\kinv{\Gnew}\gnewtheta$ writes
\begin{equation}
	\bfv\element{\nbAtome'}
	= 
	s \kparen{ \gnewtheta\element{\nbAtome'}  - \gkthetanewTransp \kinv{\Gk} \gktheta }
	=
	s \kparen{ \kernelPaper(\param,\paramt_{\nbAtome'})  - \gkthetanewTransp \kinv{\Gk} \gktheta }
	.
\end{equation}

Since $s>0$ (see~\eqref{eq:geometric:proof:recurrence:showThat_s_isPositive2}) it is then sufficient to show that $\kernelPaper(\param, \paramt_{\nbAtome'}) - \ktranspose{\overline{\bfv}}\gktheta{} \geq 0$, where $\overline{\bfv}\triangleq \kinv{\Gk}\gkthetanew$, in order to show that $\bfv\element{\nbAtome'}\geq0$.
This will be achieved by studying this quantity seen as a function of $\param$.
Consider $T \subseteq \intervint{1}{\nbAtome'-1}$ the (possibly empty) set defined by $T\triangleq\kset{\ell}{\overline{\bfv}\element{\ell}\neq0}$ and define 
\begin{equation}
    \kfuncdef{\condSumPositivKernelSymb_1}{
	    \paramSet}{\kR+}[
		\param][\ktranspose{\overline{\bfv}}\gktheta{}
	= 
	\sum_{\ell=1}^{\nbAtome'-1} \overline{\bfv}[\ell] \kernelPaper(\param, \paramt_{\ell})
	= \sum_{\ell \in T} \overline{\bfv}[\ell] \kernelPaper(\param,\paramt_{\ell}).
	]
\end{equation}
Notice that:
\begin{itemize}
    \item $\condSumPositivKernelSymb_1(\param) = \gkthetanewTransp \kinv{\Gk} \gktheta{}$,
    \item the entries of $\overline{\bfv}$ are nonnegative by the induction hypothesis~\ref{item:geometric:b}.
	Moreover, from induction hypothesis~\ref{item:geometric:c}, we have $\sum_{\ell=1}^{\nbAtome'-1}\overline{\bfv}[\ell]= \|\kinv{\Gk}\gkthetanew\|_1<1$,
	\item for $j \in \intervint{1}{\nbAtome'-1}$ and $\param = \paramt_{j}$ we have $\gktheta = \gktheta[j] = \Gk \overline{\bfe}_{j}$, where $\overline{\bfe}_j$ is the $j$-th canonical vector of $\kR^{\nbAtome'-1}$.
	Hence, $\condSumPositivKernelSymb_1(\paramt_j) = \gkthetanewTransp \kinv{\Gk}\gktheta[j] = \gkthetanewTransp \overline{\bfe}_j = \kernelPaper(\paramt_j,\paramt_{\nbAtome'})$ and $\condSumPositivKernelSymb_1(\paramt_j) - \kernelPaper(\paramt_j,\paramt_{\nbAtome'}) = 0$ $\forall j\neq \nbAtome'$.
\end{itemize}
Since $\support$ is admissible with respect to $\kernelPaper$, we can apply item~\ref{hyp:geometric:2} of \Cref{def:admissible_support} with $\ell = \nbAtome'$ to any $\emptyset \neq T \subseteq \intervint{1}{\nbAtome'-1}$. This leads to: 
\begin{equation}
	\label{eq:proof:th2:eq-related-to-item-ii}
	\kernelPaper(\param, \paramt_{\nbAtome'}) - \ktranspose{\overline{\bfv}}\gktheta{} 
	= \kernelPaper(\param,\paramt_{\nbAtome'})  - \condSumPositivKernelSymb_1(\param)
	\geq 0
\end{equation}
for all $\param\in\Theta$.
The same obviously holds if $T$ is empty as $\condSumPositivKernelSymb_1(\param)$ is identically zero and the admissibility of $\kernelPaper$ implies that it is nonnegative (see \Cref{hyp:geometric:0}).
Since this result does not depend on the ordering of the $\paramt_\ell$'s, we can finally conclude that all the elements of $\kinv{\Gtrue}\gnewtheta$ are nonnegative.

\paragraph{Item~\ref{item:geometric:c}}
Let
\begin{equation}
	\kfuncdef[m]{\condSumPositivKernelSymb_2}{\paramSet{}}{\bbR}[\param][\kvvbar{\kinv{\Gnew} \gnewtheta{}}_1].
\end{equation}
We need to prove that $\condSumPositivKernelSymb_2(\param) < 1$ for all $\param \notin \support'$.

From item~\ref{item:geometric:b}, we know that $\kinv{\Gnew} \gnewtheta{}$ has nonnegative entries, so that $\|\kinv{\Gnew} \gnewtheta{}\|_1 = \ktranspose{\bf1}_{\nbAtome'} \kinv{\Gnew} \gnewtheta{}$. 
Letting $\bfu \triangleq \kinv{\Gnew}{\bf1}_{\nbAtome'}\in\kR^{\nbAtome'}$, the function $\condSumPositivKernelSymb_2(\param)$ can then be written as
\begin{equation}
	\condSumPositivKernelSymb_2(\param) \triangleq \sum_{\ell=1}^{\nbAtome'} \bfu[\ell] \kernelPaper\kparen{\param, \paramt_\ell} .
\end{equation}
Moreover we have $\bfu[\ell]\geq 0$ $\forall \ell$ since we showed in item~\ref{item:geometric:a} that $\kinv{\Gnew}{\bf1}_{\nbAtome'} = \bfu$ has nonnegative entries. 
We also note that $\bfu\neq {\bf0}_{\nbAtome'}$ since $\Gnew \bfu = {\bf1}_{\nbAtome'}$.

Applying item~\ref{hyp:geometric:1} of~\Cref{def:admissible_support} together with the comment in \Cref{footnote:positive-combili}, we have that the maximizer of $\condSumPositivKernelSymb_2(\param)$ must belong to $\kset{\paramt_\ell}{\bfu\element{\ell}\neq0}  \subseteq \support'$. 
Now,
\begin{equation}
	\kforall[j\in\intervint{1}{\nbAtome'}]\quad \psi_2(\paramt_{j})=\ktranspose{\bf1}_{\nbAtome'}\kinv{\Gnew}\gnewtheta[\paramt_{j}] = \ktranspose{\boldsymbol{1}}_{\nbAtome'} \bfe_{j} = 1.
\end{equation}	
Therefore, $\condSumPositivKernelSymb_2(\param) < 1$ for all $\param \notin\support'$.

\conditionalPagebreak


\section{Proofs related to CMF dictionaries}
	\label{sec:app:proofs-CMF}

This appendix contains the proofs of the results related to CMF dictionaries presented in \Cref{sec:CMF_dico,subsec:contrib:ksparserecov}. We first state and prove a technical lemma which will used in the proofs of \Cref{lemma:pd_CMF} and \Cref{th:contrib:cmf_uniformRecov_1D}:

\begin{lemma}\label{lemma:strictCMF}
	Let $\cmf$ be a CMF such that $\cmf(0) = 1$ and $\lim_{t \to \infty}\cmf(t) = 0$. Then the Borel measure $\nu$ appearing in the integral representation of the CMF in \Cref{lemma:geometric:cmf:integral_representation} is nonzero and satisfies $\nu(\{0\})=0$ and $\nu(\kR+^{*})=1$. Moreover, $\cmf$ is strictly positive and strictly decreasing on $\kR+$, with $\cmf[][(1)](t)<0$ on $\kR_{+}^{*}$.
\end{lemma}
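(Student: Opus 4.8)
The plan is to extract everything from the Bernstein–Widder representation (\Cref{lemma:geometric:cmf:integral_representation}): there is a non-negative finite Borel measure $\nu$ on $\kR+$ such that $\cmf(x) = \int_{0}^{\infty} \cste^{-u x}\diff\nu(u)$ for every $x \ge 0$, and each claimed property can be read off this formula once the location of the mass of $\nu$ is pinned down.

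First I would evaluate at $x=0$: this gives $1 = \cmf(0) = \int_{0}^{\infty}\diff\nu(u) = \nu(\kR+)$, so $\nu$ is a probability measure, in particular nonzero. Next I would let $x\to\infty$. Since $0\le \cste^{-ux}\le 1$ for $u\ge 0$, $x\ge 0$ and $\nu$ is finite, the dominated convergence theorem yields $\lim_{x\to\infty}\cmf(x) = \int_{0}^{\infty}\bigl(\lim_{x\to\infty}\cste^{-ux}\bigr)\diff\nu(u) = \nu(\{0\})$, using that $\cste^{-ux}\to 1$ if $u=0$ and $\cste^{-ux}\to 0$ if $u>0$. The hypothesis $\lim_{x\to\infty}\cmf(x)=0$ then forces $\nu(\{0\})=0$, hence $\nu(\kR*+)=\nu(\kR+)-\nu(\{0\})=1$. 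In particular the representation may be rewritten as an integral over $\kR*+$ only, on which the integrand $u\mapsto \cste^{-ux}$ is strictly positive.

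The remaining assertions then all follow from the elementary fact that the integral of a strictly positive function against a measure giving positive mass to its domain is strictly positive. For $t\ge 0$ we obtain $\cmf(t)=\int_{\kR*+}\cste^{-ut}\diff\nu(u)>0$ because $\nu(\kR*+)=1>0$. For $0\le s<t$ the integrand $u\mapsto \cste^{-us}-\cste^{-ut}$ is strictly positive on $\kR*+$, so $\cmf(s)-\cmf(t)=\int_{\kR*+}(\cste^{-us}-\cste^{-ut})\diff\nu(u)>0$, which gives strict decrease on $\kR+$. Finally, for $t>0$ I would differentiate under the integral sign: on any set $t\ge t_0>0$ the $t$-derivative $-u\,\cste^{-ut}$ of the integrand is bounded in modulus by $u\,\cste^{-ut_0}\le (\cste\,t_0)^{-1}$, a $u$-independent constant and hence $\nu$-integrable, so $\cmf[][(1)](t)=-\int_{\kR*+} u\,\cste^{-ut}\diff\nu(u)<0$, once more because the integrand is strictly positive on $\kR*+$.

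The only points requiring a little care are the two interchanges of a limiting operation with the integral — first a limit in $x$, then a derivative in $t$ — but each is a routine invocation of dominated convergence, justified by the finiteness of $\nu$ together with the explicit bounds $\cste^{-ux}\le 1$ and $u\,\cste^{-ut}\le (\cste\,t_0)^{-1}$ noted above. Granted these, every conclusion of the lemma is immediate.
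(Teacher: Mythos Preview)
Your proof is correct and follows essentially the same route as the paper's, extracting all the claims from the Bernstein--Widder representation. The only notable difference is in establishing $\cmf^{(1)}(t)<0$: you differentiate under the integral sign and read off the sign directly, whereas the paper argues by contradiction using the CMF property that $\cmf^{(1)}$ is non-decreasing on $\kR_{+}^{*}$ (so $\cmf^{(1)}(t_0)=0$ would force $\cmf$ to be eventually constant and strictly positive, contradicting $\cmf(t)\to 0$).
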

\begin{proof}
	By the integral representation of \Cref{lemma:geometric:cmf:integral_representation} we have $\nu(\kR_{+}) = \cmf(0)=1$ hence $\nu$ is nonzero. Moreover $\cmf(x) = \int_{0}^{+\infty} \cste^{-ux} d\nu(u) \geq \nu(\{0\}) \geq 0$  for each $x \geq 0$.
	As $\lim_{x \to +\infty} \varphi(x)=0$, it follows that $\nu(\{0\})=0$ and therefore $\nu(\kR+^{*})=1$.
	This proves the first part of the statement. 

	The positivy of $\varphi$ follows from the fact that $\nu$ is nonzero and $\cste^{-ux}>0$ for all $u,x \geq 0$.
	Hence, the integral representation \eqref{eq:geometric:cmf:integral_representation} of $\varphi$ yields $\cmf(x)>0$ for each $x>0$.
	Finally, we prove by contradiction that $\cmf[][(1)](t)<0$ on $\kR_{+}$. 

	Assume the existence of $t_{0}>0$ such that $\cmf[][(1)](t_{0})=0$.
	As $\cmf[][(1)]$ is continuous and non-decreasing on $\kR_{+}^{*}$ with $\cmf[][(1)](t)\leq 0$ for each $t>0$, it follows that $\cmf[][(1)](t)=0$ for each $t \geq t_{0}$, hence $\cmf(t) = \cmf(t_{0})$ for each $t \geq t_{0}$. 
	As we have just seen, we have $\cmf(t_{0})>0$, hence this contradicts the assumption $\lim_{t \to +\infty} \cmf(t) = 0$.
\end{proof}

\conditionalPagebreak

\subsection{Proof of \texorpdfstring{\Cref{lemma:pd_CMF}}{Lemma~\ref{lemma:pd_CMF}}}
\label{sec:app:cmf_related}

\newcommand{\varMuette}{\omega}{}

The outline of the proof is as follows.
We first show that for any $0<p\leq1$ and $\boldsymbol{\omega}\in\kR^{\dimParam}$, the quantity $\cste^{-\varIntCmf\|\boldsymbol{\omega}\|^p_p}$ is related to the characteristic function of some $\dimParam$-dimensional random vector $\bfZ_u$.
We then use this formulation of $\cste^{-\varIntCmf\|\boldsymbol{\omega}\|^p_p}$ as a characteristic function together with the Bernstein-Widder representation of CMFs (see \Cref{lemma:geometric:cmf:integral_representation}) to show that the function $\varphi(\|\boldsymbol{\omega}\|_p^p)$ is positive definite.

\paragraph{Proof that $\cste^{-\varIntCmf\|\boldsymbol{\omega}\|^p_p}$ is the characteristic function of some random vector $\bfZ_u$}
In probability theory, the characteristic function of a real-valued random vector $\bfZ \in \kR^{\dimParam}$ is the function  $\boldsymbol{\omega} \in \kR^{\dimParam}\mapsto \bbE_{\bfZ}[\cste^{\csti \ktranspose{\boldsymbol{\omega}} \bfZ}]$ where $\bbE$ denotes the expectation operator and $\csti$ is the imaginary number.

First we consider for any $u \geq 0$ the scalar-valued function 
\begin{equation*}
	\kfuncdef[m]{\Phi_u}{\kR}{\kR}[\varMuette][\cste^{-\varIntCmf \kvbar{\varMuette}^p}]
\end{equation*}
and show that for $u >0$ it is the characteristic function of some random variable $Z_u$ which admits a density with respect to the Lebesgue measure.
Our proof leverages a result due to P\'olya \protect{\cite[Th.~1]{polya1949}}.
We reproduce this result hereafter for self-containedness of the paper: 
\begin{theorem}
	\label{th:app:cmf:polya_theorem}
	Let $\Phi$ be a real-valued function defined on $\kR$ and such that:
	\begin{itemize}
		\item $\Phi$ is continuous and even, 
		\item $\Phi$ is  convex on $\kR^{*}_{+}$,
		\item $\Phi{}(0) = 1$, 
		\item $\displaystyle \lim_{\varMuette\to+\infty} \Phi(\varMuette) = 0$.
	\end{itemize}
	Then, $\Phi$ is the characteristic function of some random variable which admits a density with respect to the Lebesgue measure. Moreover this density is even and continuous everywhere, except possibly at zero. 
\end{theorem}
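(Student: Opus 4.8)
The plan is to exhibit $\Phi$ as an average of elementary ``triangle'' characteristic functions. For $a>0$ let $T_a(\varMuette)\triangleq\max(1-\kvbar{\varMuette}/a,0)$; a routine Fourier computation identifies $T_a$ with the characteristic function of the squared-sinc (Fejér) density $f_a(x)=\tfrac{1-\cos(ax)}{\pi a x^{2}}$, which is non-negative, even, continuous on $\kR$, and obeys the uniform bound $f_a(x)\le\min\!\big(\tfrac{a}{2\pi},\tfrac{2}{\pi a x^{2}}\big)\le\tfrac{1}{\pi\kvbar{x}}$. The whole proof rests on writing
\[
	\Phi(\varMuette)=\int_{0}^{+\infty}T_a(\varMuette)\,\diff\mu(a)\qquad\text{for all }\varMuette\in\kR
\]
for a suitable Borel \emph{probability} measure $\mu$ on $(0,+\infty)$, and then exchanging the $a$- and $x$-integrals.

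First I would extract the regularity structure of $\Phi$: convexity on $(0,+\infty)$ makes the right derivative $\Phi'$ non-decreasing, and together with $\Phi(\varMuette)\to0$ this forces $\Phi'\le0$, so $\Phi$ is non-negative and non-increasing there; continuity of $\Phi$ at $0$ combined with convexity yields the boundary estimates $a\Phi'(a)\to0$ as $a\to0^{+}$ and as $a\to+\infty$ (for instance $0\le-\tfrac{a}{2}\Phi'(a)\le\Phi(a/2)-\Phi(a)\to0$, and $0\le-a\Phi'(2a)\le\Phi(a)-\Phi(2a)\to0$). Then I would take $\diff\mu(a)\triangleq a\,\diff\Phi'(a)$, a non-negative Lebesgue--Stieltjes measure, and verify the displayed identity by two Stieltjes integrations by parts: for $\varMuette>0$ its left-hand side equals $\int_{\varMuette}^{+\infty}(a-\varMuette)\,\diff\Phi'(a)=\Phi(\varMuette)$, and $\varMuette=0$ gives $\mu((0,+\infty))=\Phi(0)=1$. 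Evenness of $\Phi$ and of each $T_a$ extends the identity from $\varMuette\ge0$ to all of $\kR$.

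It then remains to push the integral through. Writing $T_a(\varMuette)=\int_{\kR}\cste^{\csti\varMuette x}f_a(x)\,\diff x$ and invoking Tonelli (legitimate since $f_a\ge0$ and $\mu$ is finite), one obtains $\Phi(\varMuette)=\int_{\kR}\cste^{\csti\varMuette x}f(x)\,\diff x$ with $f(x)\triangleq\int_{0}^{+\infty}f_a(x)\,\diff\mu(a)\ge0$ and $\int_{\kR}f=\Phi(0)=1$; hence $\Phi$ is the characteristic function of the distribution with density $f$. Evenness of $f$ is inherited from that of the $f_a$; continuity of $f$ on $\kR\setminus\{0\}$ follows from dominated convergence, using $f_a(x)\le\tfrac{1}{\pi\kvbar{x}}$ as a $\mu$-integrable dominant; and at the origin one only has $f(0)=\tfrac{1}{2\pi}\int_{0}^{+\infty}a\,\diff\mu(a)$, which may be $+\infty$ — precisely the ``except possibly at zero'' caveat.

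I expect the main obstacle to be making the mixture representation rigorous without assuming $\Phi$ differentiable: one must handle the one-sided derivative of a convex function, the possibly atomic Stieltjes measure $\diff\Phi'$, and the boundary terms in the two integrations by parts. Should that bookkeeping get heavy, a cleaner fallback is to first prove the statement for piecewise-linear convex $\Phi$ — which are literally finite convex combinations of the $T_a$'s, hence trivially characteristic functions — then approximate a general $\Phi$ uniformly on compact sets by such functions (possible thanks to convexity, monotonicity and the normalisations $\Phi(0)=1$, $\Phi(+\infty)=0$) and conclude with Lévy's continuity theorem, whose hypotheses are met since the limit $\Phi$ is continuous at $0$.
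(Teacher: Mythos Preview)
The paper does not actually prove this statement: it is quoted as a classical result of P\'olya \cite[Th.~1]{polya1949} and merely reproduced for self-containedness, with no proof given. So there is no ``paper's own proof'' to compare against.

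That said, your argument is essentially the standard proof of P\'olya's criterion (as found e.g.\ in Feller or Durrett): represent $\Phi$ as a mixture $\int_0^\infty T_a\,\diff\mu(a)$ of triangle functions via the Lebesgue--Stieltjes measure $\diff\mu(a)=a\,\diff\Phi'(a)$, identify each $T_a$ with the Fej\'er density, and push the mixture through by Tonelli. The boundary estimates $a\Phi'(a)\to0$ at $0^{+}$ and $+\infty$ and the integration-by-parts bookkeeping are handled correctly, and your fallback via piecewise-linear approximation plus L\'evy continuity is exactly the alternative route one finds in the literature when one wants to avoid the Stieltjes calculus. The proof is sound; the only caveat is that the dominated-convergence step for continuity of $f$ on $\kR\setminus\{0\}$ uses the bound $f_a(x)\le\tfrac{1}{\pi|x|}$, which is a constant (in $a$) and hence $\mu$-integrable precisely because $\mu$ is a probability measure --- you state this but it is worth making explicit.
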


\noindent
Observe that $\Phi_{u}$ is even, continuous and verifies $\Phi_u(0) = 1$ and $\lim_{\varMuette\to+\infty} \Phi_u(\varMuette) = 0$ since $u>0$. 
Moreover, for $p\in\kintervoc{0}{1}$ and $\varMuette>0$, its second derivative on $\kR+$ is 
\begin{equation*}
	\Phi_u^{(2)}(\varMuette) = \varIntCmf p\,\varMuette^{p-2} \kparen{(1-p) + \varIntCmf p\varMuette^p }\cste^{-\varIntCmf \varMuette^p} .
\end{equation*}
Hence, $\Phi_u^{(2)}(\varMuette) > 0$ for all $\varMuette > 0$ and $\Phi_u$ is convex on $\kR^{*}_+$. 
As a consequence, $\Phi_{u}$ satisfies the assumptions of \Cref{th:app:cmf:polya_theorem} and it is the characteristic function of some scalar random variable $Z_u$ which admits a (continuous, except possibly at zero) density with respect to the Lebesgue measure, that is
\begin{equation*} 
	\Phi_u(\varMuette)  = \bbE_{Z_{\varIntCmf}} \kbracket{ \cste^{\csti \varMuette Z_{u}} } 
	.
\end{equation*} 

We are now ready to show that the function $\cste^{-\varIntCmf\|\boldsymbol{\omega}\|^p_p}$ is the characteristic function of some random vector $\bfZ_u$.
To this end, let us define the random vector $\bfZ_u = \inlinevec{Z_u^1,\dotsc,Z_u^\dimParam}$ as the concatenation of $\dimParam$ independent copies of $Z_u$.
We thus have 
\begin{equation*}
	\begin{split}
		\bbE_{\bfZ_u} \kbracket{ \cste^{ \csti  \ktranspose{\boldsymbol{\varMuette}} \bfZ_u} }
		=
		\prod_{d=1}^\dimParam \bbE_{Z_u^d} \kbracket{ \cste^{ \csti \boldsymbol{\varMuette}\element{d} Z_u^d} }
		=
		\prod_{d=1}^\dimParam \Phi_u(\boldsymbol{\varMuette}\element{d})
		=
		\cste^{-\varIntCmf \kvvbar{\boldsymbol{\varMuette}}_p^p}.
	\end{split}
\end{equation*}
We note that for all $u>0$ and $\varMuette \in \kR$ we have $\Phi_u(\varMuette) = \Phi_1(u^{{1}/{p}}\varMuette)$. 
Hence, the function $\Phi_u(\varMuette)$ can also be written as an expectation with respect to the random variable $Z_1$ for all $u>0$ and $\varMuette \in \kR$: 
\begin{equation}
	\label{eq:exp=expectation_wrt_Z1}
	\Phi_u(\varMuette) = \bbE_{Z_1} \kbracket{ \cste^{\csti u^{1/p}\varMuette Z_1} }
	.
\end{equation}
\Cref{eq:exp=expectation_wrt_Z1} obviously also holds for $u=0$ since both sides of the equality are equal to 1 in that case. Using \eqref{eq:exp=expectation_wrt_Z1}, we obtain
\begin{equation}
	\label{eq:proof_cmf_kernel:exp_to_characteristicFunc}
	\bbE_{\bfZ_1} \kbracket{ \cste^{\csti  u^{1/p}\ktranspose{\boldsymbol{\varMuette}} \bfZ_1} }
	= \cste^{-\varIntCmf \kvvbar{\boldsymbol{\varMuette}}_p^p} \qquad \forall u\geq 0,
\end{equation}
where $\bfZ_1 = \inlinevec{Z_1^1,\dotsc,Z_1^\dimParam}$ is the concatenation of $\dimParam$ independent copies of $Z_1$. 
We will use the latter representation in the second part of the proof.

\paragraph{Proof that $\varphi(\|\boldsymbol{\varMuette}\|_p^p)$ is a positive definite function}
We want to show that for any $\nbAtome\in\kN$, any $\{\param_\ell\}_{\ell=1}^\nbAtome\subset\kR^\dimParam$ and any $\coeffv\in\kC^{\nbAtome}\setminus\kbrace{{\bf0}_{\nbAtome}}$, we have 
\begin{equation*}
	{\coeffv}^H\Gtrue\coeffv > 0,
\end{equation*}
where $(\cdot)^H$ denotes the conjugate transpose operator and 
\begin{equation*}
	\Gtrue\element{\ell, \ell'} \triangleq 
	\varphi(\|\param_{\ell'}-\param_\ell\|_p^p)
	\quad \forall\ell,\ell'\in\intervint{1}{\nbAtome}. 
\end{equation*}
Note that in practice this will only be used for real-valued coefficients, but the result is established for complex-values $\coeffv$ to fit with the standard definition of a positive definite function. 
Since $\cmf$ is a CMF, \Cref{lemma:geometric:cmf:integral_representation} ensures the existence of a non-negative finite  Borel measure $\nu$ such that for all $\boldsymbol{\varMuette}\in\kR^\dimParam$:
\begin{equation*}
	\varphi(\|\boldsymbol{\varMuette}\|_p^p) = \int_0^{+\infty} \cste^{-\varIntCmf \kvvbar{\boldsymbol{\varMuette}}_p^p } \diff\nu(\varIntCmf).
\end{equation*}
Hence,
\begin{align}
	\coeffv^H\Gtrue\coeffv
	\;\overset{\hphantom{\eqref{eq:proof_cmf_kernel:exp_to_characteristicFunc}}}{=}\;& 
	\sum_{\ell=1}^{\nbAtome} \sum_{\ell'=1}^{\nbAtome} \coeffv\element{\ell}^* \coeffv\element{\ell'} \varphi\kparen{\|\param_{\ell'}- \param_\ell\|_p^p}  \nonumber \\
	\;\overset{\hphantom{\eqref{eq:proof_cmf_kernel:exp_to_characteristicFunc}}}{=}\;& 
	\sum_{\ell=1}^{\nbAtome} \sum_{\ell'=1}^{\nbAtome} \coeffv\element{\ell}^* \coeffv\element{\ell'} \int_0^{+\infty} \cste^{-\varIntCmf \kvvbar{\param_{\ell'} - \param_\ell}_p^p } \diff\nu(\varIntCmf) \label{eq:psf_int1}\\
	\;\overset{\hphantom{\eqref{eq:proof_cmf_kernel:exp_to_characteristicFunc}}}{=}\;& 
	\sum_{\ell=1}^{\nbAtome} \sum_{\ell'=1}^{\nbAtome} \coeffv\element{\ell}^* \coeffv\element{\ell'} \int_0^{+\infty} 
	\mathbb{E}_{\bfZ_{1}} \kbracket{ \cste^{\csti{u^{1/p}}\ktranspose{(\param_{\ell'} - \param_\ell)}\bfZ_1} }
	\diff\nu(\varIntCmf) \label{eq:psf_int2}
	,
\end{align}
where the last equality follows \eqref{eq:proof_cmf_kernel:exp_to_characteristicFunc}. 
By linearity of the expectation, it follows:
\begin{align}
	\coeffv^H\Gtrue\coeffv
	\;=\;&
	\int_0^{+\infty} \bbE_{ \bfZ_1} \kbracket{ \sum_{\ell=1}^{\nbAtome} \sum_{\ell'=1}^{\nbAtome} \coeffv\element{\ell}^* \coeffv\element{\ell'}   \cste^{- \csti {u^{1/p}}\ktranspose{\kparen{\param_\ell - \param_{\ell'}}} \bfZ_1 } } \, \diff\nu(\varIntCmf) \nonumber \\
	\;=\;&
	\int_0^{+\infty} \mathbb{E}_{ \bfZ_1} \kbracket{  
		\kparen{\sum_{\ell=1}^{\nbAtome} \coeffv\element\ell \cste^{ \csti {u^{1/p}}\ktranspose{\param}_\ell \bfZ_1 }}^*
		\kparen{\sum_{\ell'=1}^{\nbAtome} \coeffv\element{\ell'} \cste^{ \csti {u^{1/p}}\ktranspose{\param}_{\ell'} \bfZ_1 }}
	} \, \diff\nu(\varIntCmf) \nonumber \\
	\;=\;&
	\int_0^{+\infty} \mathbb{E}_{\bfZ_1} \kbracket{  
		\kvbar{\sum_{\ell=1}^{\nbAtome{}} \coeffv\element\ell \cste^{ \csti {u^{1/p}}\ktranspose{\param}_\ell \bfZ_1 }}^2 
	} \, \diff\nu(\varIntCmf) \label{eq:app:cmf:psdkernelallnorm}
	\ \geq 0.
\end{align}
Since this holds for any $\coeffv\in \kC^\nbAtome$ this shows that $\varphi(\|\cdot\|_p^p)$ is  positive semi-definite. 

To establish that $\varphi(\|\cdot\|_p^p)$ is a positive definite function we now show that the equality in \eqref{eq:app:cmf:psdkernelallnorm} can only occur if $\coeffv = {\bf0}_\nbAtome$. 
To this end, denote $\psi(\bfz) \triangleq |\sum_{\ell=1}^{\nbAtome} \coeffv\element{\ell} \cste^{\csti \ktranspose{\param}_\ell \bfz }|^2$ and $\Psi(u)\triangleq  \mathbb{E}_{\bfZ_1} \kbracket{\psi(u^{1/p}\bfZ_1)}$ for $u\in\kR+$, and assume that equality holds in \eqref{eq:app:cmf:psdkernelallnorm}, that is to say $\int_{0}^{+\infty} \Psi(u) \diff\nu(u) = 0$. We next show that this implies that $\coeffv = {\bf0}_\nbAtome$. 

First, we note that $\Psi$ is continuous since
\begin{align*}
	\Psi(u) 
	&= \mathbb{E}_{\bfZ_1} \kbracket{\psi(u^{1/p}\bfZ_1)}\\
	&= \sum_{\ell=1}^{\nbAtome{}}\sum_{\ell'=1}^{\nbAtome{}} \coeffv\element\ell^* \coeffv\element{\ell'}\,\mathbb{E}_{\bfZ_1} \kbracket{\cste^{ \csti {u^{1/p}}\ktranspose{(\param_{\ell'}-\param_{\ell})} \bfZ_1 }}\\
	&= \sum_{\ell=1}^{\nbAtome{}}\sum_{\ell'=1}^{\nbAtome{}} \coeffv\element\ell^* \coeffv\element{\ell'}\,\cste^{ -\csti u \|\param_\ell-\param_{\ell'}\|_p^p }
\end{align*}
where the last equality follows \eqref{eq:proof_cmf_kernel:exp_to_characteristicFunc}.

Second, since $\cmf$ is a CMF satisfying $\cmf(0) = 1$ and $\lim_{x \to \infty} \cmf(x)=0$, by \Cref{lemma:strictCMF} the non-negative finite Borel measure $\nu$ satisfies $\nu(\kR+^{*})=1$.  
Since $\kR_{+}^{*} = \cup_{n \geq 1} [n,n+1] \bigcup \cup_{n \geq 1} [1/(n+1),1/n]$, there must exist $n\geq 1$ such that either $\nu\left([n,n+1]\right)>0$ or $\nu\left([1/(n+1),1/n]\right)>0$. Without loss of generality consider the case $\nu\left([n,n+1]\right)>0$ (the other one can be treated similarly).~Since $\Psi$ is continuous over the compact set $[n,n+1]$, it attains its infimum over $[n,n+1]$ at some $u_0\in[n,n+1]$. Now, because $\Psi(u)\geq 0$ for every $u \geq 0$ we have
\begin{equation*}
	0 = \int_{0}^{+\infty} \Psi(u) \diff\nu(u) \geq \int_{n}^{n+1} \Psi(u) \diff\nu(u) \geq \Psi(u_0) \nu([n,n+1])
\end{equation*}
and we obtain $\Psi(u_0)=0$ since $\nu([n,n+1])>0$. By construction $u_{0}>0$. 

Finally, we have that:
\begin{itemize}
	\item the distribution of $\bfZ_{1}$ has a density with respect to the Lebesgue measure and its density is continuous, except possibly at points where one coordinate vanishes. 
	\item $\psi$ is nonnegative and continuous (by construction as the squared modulus of a finite linear combination of exponentials).
\end{itemize}
Hence, using the definition of $\Psi(u_0) = \bbE_{\bfZ_1}[\psi(u_0^{1/p}\bfZ_1)]$, we deduce that there exist ${\bfz}_0\in\kR^\dimParam$ (with non-vanishing coordinates) and $r>0$ such that $\psi(\bfz)=0$ $\forall\bfz\in\calB(\bfz_0, r)$, where $\calB(\bfz_0, r)$ is the open ball of radius $r$  centered at $\bfz_0$. 
For any $\bfy\neq {\bf0}_\dimParam$ and $0\leq t\leq {r}/{\|\bfy\|_2}$, we have $\bfz_0+t\bfy \in \calB(\bfz_0,r)$ and therefore
\begin{equation}\label{eq:analyticfunction=0}
	\sum_{\ell=1}^{\nbAtome} \coeffv\element\ell \cste^{\csti \ktranspose{\param}_\ell (\bfz_{0}+t\bfy)}= 0 \quad \forall t \in [{0},{{r}/{\|\bfy\|_2}}].
\end{equation}
If $\bfy$ is such that $\ktranspose{\param}_\ell\bfy \neq \ktranspose{\param}_{\ell'}\bfy$ for all $\ell\neq \ell'$, then the functions $\{t \mapsto \cste^{ \csti t\ktranspose{\param}_\ell\bfy}\}_{\ell=1}^\nbAtome$ are linearly independent on $[{0},{{r}/{\|\bfy\|_2}}]$ and \eqref{eq:analyticfunction=0} holds if and only if $\coeffv={\bf0}_{\nbAtome}$.

It thus remains to show that there exists some $\bfy\in\kR^\dimParam$ such that $\ktranspose{\param}_\ell\bfy \neq \ktranspose{\param}_{\ell'}\bfy$ for all $\ell\neq \ell'$. 
To this end, let us consider the following finite set of vectors: 
\begin{equation}
	\calN\triangleq\kset{\param_\ell - \param_{\ell'}}{\ell,\ell'\in\intervint{1}{\nbAtome}, \ell\neq \ell'}.
\end{equation}
As the parameters $\param_\ell$'s are pairwise distinct, each $\bfn\in\calN$ is nonzero. Denote  $H_\bfn$ the linear hyperplane whose normal vector is $\bfn$, and consider $H\triangleq \cup_{\bfn \in \calN} H_\bfn$.
Since $H$ is the union of a finite number of $\dimParam$-dimensional hyperplanes, $\kR^{\dimParam}\setminus H$ is not empty. 
Consider $\bfy\in\kR^{\dimParam}\setminus H$.
Then, by construction, $\ktranspose{\bfn}\bfy\neq0$ for all $\bfn\in\calN$ and therefore $\ktranspose{\param}_\ell{\bfy} \neq \ktranspose{\param}_{\ell'}{\bfy}$ for all $\ell\neq \ell'$.
This concludes the proof.

\conditionalPagebreak

\subsection{{Proof of \texorpdfstring{\Cref{th:contrib:cmf_uniformRecov_1D}}{Theorem~\ref{th:contrib:cmf_uniformRecov_1D}}}}
	\label{subsubsec:uniformRecovCmfDim1}

{
\newcommand{\maximizer}{\param_m}

{
\renewenvironment{proof}{}{\qed}
\newcommand{\ubreak}{u_\varepsilon}

Our proof leverages \Cref{th:MainAbstractTheorem} by showing that if $\dico$ is a CMF dictionary in dimension 1 with induced kernel $\kernelPaper$, then: 
\begin{enumerate}[label=\textit{\alph*)}]
	\item[\textit{a)}] $\kernelPaper$ is admissible in the sense of \Cref{hyp:geometric:0},
	\item[\textit{b)}] any finite support $\supportt=\{\paramt_\ell\}_{\ell=1}^{\nbAtome}$ is admissible with respect to kernel $\kernelPaper$ in the sense of \Cref{def:admissible_support}. 
\end{enumerate}
The result stated in \Cref{th:contrib:cmf_uniformRecov_1D} is then a direct consequence of \Cref{th:MainAbstractTheorem}.

\noindent
We begin with a more general lemma establishing the claim \textit{a)}:	
\begin{lemma}\label{lemma:CMFkerAdmissible}
	Any CMF kernel $\kernelPaper\in\cmfkernelclassP[D]$, $D \geq 1$, is admissible. 
\end{lemma}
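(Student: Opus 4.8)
The plan is to check the two defining conditions of an admissible kernel (\Cref{hyp:geometric:0}) directly from the structural form $\kernelPaper(\param,\param')=\cmf\big(\|\param-\param'\|_p^p\big)$, where $\cmf$ is a CMF with $\cmf(0)=1$, $\lim_{x\to+\infty}\cmf(x)=0$ and $0<p\le1$ (cf.\ \Cref{def:intro:cmf_kernel}). The whole argument reduces to a few elementary facts about $\cmf$ on $\kR+$ (it is non-increasing, right-continuous at $0$, and — by \Cref{lemma:strictCMF} — strictly positive and strictly decreasing) together with elementary topology of the sublevel sets of $\param'\mapsto\|\param'-\param\|_p^p$. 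So there is no single hard step; the work is organized as two short verifications.

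For \ref{item:admissible-kernel:unit-norm-and-vanishing}, I would argue as follows. The \emph{unit-norm} property \eqref{eq:kernel_Gconditions:unitNorm} is immediate since $\kernelPaper(\param,\param)=\cmf(0)=1$. For the \emph{continuity} property \eqref{eq:kernel_Gconditions:continuity}, I would use that $\param'\mapsto\|\param'-\param\|_p^p$ is continuous and vanishes at $\param'=\param$, so $\|\param-\param'\|_p^p\to 0^+$ as $\param'\to\param$; combined with the right-continuity of $\cmf$ at $0$ (part of \Cref{def:intro:def_cmf}) this yields $\kernelPaper(\param,\param')\to\cmf(0)=1$. For the \emph{vanishing} property \eqref{eq:contribution:evanecent}, fix $\varepsilon>0$ and $\param\in\paramSet$; since $\lim_{x\to+\infty}\cmf(x)=0$ there is $X>0$ with $\cmf(X)<\varepsilon$, and I would take $K\triangleq\{\param'\in\kR^\dimParam:\|\param'-\param\|_p^p\le X\}$. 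This set is closed (preimage of $[0,X]$ under a continuous map) and bounded (componentwise, $|\param'\element{d}-\param\element{d}|\le X^{1/p}$), hence compact — note this still holds in the quasi-norm regime $p<1$. Using that every CMF is non-increasing on $\kR+$, any $\param'\in K^c$ satisfies $\|\param'-\param\|_p^p>X$ and thus $\kernelPaper(\param',\param)=\cmf(\|\param'-\param\|_p^p)\le\cmf(X)$, so $\sup_{\param'\in K^c}\kernelPaper(\param',\param)\le\cmf(X)<\varepsilon$.

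For \ref{item:admissible-kernel:decreasing}, observe that $\param\neq\param'$ forces $\|\param-\param'\|_p^p>0$ since $p>0$ and $\param-\param'\neq{\bf0}_\dimParam$. Then I would invoke \Cref{lemma:strictCMF}, which applies because $\cmf(0)=1$ and $\lim_{x\to+\infty}\cmf(x)=0$: it gives that $\cmf$ is strictly positive and strictly decreasing on $\kR+$, whence $0<\cmf(\|\param-\param'\|_p^p)<\cmf(0)=1$, i.e.\ $0\le\kernelPaper(\param,\param')<1$. Combining the two items shows that every $\kernelPaper\in\cmfkernelclassP[\dimParam]$ is admissible. The argument is essentially bookkeeping; the only points that deserve a line of justification are the compactness of the sublevel set $\{\param':\|\param'-\param\|_p^p\le X\}$ when $p<1$ (handled by the coordinatewise bound above) and phrasing the limit in \eqref{eq:kernel_Gconditions:continuity} so that only right-continuity of $\cmf$ at $0$ is used.
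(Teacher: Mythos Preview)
Your proposal is correct and follows essentially the same route as the paper's own proof: both verify \eqref{eq:propertiesKernel} and \eqref{eq:contribution:evanecent} directly from the form $\kernelPaper(\param,\param')=\cmf(\|\param-\param'\|_p^p)$, and both invoke \Cref{lemma:strictCMF} to obtain the strict positivity and strict monotonicity of $\cmf$ needed for item~\ref{item:admissible-kernel:decreasing}. Your treatment is somewhat more explicit than the paper's (which dispatches the vanishing property in one sentence), in particular your construction of the compact sublevel set $K=\{\param':\|\param'-\param\|_p^p\le X\}$ and the coordinatewise boundedness argument for $p<1$ spell out details the paper leaves to the reader.
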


\begin{proof}
	\noindent 
	\textit{Proof.}
	First, since $\kernelPaper\in\cmfkernelclassP[D]$, there exists a CMF $\varphi$ and scalar $p\in\kintervoc{0}{1}$ such that $\varphi(0)=1$ and $\kernelPaper(\param,\param')= \varphi(\|\param-\param'\|_p^p)$ for all $\param,\param'\in\kR^{D}$.
	Hence $\kernelPaper(\param,\param)=\varphi(0)=1$ for all $\param$. Moreover, the function $\param'\mapsto\kernelPaper(\param,\param')$ is continuous since both CMFs and $\ell_p$-norms are continuous.
	Hence $\kernelPaper$ satisfies~\eqref{eq:propertiesKernel}. 
	The fact that $\kernelPaper$ satisfies the vanishing property~\eqref{eq:contribution:evanecent} is a straightforward consequence of the fact that $\kernelPaper(\param,\param') = \cmf(\|\param-\param'\|_{p}^{p})$ and that $\lim_{t\to+\infty}\cmf(t)=0$.
	Finally, we prove item~\ref{item:admissible-kernel:decreasing} of \Cref{hyp:geometric:0}.
	As $\cmf$ satisfies the assumptions of  \Cref{lemma:strictCMF}, it is strictly positive and strictly decreasing.
	This implies $\kernelPaper(\param,\param') > 0$ for any $\param,\param'$. Moreover, if $\param \neq \param'$ then $t = \|\param-\param'\|_{p}^{p}>0$ and $\cmf(t)<\cmf(0)=1$. \vspace{0.4cm}
\end{proof}

The rest of this section is dedicated to the proof of claim \textit{b)}.
Let us consider a non-empty subset of indices $T\subseteq\intervint{1}{\nbAtome}$ with $t=\card(T)$.
Without loss of generality (up to some reordering), we assume that $T=\intervint{1}{t}$.
Let $\{\coeff_\ell\}_{\ell=1}^t\subset \kR*+$ be such that
\begin{equation}
	\label{eq:proof:1D:def_sum_coeff}
	s\triangleq \sum_{\ell=1}^{t} \coeff_\ell < 1
\end{equation}
and consider the function
\begin{equation}
	\label{eq:proof:1D:defFunCond2}
	\kfuncdef[m]{\condSumPositivKernelSymb}{\kR}{\kR+}[\param][\displaystyle\sum_{\ell=1}^{t} \coeff_\ell\, \kernelPaper(\param, \paramt_\ell).]
\end{equation}
Using the integral formulation of CMF (see \Cref{lemma:geometric:cmf:integral_representation}), we have that $\condSumPositivKernelSymb$ is twice differentiable at any $\param\in\kR\setminus \{\paramt_\ell\}_{\ell=1}^{t}$ \cite[proof of Th.~12a]{Widder1941} and its second derivative writes:
\begin{align}
	\condSumPositivKernelSymb^{(2)}\kparen{\param} = 
		\sum_{\ell=1}^{t} p(1-p) \coeff_\ell&\int_{0}^{+\infty} \varIntCmf \kvbar{\paramt_\ell - \param}^{p-2} \cste^{-\varIntCmf \kvbar{\paramt_\ell - \param}^p} \diff\nu(\varIntCmf)\nonumber		  \\
		\;+\;
		p^2 \coeff_\ell &\int_{0}^{+\infty} \varIntCmf^2 \kvbar{\paramt_\ell - \param}^{2(p-1)} \cste^{-\varIntCmf\kvbar{\paramt_\ell - \param}^p} \diff\nu(\varIntCmf)\label{eq:proof:1D:geometric1}
\end{align}			
We next show that items~\ref{hyp:geometric:1} and~\ref{hyp:geometric:2} of \Cref{def:admissible_support} hold.

\paragraph{Item~\ref{hyp:geometric:1} of \Cref{def:admissible_support}}
First, the vanishing property~\eqref{eq:contribution:evanecent} of admissible kernels ensures that $\condSumPositivKernelSymb$ admits at least one maximizer (see \Cref{Fact:existence-maximizer}).
We then show that any maximizer of $\condSumPositivKernelSymb$ must necessarily belong to $\{\paramt_\ell\}_{\ell=1}^{t}$.

Since $\condSumPositivKernelSymb$ is twice continuously differentiable on $\kR\setminus \{\paramt_\ell\}_{\ell=1}^{t}$, any maximizer $\maximizer\in\kR\setminus \{\paramt_\ell\}_{\ell=1}^{t}$ must verify the second-order optimality condition ``$\condSumPositivKernelSymb^{(2)}(\maximizer)\leq 0$''. 
Now this condition can never be fulfilled for $\maximizer\in\kR\setminus\{\paramt_\ell\}_{\ell=1}^{t}$.
Indeed, we see that each integral term in~\eqref{eq:proof:1D:geometric1} is positive since $\maximizer \notin \{\paramt_\ell\}_{\ell=1}^{t}$ and, by \Cref{lemma:strictCMF}, $\nu(\kR+^{*})>0$. 
Since $p\in\kintervoc{0}{1}$ and $\coeff_{\ell}>0$ $\forall\ell$, it follows that $\condSumPositivKernelSymb^{(2)}\kparen{\maximizer} > 0$.

\paragraph{Item~\ref{hyp:geometric:2} of \Cref{def:admissible_support}}
\newcommand{\funcCondThree}{\condOneNegKernelSymb}
Assume $T=\intervint{1}{t}\neq\intervint{1}{\nbAtome}$ and consider $t'\in\intervint{1}{\nbAtome}\setminus T$.
Without loss of generality, suppose that $t'=t+1$ and consider 	
\begin{equation}
	\label{eq:proof:th3:eq-related-to-item-ii}
	\kfuncdef[m]{\funcCondThree}{\kR}{\kR}[\param][\displaystyle
		\condSumPositivKernelSymb(\param) - \kernelPaper(\param,\paramt_{t+1}).]
\end{equation}
From the definition of $\condSumPositivKernelSymb$ in~\eqref{eq:proof:1D:defFunCond2}, $\funcCondThree$ writes
\begin{equation}
	\label{eq:proof:rewrite-condOneNegKernelSymb}
	\condOneNegKernelSymb(\param) = \sum_{\ell=1}^{t} \coeff_\ell \cmf(\kvbar{\param - \paramt_\ell}_p^p) - \cmf(\kvbar{\param - \paramt_{t+1}}_p^p)
	.
\end{equation}
Assume that $\funcCondThree(\paramt_{\ell}) \leq 0$ for every $\ell\in\intervint{1}{t}$.
We will then show that $\funcCondThree(\param) \leq 0$ for every $\param\in\kR\setminus\{\paramt_\ell\}_{\ell=1}^{t}$ so that item~\ref{hyp:geometric:2} of \Cref{def:admissible_support} is satisfied.

Suppose there exists $\param_0\in\kR$ such that $\phi(\param_0) > 0$ and let us show that this leads to a contradiction.

Let us first emphasize that the existence of some $\param_0\in\kR$ such that $\phi(\param_0) > 0$ implies that a maximizer of $\phi$ exists.
Indeed, we have that $\lim_{\param\to\pm\infty}\phi(\param)=0$ since 
\begin{equation}
	\kvbar{\phi(\param)}\leq \sum_{\ell=1}^t c_\ell\kernelPaper(\param, \paramt_\ell) + \kernelPaper(\param,\paramt_{t+1})
\end{equation}
and $\kernelPaper$ obeys the vanishing property \eqref{eq:contribution:evanecent} by hypothesis.
Hence, for any $0<\varepsilon<\phi(\param_0)$ there exists a compact set $K_\varepsilon$ such that
\begin{equation}
    \forall\param\in K_\varepsilon^c, \quad
	\phi(\param) \leq \varepsilon < \sup_{\param'\in K_\varepsilon} \phi(\param')
	,
\end{equation}
because $\phi$ is continuous. 
The extreme value theorem~\cite[Prop. A.8]{Bertsekas99ed2} then states that (at least) one maximizer of $\phi$, say $\maximizer$, exists and $\maximizer\in K_\varepsilon$.
We note that by definition $\phi(\maximizer)\geq \phi(\param_0)>0$.
We show below that we also must necessarily have $\phi(\maximizer)\leq 0$.
This leads to the desired contradiction and proves the result.

Let $\lambda_\ell \triangleq \kvbar{\paramt_\ell - \maximizer}$ and assume without loss of generality that 
\begin{equation}\label{eq:ordering_lambdas}
	\lambda_1 \leq \dots \leq \lambda_t. 
\end{equation}
We have that $\lambda_1>0$ because $\maximizer\notin\{\paramt_\ell\}_{\ell=1}^t$ since $\phi(\paramt_\ell)\leq 0$ $\forall\ell\in\intervint{1}{t}$ (by assumption) and $\phi(\maximizer)\geq \phi(\param_0)>0$.
We next show that the working assumptions also imply $\phi(\maximizer)\leq 0$ by distinguishing between three cases:

\vspace*{1em}
\noindent
\textit{{\bf Case 1}\textnormal{:} $\lambda_{t+1} \leq \lambda_1$.}
From \eqref{eq:ordering_lambdas}, we have:
\begin{equation}
	\forall u\geq 0,\quad \max_{1\leq \ell \leq t} \cste^{-\varIntCmf\lambda_\ell^p} \leq \cste^{-\varIntCmf\lambda_{t+1}^p}
	.
\end{equation}
Hence 
\begin{equation*}
	\funcCondThree(\maximizer) 
	\leq 
	\underbrace{\kparen{ \sum_{\ell=1}^{t} \coeff_\ell - 1}}_{<0 \text{ by hyp.}}
	\underbrace{\vphantom{\kparen{\sum_{\ell=1}^{t-1}}}\int_{0}^{+\infty} \cste^{-\varIntCmf\lambda_{t+1}^{p}}  \diff\nu(\varIntCmf)}_{>0}\quad <0.
\end{equation*}

\vspace*{1em}
\noindent
\textit{{\bf Case 2}\textnormal{:} $\lambda_{t+1} > \lambda_t$.}
We rely on the following technical lemma that exploits the notion of \say{sign changes of a finite sequence}.
This notion is defined as the number of times two consecutive elements of the finite sequence have opposite signs.  
For instance, the sequence $(1, 1, -1, 1)$ has two sign changes (respectively at the third and fourth positions).
\begin{lemma}
	\label{corollary:function_analysis:polynome_one_sign_change}
	Let $P(\varIntCmf) \triangleq \sum_{\ell=1}^{\nbAtome} \coeff_\ell\, \cste^{-\lambda_\ell \varIntCmf}$ be an exponential polynomial on $\kR+$ with $0 < \lambda_1<\ldots<\lambda_{\nbAtome}$ and $\{\coeff_\ell\}_{\ell=1}^{\nbAtome}\subset\kR*$.
	Assume that:
	\begin{itemize}
		\item the sequence $\coeff_1,\ldots, \coeff_{\nbAtome{}}$ has at most two sign changes;
		\item $P(0) < 0$ and $\displaystyle \lim_{\varIntCmf \rightarrow+\infty} P(\varIntCmf) = 0_+$.
	\end{itemize}
	Then there exists $\varIntCmf_0 > 0$ for which the following inequality holds
	\begin{equation}
		\int_{0}^{+\infty} f(\varIntCmf) P(\varIntCmf) \diff\nu(\varIntCmf)
		\geq
		f(\varIntCmf_0) \int_{0}^{+\infty} P(\varIntCmf) \diff\nu(\varIntCmf)
		\label{eq:corollary:function_analysis:polynome_one_sign_change:1}
		,
	\end{equation}
	for any non-decreasing function $f$ on $\kR+$ and any (unsigned) finite Borel measure $\nu$ on $\kR+$ such that the integrals converge.
\end{lemma}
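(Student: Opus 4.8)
The plan is to deduce \eqref{eq:corollary:function_analysis:polynome_one_sign_change:1} from a single structural fact: under the stated hypotheses, $P$ changes sign exactly once on $\kintervoo{0}{+\infty}$, from negative to positive values. The first task is to establish this sign pattern. Since the term with the smallest decay rate $\lambda_1$ dominates at infinity, $P(\varIntCmf) = \cste^{-\lambda_1\varIntCmf}\kparen{\coeff_1 + o(1)}$ as $\varIntCmf\to+\infty$, so the hypothesis $\lim_{\varIntCmf\to+\infty}P(\varIntCmf)=0_+$ forces $\coeff_1>0$; together with $P(0)=\sum_{\ell=1}^{\nbAtome}\coeff_\ell<0$ and the continuity of $P$, this already shows $P$ has at least one zero in $\kintervoo{0}{+\infty}$. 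The quantitative input is the extension of Descartes' rule of signs to exponential sums (a variation-diminishing property of the Laplace transform): $Z(P)$, the number of zeros of $P$ in $\kintervoo{0}{+\infty}$ counted with multiplicity, is at most the number $S$ of sign changes of $(\coeff_1,\dotsc,\coeff_\nbAtome)$, hence $Z(P)\leq 2$. Now I would argue on the possible zero configurations: $P$ is analytic and not identically zero, so its zeros are isolated and $P$ has a constant sign on each complementary subinterval of $\kintervoo{0}{+\infty}$; reading these signs from $0^+$ to $+\infty$ they begin with $-$ and end with $+$, so the number of sign changes of $P$ (equivalently, the number of its zeros of odd multiplicity) is odd. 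Combined with $Z(P)\leq 2$, the only possibility is a single simple zero $\varIntCmf_0\in\kintervoo{0}{+\infty}$, with $P<0$ on $\kintervoo{0}{\varIntCmf_0}$ and $P>0$ on $\kintervoo{\varIntCmf_0}{+\infty}$ --- a double zero produces no sign change (whence constant sign, contradicting the differing end signs), two simple zeros the impossible pattern $-/+/-$.

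Given such a $\varIntCmf_0$, the inequality follows from an elementary splitting argument. I would rewrite the difference between the two sides of \eqref{eq:corollary:function_analysis:polynome_one_sign_change:1} as $\int_0^{+\infty}\kparen{f(\varIntCmf)-f(\varIntCmf_0)}P(\varIntCmf)\diff\nu(\varIntCmf)$ and bound it over $\kintervoc{0}{\varIntCmf_0}$ and $\kintervoo{\varIntCmf_0}{+\infty}$ separately. On $\kintervoc{0}{\varIntCmf_0}$ one has $P\leq 0$, and $f$ non-decreasing gives $f(\varIntCmf)-f(\varIntCmf_0)\leq 0$, so the integrand is $\geq 0$; on $\kintervoo{\varIntCmf_0}{+\infty}$ one has $P\geq 0$ and $f(\varIntCmf)-f(\varIntCmf_0)\geq 0$, so the integrand is again $\geq 0$. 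Summing the two nonnegative contributions yields \eqref{eq:corollary:function_analysis:polynome_one_sign_change:1}. For this manipulation to be legitimate I would interpret the convergence hypothesis as absolute convergence, so that $\int P\diff\nu$ and $\int fP\diff\nu$ --- and hence their restrictions to each of the two subintervals, on which $P$ keeps a fixed sign --- are finite real numbers, $f(\varIntCmf_0)$ being finite since $\varIntCmf_0>0$.

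The only genuinely non-routine ingredient is the sharp form of Descartes' rule of signs for exponential sums used in the first paragraph. A naive application of Rolle's theorem (dividing by $\cste^{-\lambda_1\varIntCmf}$ and differentiating) only yields $Z(P)\leq S+1$, which is insufficient here; obtaining the sharp bound $Z(P)\leq S$ requires more care --- it is a classical fact that can be derived, for instance, from the total positivity of the exponential kernel, or by a sharper inductive argument exploiting the sign behaviour at the endpoints. Everything else --- the asymptotics fixing $\coeff_1>0$, the parity bookkeeping that upgrades the bound $Z(P)\leq 2$ into the conclusion that $P$ has a single simple sign change, and the monotone-splitting estimate --- is routine.
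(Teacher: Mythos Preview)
Your proof is correct and follows essentially the same approach as the paper's: invoke Laguerre's generalization of Descartes' rule of signs to bound the number of zeros of $P$ by the number of sign changes of the coefficients (at most two), use the endpoint constraints $P(0)<0$ and $P(\varIntCmf)\to 0_+$ to force exactly one sign change at some $\varIntCmf_0>0$, and then split the integral at $\varIntCmf_0$ exploiting the monotonicity of $f$. Your parity bookkeeping to isolate the single simple zero is a slightly more explicit version of the case analysis the paper carries out (with a figure), and your caveat about absolute convergence is a welcome point of rigor the paper leaves implicit.
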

\noindent
The proof of the lemma is postponed to~\Cref{subsec:app:analysis}.

\vspace*{1em}

As mentioned previously, we have on the one hand that $\maximizer \notin \{\paramt_\ell\}_{\ell=1}^{t}$. On the other hand, $\maximizer\neq\paramt_{t+1}$ because $|\maximizer-\paramt_{t+1}|=\lambda_{t+1}>\lambda_1>0$.
Therefore, $\maximizer\in\kR\setminus\{\paramt_\ell\}_{\ell=1}^{t+1}$.
Since $\phi$ is twice continuously differentiable on $\kR\setminus\{\paramt_\ell\}_{\ell=1}^{t+1}$, $\maximizer$ must necessarily verify the following second-order optimality condition:
\begin{equation}\label{eq:sooc}
	\condOneNegKernelSymb^{(2)}(\maximizer) \leq 0. 
\end{equation}
We next show that $C \condOneNegKernelSymb(\maximizer) < \condOneNegKernelSymb^{(2)}(\maximizer)$ for some positive constant $C>0$.
Hence, in view of \eqref{eq:sooc}, this leads to the desired contradiction: $\condOneNegKernelSymb(\maximizer) < 0$.

Assume first that $\lambda_1 < \dots < \lambda_t$ (the equality cases will be addressed later). 
From~\eqref{eq:proof:1D:geometric1} we have that $\funcCondThree^{(2)}=\funcCondThree_1^{(2)} + \funcCondThree_2^{(2)}$ with
\begin{align*}		%
	\funcCondThree_1^{(2)}(\maximizer) \;=\;&
	p(1-p) \int_{0}^{+\infty} \hspace*{-.0em} \varIntCmf \kparen{ \sum_{\ell=1}^{t} \coeff_\ell \lambda_\ell^{p-2} \cste^{- \varIntCmf \lambda_\ell^p} - \lambda_{t+1}^{p-2} \cste^{- \varIntCmf  \lambda_{t+1}^p} } \diff\nu(\varIntCmf)\\
	\funcCondThree_2^{(2)}(\maximizer) \;=\;& 
	p^2 \int_{0}^{+\infty} \hspace*{-0em} \varIntCmf^2 \kparen{ \sum_{\ell=1}^{t} \coeff_\ell \lambda_\ell^{2(p-1)} \cste^{- \varIntCmf \lambda_\ell^p} -  \lambda_{t+1}^{2(p-1)} \cste^{-\varIntCmf \lambda_{t+1}^p} } \diff\nu(\varIntCmf)
	.
\end{align*}
Using $\lambda_{t+1}>\lambda_\ell > 0$  $\forall \ell \in \intervint{1}{t}$, we obtain
\begin{align}
	\funcCondThree_2^{(2)}(\maximizer) \;=\;&
	\frac{p^2}{ \lambda_{t+1}^{2(1-p)} } \int_{0}^{+\infty} 
	\hspace*{-1em} \varIntCmf^2 \kparen{ \sum_{\ell=1}^{t} \coeff_\ell \kparen{\frac{\lambda_{t+1}}{\lambda_\ell}}^{2(1-p)} \cste^{- \varIntCmf \lambda_\ell^p} 
	- \cste^{-\varIntCmf \lambda_{t+1}^p} } \diff\nu(\varIntCmf)
	\nonumber \\
	\;>\;&
	\frac{p^2}{ \lambda_{t+1}^{2(1-p)} }
	\int_{0}^{+\infty} \hspace*{-.5em} \varIntCmf^2 \kparen{ \sum_{\ell=1}^{t} \coeff_\ell  \cste^{-\varIntCmf \lambda_\ell^p} - \cste^{-\varIntCmf \lambda_{t+1}^p} } \diff\nu(\varIntCmf)
	\label{eq:kernel_exact_recovery:proof_laplace_h_prime_positif:normp}
	.
\end{align}
Note now that:
\begin{itemize}
	\item the function $u\longmapsto u^2$ is increasing,
	\item $P(\varIntCmf) \triangleq \sum_{\ell=1}^{t} \coeff_\ell \cste^{-\varIntCmf\lambda_\ell^{p}} -  \cste^{-\varIntCmf\lambda_{t+1}^{p}}$ is an exponential polynomial with $0<\lambda_1<\dots<\lambda_t<\lambda_{t+1}$ and whose sequence of coefficients is $(\coeff_1,\dotsc,\coeff_{t},-1)$ and has exactly one sign change.
	\item As $\max_{1 \leq \ell \leq t} \lambda_\ell < \lambda_{t+1}$ by hypothesis, we have $P(\varIntCmf) > 0$ for sufficiently large $\varIntCmf$ so $\lim_{\varIntCmf\rightarrow+\infty} P(\varIntCmf) = 0^+$.
	\item Since $\sum_{\ell=1}^{t} \coeff_\ell < 1$ we have $P(0) < 0$.
\end{itemize}
Therefore, \Cref{corollary:function_analysis:polynome_one_sign_change} applies and there exists $\varIntCmf_0 > 0$ such that
\begin{equation}
	\label{eq:kernel_exact_recovery:proof_laplace_h_prime_positif}
	\funcCondThree^{(2)}_{2}(\maximizer)\ >\ \frac{p^2}{ \lambda_{t+1}^{2(1-p)} } \varIntCmf_0^2 \int_{0}^{+\infty} P(\varIntCmf) \diff\nu(\varIntCmf) = \frac{p^2}{ \lambda_{t+1}^{2(1-p)} }\varIntCmf_0^2 \funcCondThree\kparen{\maximizer}
	.
\end{equation}
This establishes that $\funcCondThree_2^{(2)}(\maximizer) > C_2 \funcCondThree(\maximizer)$ where $C_2 >0$ is a positive constant.
The same rationale leads to $\funcCondThree_1^{(2)}(\maximizer) > C_1 \funcCondThree(\maximizer)$ with $C_1\geq0$ ($C_1=0$ for $p=1$ since $\funcCondThree^{(2)}_1$ is identically zero).
Since $\funcCondThree^{(2)}= \funcCondThree_1^{(2)} + \funcCondThree_2^{(2)}$, one obtains that $\funcCondThree^{(2)}(\maximizer) > (C_1 + C_2) \funcCondThree(\maximizer)$, which concludes the proof for $\lambda_1 < \dots < \lambda_t$.

Let us now come back to the general case where $\lambda_1 \leq \dots \leq \lambda_t$. 
Denote by $\tilde{\lambda}_{1} < \ldots < \tilde{\lambda}_{t'}$, with $t'  \leq t$, the ordered distinct values in $\{\lambda_\ell\}_{\ell=1}^{t}$, and let $\tilde{\lambda}_{t'+1}=\lambda_{t+1}$.  
Moreover, for any $\ell'\in\intervint{1}{t'}$, let $\tilde{\coeff}_{\ell'}$ be equal to the sum of the coefficients $c_{\ell}$ such that $\lambda_{\ell}=\tilde{\lambda}_{\ell'}$, and let $\tilde{\coeff}_{t'+1}=-1$.
We note that, by definition, $\sum_{\ell'=1}^{t'} \tilde{\coeff}_{\ell'} = \sum_{\ell=1}^{t} \coeff_\ell < 1$.
We can then show that $\funcCondThree^{(2)}(\maximizer) > C \funcCondThree(\maximizer)$ with $C>0$ by applying the same reasoning as above to $\tilde{\lambda}_1,\dots,\tilde{\lambda}_{t'+1}$ and $\tilde{\coeff}_1,\dots, \tilde{\coeff}_{t'+1}$.

\vspace*{1em}
\noindent
\textit{{\bf Case 3}\textnormal{:} $\lambda_{1} < \lambda_{t+1} \leq \lambda_{t}$.}
There exists $\ell_0\in\intervint{1}{t-1}$ such that:
\begin{subnumcases}{}
 	\lambda_{\ell}<\lambda_{t+1}  & \mbox{for $\ell\leq \ell_0$}
 	\label{eq:separation of lambdas:a}
 	\\
 	\lambda_{\ell}\geq\lambda_{t+1} & \mbox{for $\ell>\ell_0$}.
 	\label{eq:separation of lambdas:b}
\end{subnumcases}
Denote $\varepsilon \triangleq 1-\sum_{\ell=1}^t \coeff_\ell>0$ and let $s_1\triangleq \sum_{\ell=1}^{\ell_0} \coeff_{\ell} + \tfrac{\varepsilon}{2}$ and $s_2\triangleq \sum_{\ell=\ell_0+1}^{t} \coeff_{\ell} + \tfrac{\varepsilon}{2}$ such that $s_1 + s_2 = 1$.
One can write
\begin{multline}
	\condOneNegKernelSymb(\maximizer)
	\;=\;
	s_1 \underbrace{\int_{0}^{+\infty} \kparen{ 
		\sum_{\ell=1}^{\ell_0} \frac{\coeff_{\ell}}{s_1}
		\cste^{-\varIntCmf \lambda_{\ell}^p}
		- \cste^{-\varIntCmf \lambda_{t+1}^p} 
	} \diff\nu(\varIntCmf) }_{\triangleq \; \phi_1(\maximizer)}
    \\
	+ 
	s_2 \underbrace{\int_{0}^{+\infty} \kparen{ 
		\sum_{\ell=\ell_0+1}^{t} \frac{\coeff_{\ell}}{s_2}
		\cste^{-\varIntCmf \lambda_{\ell}^p} 
		- \cste^{-\varIntCmf \lambda_{t+1}^p} 
	} \diff\nu(\varIntCmf) }_{\triangleq \; \phi_2(\maximizer)}
	.
\end{multline}
Using \eqref{eq:separation of lambdas:a} and the fact that $\sum_{\ell=1}^{\ell_0}\tfrac{\coeff_{\ell}}{s_1}<1$, we have that $\phi_1(\maximizer)<0$ by resorting to the same reasoning as in Case 2.
Similarly, using \eqref{eq:separation of lambdas:b} and the fact that $\sum_{\ell=\ell_0+1}^{t}\tfrac{\coeff_{\ell}}{s_2}<1$, we obtain that $\phi_2(\maximizer)<0$ by the same arguments as in Case 1.
Hence, we finally have $\condOneNegKernelSymb(\maximizer) \leq s_1\phi_1(\maximizer) + s_2 \phi_2(\maximizer) < 0$, which leads to the desired contradiction.

}
}

\conditionalPagebreak

\subsection{Proof of \texorpdfstring{\Cref{th:contrib:dimD:grid_recovery}}{Theorems~\ref{th:contrib:dimD:grid_recovery}} - Recovery in dimension \texorpdfstring{$D$}{D}}
	\label{subsubsec:proof_cmf_dimD}

{

	\newcommand{\maximizer}{\param_m}
	\newcommand{\paramn}{\widetilde{\param}}

	Let $\supportt = \{\paramt_\ell\}_{\ell=1}^{\nbAtome}$ and assume 	$\SetAug{}(\supportt)$ is axis admissible. 
	Let us first observe that, by virtue of \Cref{cor:existence CMF dictionary}, the elements of $\kset{\atome(\param)}{\param\in\SetAug(\supportt)}$ are linearly independent. 	
	Let 
	\begin{equation}
		\calR \triangleq \mathrm{span}(\kset{\atome(\param)}{\param\in\SetAug(\supportt)}).
	\end{equation}	
	If $\bsr\in\calR_{}\backslash\{{\bf0}_{\spaceObs}\}$, 
	we note from \Cref{Fact:existence-maximizer} that the function $f : \param \mapsto \kvbar{\kangle{\atome(\param),\bsr}}$ admits at least one maximizer since $\bsr$ results from a (non trivial) finite linear combination of atoms.
	Moreover, the maximum of $f$ must be strictly greater than zero. If not, $\kangle{\atome(\param),\bsr}=0$ $\forall \param\in \SetAug(\supportt)$ and one deduces that $\bsr\in\calR^\perp$, the orthogonal to $\calR$.
	Since $\bsr\in\calR_{}$, this leads to $\bsr={\bf0}_{\spaceObs}$ which is in contradiction with our initial assumption ``$\bsr\neq{\bf0}_{\spaceObs}$''.
	We also have that the maximizers of $f$ must belong to $\SetAug{}(\supportt)$ as shown by the following arguments.

If $\maximizer$ is a maximizer of $f$, then $t = \maximizer\element{d}$ is a maximizer of 
\begin{equation}
	f_d: t \mapsto \kvbar{\kangle{\atome(\maximizer + (t - \maximizer\element{d}) \bfe_d),\bsr}}. 
\end{equation}
Denoting $\param_0\triangleq \maximizer-\maximizer\element{d}\bfe_d$, we have $\param_0 \perp \bfe_d$ by construction and $f_d$ writes
\begin{equation}
	f_d(t) = \kvbar{
		\sum_{\ell=1}^{\nbAtome}
		\coeff_\ell
		\kernelPaper(\param_0 + t\,\bfe_d, \paramt_\ell)
	}
	.
\end{equation}	
Because $\SetAug{}(\supportt)$ is axis admissible with respect to $\kernelPaper$ that $f_d$ is not identically zero (since $f_d(\maximizer\element{d})=f(\maximizer)>0$), the maximizers of $f_d$ must belong to $\support_d\triangleq \{\paramt_\ell\element{d}\}_{\ell=1}^{\nbAtome}$.
As a consequence, since this conclusion holds for any $d\in\intervint{1}{\dimParam}$, we have that $\maximizer\in\prod_{d=1}^\dimParam\calS_d=\SetAug{}(\supportt)$. 
Formulated in a slightly different way, we thus just proved that:
\begin{equation}\label{eq:ERC_cartS*}
	\forall \bsr\in\calR_{}\backslash\{{\bf0}_\spaceObs\}, \forall \param\in \paramSet\backslash\SetAug{}(\supportt), \max_{\param'\in\SetAug{}(\supportt)} \kvbar{\kangle{\atome(\param'),\bsr}} > \kvbar{\kangle{\atome(\param),\bsr}}
	.
\end{equation}

\noindent
We are now ready to prove the statements of the theorem:

\paragraph{$\SetAug(\supportt)$-delayed recovery of any $\support\subseteq\supportt$} 
First, since $\support\subseteq\SetAug{}(\supportt)$, we have that $\Vobs\in \calR_{}$. 
Moreover, $\Vobs\neq {\bf0}_{\spaceObs}$ since it results from a nontrivial linear combination of linearly independent atoms. 
Hence \eqref{eq:ERC_cartS*} holds and \comp{} selects a parameter in $\SetAug{}(\supportt)$ at the first iteration.
Repeating the same argument at the next iterations, \comp{} selects parameters in $\SetAug{}(\supportt)$ until the residual $\bsr$ vanishes.
Now, because the atoms in $\kset{\atome(\param)}{\param\in\SetAug(\supportt)}$ are linearly independent, $\bsr = {\bf0}_{\spaceObs}$ if and only if the set of parameters selected by \comp{}, say $\widehat{\support}$, verifies $\support\subseteq\widehat{\support}$.
Since \comp{} never selects twice the same parameter and $\widehat{\support}\subseteq\SetAug{}(\supportt)$, we thus achieve $\SetAug{}(\supportt)$-delayed recovery of $\support$. 

\paragraph{$\supportt$-delayed recovery of any $\support\subseteq\supportt$} 
Since the elements of $\kset{\atome(\param)}{\param\in\SetAug(\supportt)}$ are linearly independent, \eqref{eq:contrib:dimD:l1erc} can equivalently be rewritten as (see \eg \cite[Prop.~3.15]{Foucart2013}):
\begin{equation*}
	\forall \bsr\in\calR_{\supportt} \backslash\{{\bf0}_{\spaceObs}\}, \forall \param\in \SetAug(\supportt)\backslash\supportt, \max_{\param'\in\supportt} \kvbar{\kangle{\atome(\param'),\bsr}} > \kvbar{\kangle{\atome(\param),\bsr}}
	,
\end{equation*}
where $\calR_{\supportt} \triangleq \mathrm{span}(\{\atome(\paramt_\ell)\}_{\ell=1}^{\nbAtome})$. 
Combining this result with \eqref{eq:ERC_cartS*} and using the fact that $\calR_{\supportt}\subset\calR$ lead to
\begin{equation*} 
	\forall \bsr\in\calR_{\supportt} \backslash\{{\bf0}_{\spaceObs}\}, \forall \param\in \paramSet\backslash\supportt, \max_{\param'\in\supportt} \kvbar{\kangle{\atome(\param'),\bsr}} > \kvbar{\kangle{\atome(\param),\bsr}}.
\end{equation*}	
Following the same arguments as above, we then have that: 1) \comp{} selects parameters in $\supportt$ until the residual $\bsr$ vanishes; 2) $\bsr = {\bf0}_{\spaceObs}$ if and only if the set of parameters selected by \comp{} verifies $\support\subseteq\widehat{\support}$. Since \comp{} never selects twice the same parameter and $\widehat{\support}\subseteq\supportt$, \comp{} thus achieves $\supportt$-delayed recovery of $\support$.

\paragraph{Sharpness of the result} 
If~\eqref{eq:contrib:dimD:l1erc} is not verified, we have from \cite[Prop.~3.15]{Foucart2013} that there exists some $\Vobs_{\texttt{bad}}\in\calR_{\supportt} \backslash\{{\bf0}_{\spaceObs}\}$ such that
\begin{equation}
	\max_{\param\in\SetAug(\supportt)\setminus\supportt} \kvbar{
			\scalprod{\atome(\param)}{\Vobs_{\texttt{bad}}}
	}
	>
	\max_{\param\in\supportt} \kvbar{
		\scalprod{\atome(\param)}{\Vobs_{\texttt{bad}}}
	}
	.
\end{equation}
In other words, \comp{} with $\Vobs{}_{\texttt{bad}}$ as input selects some $\param\notin\supportt$ at the first iteration.

\subsection{Proof of \texorpdfstring{\Cref{th:contrib:laplace_recovery}}{Lemma~\ref{th:contrib:laplace_recovery}}}
	\label{subsec:proof:laplace-grid-are-admissible}

{

\newcommand{\paramOnLine}{t}

Let $\calG= \prod_{d=1}^{\dimParam} \calS_d = \{\param_\ell\}_{\ell=1}^{\card{}(\calG)}$ be an arbitrary Cartesian grid in $\kR^{\dimParam}$ and $\{\coeff_\ell\}_{\ell=1}^{\card{}(\calG)}\subset\kR$ be a set of $\card{}(\calG)$ coefficients not all equal to $0$.
Consider $d\in\intervint{1}{\dimParam}$ and $\param_0\in\kR^\dimParam$ such that $\theta_{0}\element{d}=0$ and define
\begin{equation}
	\kfuncdef{f_d}{\kR}{\kR+}[\paramOnLine][
		\displaystyle
		\kvbar{\sum_{\ell=1}^{\card{(\calG)}} \coeff_\ell \kernelPaper(\param_0 + \paramOnLine\bfe_d, \param_\ell)
		}
		=
		\kvbar{\sum_{\ell=1}^{\card({\calG})} \coeff_\ell \cste^{-\lambda \kvvbar{\param_0 + \paramOnLine\bfe_d - \param_\ell}_p^p}}
			].
\end{equation}
We assume that $f_d$ is not identically zero as in the statement of \Cref{def:contrib:grille_admissibility}.
$f_d$ can be rewritten as
\begin{equation*}
	f_{d}(\paramOnLine) 
	=
	\kvbar{
	\sum_{\ell=1}^{\card(\calG)} \coeff_\ell 
		\cste^{
			- \lambda \kvbar{\paramOnLine - \param_\ell\element{d}}^p
			- \lambda \sum_{j=1,j\neq d}^\dimParam\kvbar{\param_0\element{j} - \param_\ell\element{j}}^p
			}
	}
	.
\end{equation*}
Let $q$ denote the number of distinct elements of $\{\param_\ell\element{d}\}_{\ell=1}^{\card(\calG)}$ and suppose (up to some renumbering) that $\{\param_\ell\element{d}\}_{\ell=1}^q$ are pairwise distinct.
We note that $q\geq 1$ because otherwise there is a contradiction with our hypothesis ``$f_d$ not identically zero". 
We can then rewrite $f_d$ as
\begin{equation}
	f_{d}(\paramOnLine)
	=
	\kvbar{
	\sum_{\ell=1}^{q} \widetilde{\coeff}_\ell 
		\cste^{
			- \lambda \kvbar{\paramOnLine - \param_\ell\element{d}}^p
			}
	}
\end{equation}
where the terms proportional to $\cste^{|t-\param_\ell\element{d}|}$ for (possibly) identical values of $\param_\ell[d]$ have been merged together, and the scalars $\widetilde{c}_\ell$ take into account the constant terms in the exponentials that do not depend on $t$.

\noindent
Let $\dico_{1} = \{\atome[1][](\paramOnLine'):\ \paramOnLine' \in \mathbb{R}\}$ be a Generalized Laplace dictionary in dimension 1 (see \Cref{def:contrib:laplace_dico}).
Then, $f_d(\paramOnLine)$ can also be interpreted as the inner product between atom $\atome[1][](\paramOnLine)\in\dico_{1}$ and $\Vobs_1\triangleq \sum_{\ell=1}^q \widetilde{\coeff}_\ell \atome[1][](\param_\ell\element{d})$. 
Let $\widetilde{\calS}\triangleq\{\param_\ell[d]:\widetilde{\coeff_\ell}\neq0, \ell=1\dots q\}$. 
Applying \Cref{th:contrib:cmf_uniformRecov_1D}, we have that \comp{} with $\Vobs_1$ as input achieves exact $\card{}(\widetilde{\calS})$-step recovery of $\widetilde{\calS}$.
In particular, this implies that \comp{} selects a parameter in $\widetilde{\calS}$ at the first iteration, that is:
\begin{equation}
	\forall\paramOnLine'\in\kR\setminus\widetilde{\calS},
	\quad
	f_d(\paramOnLine') 
	< \max_{t\in \widetilde{\calS}} f_d(t).
\end{equation}
Hence, the maximizers of $f_d$ belong to $\widetilde{\calS}\subseteq\{\param_\ell\element{d}\}_{\ell=1}^q \subseteq \calS_d$.

}

\conditionalPagebreak

\section{Miscellaneous}
	\label{sec:app:misc}


\subsection{Proof of \texorpdfstring{\Cref{Fact:existence-maximizer}}{Lemma~\ref{Fact:existence-maximizer}}}
\label{subsec:app:existence-maximizer}

Let $\bsr\in\spaceObs$, $k\in\kN$ and assume that there exists $\nbAtome$ parameters $\{\paramt_\ell\}_{\ell=1}^{\nbAtome}\subset\paramSet$ and $\nbAtome$ nonzero coefficients $\{\coeff_\ell\}_{\ell=1}^{\nbAtome}\subset\kR*$ such that $\bsr=\sum_{\ell=1}^{\nbAtome} \coeff_\ell \atome{}(\paramt_\ell)$.
Define function $\varphi$ as
\begin{equation}
	\label{eq:proof-fact-maximizer:inner-prod}
	\kfuncdef{\varphi}{\paramSet}{\kR+}[
		\theta
	][
		\kvbar{\kangle{\atome(\param), \bsr}}
		=
		\kvbar{\sum_{\ell=1}^\nbAtome \coeff_\ell \kappa(\param, \paramt_\ell)}
	],
\end{equation}
that is, the function involved in step~\ref{line:algo:continuousOMP:findtheta} in \Cref{alg:continuousOMP}.
We now prove that a maximizer of~\eqref{eq:proof-fact-maximizer:inner-prod} exists.
To that aim, we distinguish two cases.

\paragraph{Case 1: $\forall\param\in\paramSet,\,\varphi(\param)=0$}
In that case, any parameter $\param\in\paramSet$ is a maximizer of $\varphi$.

\paragraph{Case 2: $\exists\param_0\in\paramSet,\,\varphi(\param_0)>0$}	
Denote $\varepsilon\triangleq \varphi(\param_0)$.
We then have
\begin{equation}
	\sup_{\param\in\paramSet} \varphi(\param) \geq \varphi(\param_0) =\varepsilon>0.
\end{equation}
Hence, by condition~\eqref{eq:contribution:evanecent}, there exists $\nbAtome{}$ compact sets $\kfamily{K_\ell}{\ell=1}^{\nbAtome}$ such that for all $\ell\in\intervint{1}{\nbAtome}$, $\param_0\in K_\ell$ and
\begin{equation}
	\label{eq:proof-fact-maximizer:bounds1}
	\forall\param\in K_\ell^c, \quad
	\kappa(\param, \paramt_\ell) < \frac{1}{\sum_{\ell'=1}^{\nbAtome} \kvbar{\coeff_{\ell'}}} \varepsilon
	.
\end{equation}
Note that the right-hand-side of~\eqref{eq:proof-fact-maximizer:bounds1} is well defined: by positive-definiteness of $\kangle{\cdot,\cdot}$, $\bsr\neq0$ so $\nbAtome{}>0$ and $\sum_{\ell=1}^\nbAtome|\coeff_\ell|>0$ necessarily.
Define $K=\cup_{\ell=1}^{\nbAtome}K_\ell$.
Since $K^c=\cap_{\ell=1}^{\nbAtome} K_\ell^c$, we have using the triangular inequality
\begin{equation}
	\label{eq:proof-fact-maximizer:bounds2}
	\forall \param\in K^c, \quad
	\varphi(\param) \leq \sum_{\ell=1}^{\nbAtome} \kvbar{c_\ell} \kappa(\param, \paramt_\ell) < \varepsilon
	.
\end{equation}
See now that $\varphi$ is continuous by continuity of $\kernelPaper$, $K$ is compact as a union of compact sets.
Then, the extreme value theorem ensures that there exists $\param_m$ such that $\varphi(\param_m)\geq \varphi(\param)$ for all $\param\in K$.
\Cref{Fact:existence-maximizer} follows by seeing that $\varphi(\param_m)\geq \varphi(\param)$ for all $\param$ by~\eqref{eq:proof-fact-maximizer:bounds2}.


\subsection{Proof of \texorpdfstring{\Cref{corollary:function_analysis:polynome_one_sign_change}}{Lemma~\ref{corollary:function_analysis:polynome_one_sign_change}}}
	\label{subsec:app:analysis}

{
\renewenvironment{proof}{}{\qed}

\noindent
The proof of \Cref{corollary:function_analysis:polynome_one_sign_change} is based on the following result:
\begin{lemma}[Laguerre's generalization of Descartes's rule of signs \protect{\cite[p.~319]{fejzic2009}}]
	\label{lemma:geometric:polynome_zero}
	Let $a_1,\ldots,a_\nbAtome$ be nonzero real coefficients and $0 < x_1 < \dots < x_\nbAtome$ be real numbers.
	Let $z$ be the number of real roots of the function $P(u) = \sum_{\ell=1}^{\nbAtome} a_\ell^{\phantom{u}} x_\ell^u$, and $n_c$ be the number of changes in sign in the sequence of numbers $a_1,\ldots,a_\nbAtome$.
	Then $z \leq n_c$.
\end{lemma}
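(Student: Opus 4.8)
The plan is to recognize this as \emph{Laguerre's extension of Descartes' rule of signs} and to prove it by the classical route: reduce to an exponential sum, then induct on the number of terms using Rolle's theorem, with careful bookkeeping of the sign changes. First I would set $\mu_\ell \triangleq \ln x_\ell$, so that $0 < x_1 < \dots < x_\nbAtome$ becomes $\mu_1 < \dots < \mu_\nbAtome$ and $P(u) = \sum_{\ell=1}^\nbAtome a_\ell\, \cste^{\mu_\ell u}$. Writing $z(P)$ for the number of (distinct) real zeros and $n_c(a_1,\dots,a_\nbAtome)$ for the number of sign changes, the goal is $z(P) \leq n_c$.

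The induction is on $\nbAtome$, the number of terms, the base case $\nbAtome = 1$ being immediate since $a_1\cste^{\mu_1 u}$ has no zero and $n_c = 0$. For the inductive step I would divide by the everywhere-positive factor $\cste^{\mu_1 u}$, which leaves the zeros unchanged, obtaining $g(u) = \cste^{-\mu_1 u}P(u) = a_1 + \sum_{\ell=2}^\nbAtome a_\ell\, \cste^{(\mu_\ell-\mu_1)u}$. Its derivative $g'(u) = \sum_{\ell=2}^\nbAtome a_\ell(\mu_\ell - \mu_1)\,\cste^{(\mu_\ell-\mu_1)u}$ is again an exponential sum with $\nbAtome-1$ terms and, crucially, since $\mu_\ell - \mu_1 > 0$, its coefficients $a_\ell(\mu_\ell-\mu_1)$ carry exactly the signs of $a_2,\dots,a_\nbAtome$; hence $n_c(g') = n_c(a_2,\dots,a_\nbAtome)$. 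The induction hypothesis gives $z(g') \leq n_c(a_2,\dots,a_\nbAtome)$, while Rolle's theorem (one zero of $g'$ strictly between consecutive zeros of $g$) yields $z(g) \leq z(g') + 1$.

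The remaining step is the sign accounting. If $a_1$ and $a_2$ have opposite signs, then $n_c(a_1,\dots,a_\nbAtome) = 1 + n_c(a_2,\dots,a_\nbAtome)$ and combining the two inequalities above closes the induction at once. The delicate case is $a_1 a_2 > 0$, where $n_c(a_1,\dots,a_\nbAtome) = n_c(a_2,\dots,a_\nbAtome)$ and one must sharpen Rolle's bound to $z(g) \leq z(g')$, i.e.\ recover the lost $+1$. Here I would argue asymptotically: $g(u) \to a_1$ as $u\to-\infty$, with leading correction $a_2\,\cste^{(\mu_2-\mu_1)u}$ of sign $a_2 = \operatorname{sign} a_1$, so (say $a_1 > 0$) $g$ exceeds $a_1$ for all sufficiently negative $u$. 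Since the leftmost zero $u^{\star}$ of $g$ satisfies $g(u^{\star}) = 0 < a_1$ while $g\to a_1$ at $-\infty$ and $g$ takes a value above $a_1$ somewhere to the left of $u^{\star}$, the function $g$ must have a local maximum in $(-\infty, u^{\star})$, producing a zero of $g'$ strictly to the left of every zero of $g$. This extra critical point, distinct from the $z(g)-1$ Rolle zeros lying in $(u^{\star},+\infty)$, gives $z(g') \geq z(g)$ and closes the case; the subcase $a_1 < 0$ follows by replacing $g$ with $-g$.

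I expect this asymptotic argument at $-\infty$ to be the main obstacle, since it is exactly what upgrades the crude bound $z \leq \nbAtome-1$ to the sharp sign-change count. Two routine points I would also settle: that each such exponential sum has \emph{finitely many} real zeros, which propagates through the induction because $g'$ has finitely many by hypothesis and Rolle then bounds those of $g$; and that the whole argument extends verbatim to zeros counted with multiplicity, a multiplicity-$m$ zero of $g$ being a multiplicity-$(m-1)$ zero of $g'$. This elementary inductive proof is self-contained, matching the classical statement of \cite[p.~319]{fejzic2009}.
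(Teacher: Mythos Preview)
Your proof is correct and follows the classical inductive route via Rolle's theorem; the asymptotic argument recovering the lost $+1$ in the case $a_1 a_2 > 0$ is sound, though it could be tightened slightly by noting explicitly that the supremum of $g$ on $(-\infty,u^\star]$ is attained at an interior point (since $g\to a_1$ at $-\infty$ forces the relevant superlevel set to be compact, and $g(u^\star)=0<a_1$ rules out the endpoint).

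There is nothing to compare, however: the paper does \emph{not} prove this lemma. It is stated with a direct citation to \cite[p.~319]{fejzic2009} and then immediately applied in the proof of \Cref{corollary:function_analysis:polynome_one_sign_change}. So your contribution here is a self-contained proof of a result the authors chose to import from the literature; the approaches cannot differ because the paper offers none.
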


\begin{figure}
	\includegraphics[width=\columnwidth]{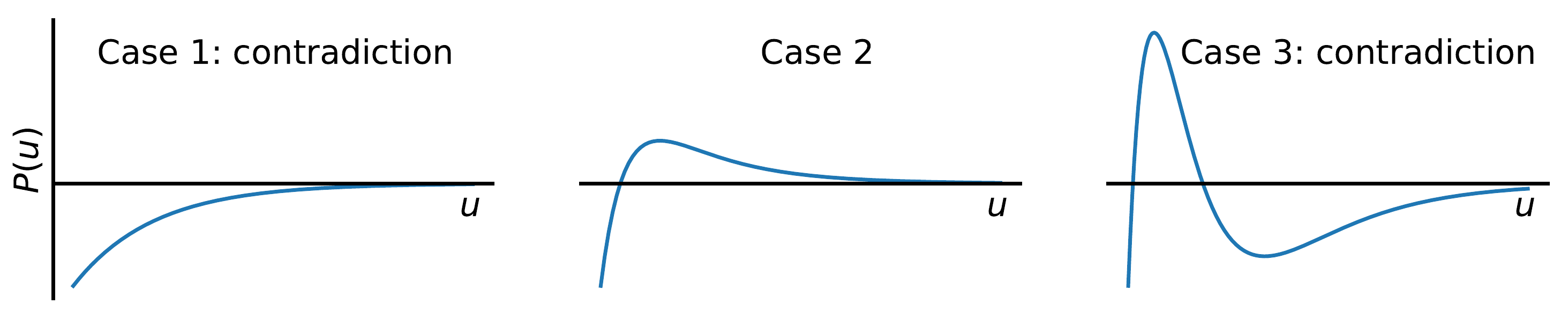}
	\caption{
		\label{fig:sum_exp_sign_cases}
		Shape of $P$ (see proof of \Cref{corollary:function_analysis:polynome_one_sign_change}) with constraints \textit{\RNum{1})} $P$ is continuous, \textit{\RNum{2})} $P(u)<0$ and \textit{\RNum{3})} $\exists u_0>0$ such that $P(u)>0$ for all $u>u_0$.
		One see that the constraints cannot be satisfied in cases 1 and 3.
	}
\end{figure}

The sequence of coefficients $a_{\ell} = \coeff_{\nbAtome+1-\ell}$ with $\ell\in\intervint{1}{\nbAtome}$ has only two sign changes by hypothesis.
By applying \Cref{lemma:geometric:polynome_zero} with $x_{\ell} = \cste^{-\lambda_{\nbAtome+1-\ell}}$, one sees that $P$ has at most two real roots, so at most two sign changes on $\kR_+$.
However, $P$ must satisfy the following constraints: 
\begin{enumerate}[label=\itshape\roman*)]
	\item $P$ is continuous on $\kintervco{0}{+\infty}$,
	\item $P(0)<0$,
	\item there exists $u_0>0$ such that $P(u)>0$ for all $u>u_0$. 
\end{enumerate}
As illustrated in \Cref{fig:sum_exp_sign_cases}, these three constraints cannot be verified simultaneously if $P$ has $0$ or $2$ roots. 
Thus $P$ has exactly one sign change on $\kR_+$ and there exists $\varIntCmf_0>0$ such that $\varIntCmf < \varIntCmf_0 \implies P(\varIntCmf) < 0$ and $\varIntCmf > \varIntCmf_0 \Longrightarrow P(\varIntCmf) > 0$.
One then has, for any non-decreasing function $f$ and any (non-negative) measure $\nu$ on $\kR+$:
\begin{align*}
	\int f(\varIntCmf) P(\varIntCmf) &\diff\nu(\varIntCmf) 
	\nonumber \\
	\;=\; &
		\int_{0}^{\varIntCmf_0} \underbrace{ f(\varIntCmf) }_{ \text{non-decreasing} } \underbrace{P(\varIntCmf)}_{ \leq 0} \diff\nu(\varIntCmf) 
		+
		\int_{\varIntCmf_0}^{+\infty} \underbrace{ f(\varIntCmf) }_{\text{non-decreasing}} \underbrace{P(\varIntCmf)}_{\geq 0} \diff\nu(\varIntCmf) \\
	\;\geq\;&
		\int_{0}^{\varIntCmf_0} f(\varIntCmf_0) P(\varIntCmf) \diff\nu(\varIntCmf) 
		+
		\int_{\varIntCmf_0}^{+\infty} f(\varIntCmf_0) P(\varIntCmf) \diff\nu(\varIntCmf) \\
	\;=\;&
		f(\varIntCmf_0) \int_{0}^{+\infty}  P(\varIntCmf) \diff\nu(\varIntCmf). \qquad \qquad \hfill \qedhere
\end{align*}
}

\conditionalPagebreak


\section{Details related to Example~\ref{ex:intro:erc_required_dim2}}
	\label{sec:other:calcul_separation}

{
\renewcommand{\varMuette}{x}
\newcommand{\funcRatio}{f}

Assume that $\cmf$ is right differentiable at $0$.
We first prove by contradiction that $\cmf[][(1)](0)<0$.
Assume that $\cmf[][(1)](0)=0$.
As $\cmf$ is a CMF, $\cmf[][(1)](t) \leq 0$ for all $t \in \kR+$ and $\cmf[][(1)]$ is non-decreasing on $\kR+$. 
It follows that $\cmf[][(1)]$ is identically zero.
Hence $\cmf$ is constant and equal to $\cmf[][](0)=1$ which contradicts the assumption $\lim_{t\to+\infty} \cmf(t) = 0$.

\noindent
Consider now the function defined for all $\varMuette\geq0$ by $\funcRatio: \varMuette\mapsto(\nbAtome-1)\cmf(2\varMuette) - \nbAtome\cmf(\varMuette) + 1$.
We note that $\funcRatio$ corresponds to the quantity involved in~\eqref{eq:contrib:ratio_scalprod2} with the substitution $\varMuette\leftrightarrow\Delta^p$.
Since $\cmf(0)=1$, we have $\funcRatio(0)=0$.
Moreover $f$ is differentiable for any $\varMuette>0$ and
\begin{align}
	\funcRatio^{(1)}(\varMuette) 
	\;=\;&
		2(\nbAtome-1) \cmf{}^{(1)}(2\varMuette)
		-
		\nbAtome \cmf{}^{(1)}(\varMuette)
	\nonumber \\
	\;=\;&
	\nbAtome\cmf{}^{(1)}(\varMuette) \kbracket{
		2\kparen{1 -\tfrac{1}{\nbAtome}} \tfrac{\cmf{}^{(1)}(2\varMuette)}{\cmf{}^{(1)}(\varMuette)}
		-
		1
	}
	\label{eq:other:calcul_separation:diffratio}
	.
\end{align}
Since $\cmf$ is right differentiable at $0$, the ratio $\tfrac{\cmf{}^{(1)}(2\varMuette)}{\cmf{}^{(1)}(\varMuette)}$ tends to $1$ as $\varMuette$ tends to $0$ (remember that we proved that $\cmf[][(1)](0) < 0$).
Since $k\geq3$ we have $2\kparen{1 -\tfrac{1}{\nbAtome}}>1$, hence there exists $\varMuette_0>0$ such that 
\begin{equation}
	\varMuette < \varMuette_0 \Rightarrow 2\kparen{1 -\tfrac{1}{\nbAtome}} \tfrac{\cmf{}^{(1)}(2\varMuette)}{\cmf{}^{(1)}(\varMuette)}-1 >0.
\end{equation}
By \Cref{lemma:strictCMF}, we have that $\cmf[][(1)]{}(\varMuette)<0$ for all $\varMuette>0$. Hence $\funcRatio^{(1)}(\varMuette)<0$ for $\varMuette < \varMuette_0$, that is $f$ is decreasing on $[0, \varMuette_0]$.
Combining this result with $\funcRatio(0)=0$, we deduce that~\eqref{eq:contrib:ratio_scalprod2} holds whenever $\Delta^p<\varMuette_0$, that is the wrong parameter ${\bf0}_{\dimParam}$ will be preferred to any of the $\{\paramt_\ell\}_{\ell=1}^{\nbAtome}$.

}
\conditionalPagebreak


\section{Exact recovery in higher dimensions - CMF kernel and \texorpdfstring{$k=2$}{k=2}}
	\label{sec:app:dimD:uniformrecovery_k_2}
\newcommand{\varu}{\param_1}
\newcommand{\varv}{\param_2}
\newcommand{\varw}{\param_3}

In this section, we elaborate on the notion of ``axis admissibility'' (see \Cref{def:contrib:grille_admissibility}) for general CMF kernels.
We first show (see \Cref{example:cart_grid_k2}) that there exist some Cartesian grids which are not axis admissible with respect to some CMF kernels.
We then emphasize in the case $\nbAtome=2$ that the notion of ``axis admissibility'' is not necessary to achieve $\nbAtome$-step recovery in CMF dictionaries.

\begin{example}
	\label{example:cart_grid_k2}
	Let $\dico$ be a CMF dictionary with induced kernel $\kernelPaper=\cmf{}(\kvvbar{\cdot-\cdot}_p^p)$ and consider $\Delta>0$, $c_1 = c_2 = 1$, $c_3=c_4= - \tfrac{1 + \cmf(\Delta^p)}{\cmf(\Delta^p) + \cmf(2\Delta^p)}$ and
    \begin{equation}
		\kforall[t\in\kR]\quad
		f_1(t) = \left |\sum_{\ell=1}^4 \coeff_\ell \kappa(t \bfe_1, \param_\ell) \right |
	\end{equation}
	where $\bfe_{d}$ the $d$-th canonical basis vector of $\kR^2$.

	Let $\calG\triangleq\{\param_\ell\}_{\ell=1}^4\subset\kR^2$ be a Cartesian grid with $\param_1=(0,0)={\bf0}_2$, $\param_2= (\Delta,0) = \Delta \bfe_1$, $\param_3= (0,\Delta) = \Delta \bfe_2$, $\param_4= (\Delta,\Delta) = \Delta{\bf1}_2$. 
	Simple algebraic manipulations then show that $f_1(0)=f_1(\Delta) = 0$ and
	\begin{equation}\label{eq:expre_f1_example}
		f_1\kparen{\tfrac{\Delta}{2}} = 2 \cmf(\tfrac{1}{2^p}\Delta^p)
		\kvbar{
			1 - \frac{1 + \cmf(\Delta^p)}{\cmf(\Delta^p) + \cmf(2\Delta^p)}
			\frac{\cmf(\Delta^p + \tfrac{1}{2^p}\Delta^p)}{\cmf(\tfrac{1}{2^p}\Delta^p)}
		}
		.
	\end{equation}
	If $\cmf$ and $\Delta$ are such that $f_1\kparen{\tfrac{\Delta}{2}} \neq 0$, one can conclude that the maximizers of $f_1$ are distinct from 0 and $\Delta$.
	In view of \Cref{def:contrib:grille_admissibility}, this shows that $\calG$ is not axis admissible with respect to $\dico$. 

	For instance, this is the case for the CMF $\cmf:x\mapsto\tfrac{1}{1+x}$ (cf  \Cref{ex:CMF}).
	Indeed, in this case \eqref{eq:expre_f1_example} particularizes to 
	\begin{align}
		f_1(\tfrac{\Delta}{2}) \;=\;&  \frac{2}{1 + \tfrac{\Delta^p}{2^p}} \kvbar{
			1 - \frac{
			2+\Delta^{p}}{1 + \tfrac{1+ \Delta^p}{1 + 2\Delta^p}} \frac{1 + \tfrac{\Delta^p}{2^p}}{1 + \Delta^p + \tfrac{\Delta^p}{2^p}}
		}
		.
	\end{align}
	As the factor inside the absolute value in the right-hand side is a non-zero rational function of $x = \Delta^{p}$, we have $f_{1}(\Delta/2) \neq 0$ except possibly on a set of values of $\Delta$ which has Lebesgue measure equal to zero.
	Hence there exists $\Delta > 0$ such that $f_{1}(\Delta/2) > 0$.

	We finally note that the construction presented here for the case $\dimParam=2$ easily extends to $\dimParam > 2$ by zero-padding of the $\param_{\ell}$'s.
\end{example}

We next show that $\nbAtome$-step recovery of $\supportt$ with $\nbAtome=\card(\supportt)=2$ may be possible in CMF dictionaries even when the axis admissibility assumption fails to hold. First, we state and prove a useful technical lemma:

\begin{lemma}
	\label{lemma:inequality_cmf_kernel}
	Let $\kernelPaper$ be a CMF kernel in dimension $\dimParam$ in the sense of \Cref{def:intro:cmf_kernel}. 
	For any $\varu,\varv,\varw \in \kR^\dimParam$, the following result holds:
	\begin{equation}
			\label{eq:inequality_cmf_kernel:2}
			\kernelPaper(\varu,\varv) \, \kernelPaper(\varv,\varw)
			\leq \kernelPaper(\varu,\varw).
	\end{equation}
\end{lemma}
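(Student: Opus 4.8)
The plan is to reduce the inequality \eqref{eq:inequality_cmf_kernel:2} to a one-dimensional statement about the CMF $\cmf$ and then exploit the integral (Bernstein--Widder) representation. Writing $\kernelPaper(\param,\param')=\cmf(\|\param-\param'\|_p^p)$, set $a=\|\varu-\varv\|_p^p$, $b=\|\varv-\varw\|_p^p$, and $c=\|\varu-\varw\|_p^p$. The triangle inequality for the quasi-norm $\|\cdot\|_p^p$ (valid for $0<p\le 1$, since $x\mapsto x^p$ is subadditive on $\kR_+$) gives $c\le a+b$. Because $\cmf$ is non-increasing (it is a CMF with $\cmf^{(1)}\le 0$, cf.\ \Cref{lemma:strictCMF}), it therefore suffices to prove the scalar inequality
\begin{equation}
	\label{eq:cmf-submult}
	\cmf(a)\,\cmf(b)\;\le\;\cmf(a+b)\qquad\forall a,b\ge 0,
\end{equation}
and then chain $\kernelPaper(\varu,\varv)\,\kernelPaper(\varv,\varw)=\cmf(a)\cmf(b)\le\cmf(a+b)\le\cmf(c)=\kernelPaper(\varu,\varw)$.

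To establish \eqref{eq:cmf-submult}, I would invoke \Cref{lemma:strictCMF}: since $\cmf(0)=1$ and $\lim_{x\to\infty}\cmf(x)=0$, the Bernstein--Widder measure $\nu$ is a probability measure on $(0,+\infty)$, so $\cmf(x)=\int_0^{+\infty} \cste^{-ux}\,\diff\nu(u)$ with $\nu(\kR_+^*)=1$. Then $\cmf$ is log-convex as a (continuous) mixture of the log-convex functions $x\mapsto \cste^{-ux}$ — this is the standard fact that a nonnegative combination (here an integral average) of log-convex functions is log-convex, which can be proved via Hölder's inequality: for $\alpha\in[0,1]$,
\begin{equation}
	\cmf(\alpha x+(1-\alpha)y)=\int \cste^{-u(\alpha x+(1-\alpha)y)}\diff\nu(u)
	\le\Big(\int \cste^{-ux}\diff\nu(u)\Big)^{\alpha}\Big(\int \cste^{-uy}\diff\nu(u)\Big)^{1-\alpha}.
\end{equation}
A log-convex function $g$ on $\kR_+$ with $g(0)\le 1$ is supermultiplicative: taking $\alpha=b/(a+b)$, $x=0$, $y=a+b$ gives $g(b)\le g(0)^{b/(a+b)}g(a+b)^{a/(a+b)}\le g(a+b)^{a/(a+b)}$ (using $g(0)\le 1$), and symmetrically $g(a)\le g(a+b)^{b/(a+b)}$; multiplying yields $g(a)g(b)\le g(a+b)$. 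Applying this with $g=\cmf$ and $\cmf(0)=1$ gives exactly \eqref{eq:cmf-submult}. (Alternatively, one can prove \eqref{eq:cmf-submult} directly from the integral representation by writing $\cmf(a)\cmf(b)=\int\!\!\int \cste^{-(ua+vb)}\diff\nu(u)\diff\nu(v)$ and bounding $\cste^{-(ua+vb)}\le \cste^{-\min(u,v)(a+b)}$, then noting $\int\!\!\int \cste^{-\min(u,v)(a+b)}\diff\nu(u)\diff\nu(v)\le \cmf(a+b)$ since $\nu$ is a probability measure and $\min(u,v)$-weighted averaging only increases the value; this avoids the log-convexity packaging.)

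The only genuinely delicate point is making sure the hypotheses of \Cref{lemma:strictCMF} are in force, i.e.\ that the CMF underlying a \emph{CMF kernel} indeed satisfies $\cmf(0)=1$ and $\lim_{x\to\infty}\cmf(x)=0$ — but this is built into \Cref{def:intro:cmf_kernel}, so there is no obstacle there. The triangle inequality $\|\cdot\|_p^p$-subadditivity and monotonicity of $\cmf$ are routine. I expect the main (minor) obstacle to be stating cleanly the supermultiplicativity step; using the direct double-integral argument sidesteps any subtlety about log-convexity when $\cmf$ could vanish, so I would favour that route in the final write-up.
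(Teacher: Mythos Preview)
Your overall strategy matches the paper's exactly: reduce \eqref{eq:inequality_cmf_kernel:2} to the scalar supermultiplicativity inequality $\cmf(a)\cmf(b)\le\cmf(a+b)$, then chain this with the subadditivity of $\|\cdot\|_p^p$ and the monotonicity of $\cmf$. Your log-convexity/H\"older argument for the supermultiplicativity step is correct (modulo a harmless swap of the labels $a$ and $b$ in the displayed inequalities). In fact, your justification of this step is more careful than the paper's own: the paper writes the chain $\cmf(x+y)\ge\cmf(x+y)\cmf(x+y)\ge\cmf(x)\cmf(y)$, whose second inequality is actually reversed for a decreasing $\cmf$ (e.g.\ take $\cmf(t)=\cste^{-t}$), so your H\"older-based argument is the right way to close this.

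One genuine issue: your parenthetical ``alternative'' via the double integral does not work. From $\cste^{-(ua+vb)}\le \cste^{-\min(u,v)(a+b)}$ you get
\[
\cmf(a)\cmf(b)\;\le\;\iint \cste^{-\min(u,v)(a+b)}\,\diff\nu(u)\diff\nu(v)
\;=\;\bbE\!\left[\cste^{-\min(U,V)(a+b)}\right],
\]
but since $\min(U,V)\le U$ and $w\mapsto \cste^{-w(a+b)}$ is decreasing, this expectation is \emph{at least} $\bbE[\cste^{-U(a+b)}]=\cmf(a+b)$, not at most. So the final inequality goes the wrong way and this route should be dropped; stick with the log-convexity argument.
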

\begin{proof}
	By definition, there exists a CMF $\cmf$ such that $\kernelPaper(\cdot,\cdot)= \cmf(\kvvbar{\cdot - \cdot}_p^p)$ and $\cmf(0)=1$.
	Since $\cmf$ is nonnegative and decreasing, we have for all $x,y\geq0$
	\begin{equation}
		\label{eq:inequality_cmf:2}
		\cmf(x+y) \geq \cmf(x+y)\cmf(x+y) \geq \cmf(x)\cmf(y)
		.
	\end{equation}
	Using this result with $x=\|\varu - \varv\|_p^p$ and $y=\|\varv - \varw\|_p^p$, we have
	\begin{align}
		\kernelPaper(\varu, \varv)  \kernelPaper(\varv, \varw)
		\;\leq\;& 
		\cmf(
				\kvvbar{\varu - \varv}_p^p + \kvvbar{\varv - \varw}_p^p
			)
		.
	\end{align}
	Since the quasi-norm $\|\cdot\|_p^p$ satisfies a triangular inequality, we have $\|\varu - \varw\|_p^p \leq \|\varu - \varv\|_p^p + \|\varv - \varw\|_p^p $.
	As any CMF is decreasing, \eqref{eq:inequality_cmf_kernel:2} follows.

\end{proof}

\noindent
We are now ready to state our recovery result:
\begin{lemma}[Exact recovery for CMF dictionaries when $\nbAtome=2$]
	\label{lemma:exact-recovery-k2}
	Let $\dico$ be a CMF dictionary in dimension $\dimParam \geq 1$ with induced kernel $\kernelPaper$. 
	Consider a support $\supportt = \kbrace{\paramt_1, \paramt_2}$ where $\paramt_1 \neq \paramt_2$, and let $\Gtrue\in\kR^{2\times2}$ be the matrix defined by $\Gtrue\element{\ell,\ell'} = \kernelPaper(\paramt_\ell,\paramt_{\ell'})$.
	Assume that
	\begin{equation}
		\label{eq:app:erc_k2}
		\kforall[\param\in\SetAug(\supportt)\setminus\supportt] \quad \kvvbar{\kinv{\Gtrue}\gtheta}_1 < 1
	\end{equation}
	where $\gtheta\in\kR^2$ is defined by $\gtheta\element{\ell}=\kernelPaper(\param,\paramt_\ell)$ for $\ell=1,2$. 
	Then \comp{} achieves exact $2$-step recovery of $\supportt$.
\end{lemma}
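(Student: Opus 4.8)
The plan is to verify the two hypotheses of Tropp's ERC (\Cref{th:Tropp's-ERC}) for the support $\supportt=\{\paramt_1,\paramt_2\}$ and then invoke it. Linear independence of $\atome(\paramt_1),\atome(\paramt_2)$ is immediate from \Cref{cor:existence CMF dictionary}, so everything reduces to showing that $\kvvbar{\kinv{\Gtrue}\gtheta}_1<1$ for every $\param\in\paramSet\setminus\supportt$. Write $\rho\triangleq\kernelPaper(\paramt_1,\paramt_2)$; since $\paramt_1\ne\paramt_2$ and the CMF underlying $\kernelPaper$ is strictly positive and strictly decreasing (\Cref{lemma:strictCMF}, \Cref{lemma:CMFkerAdmissible}), one has $0<\rho<1$ and $\kinv{\Gtrue}=\tfrac{1}{1-\rho^2}\bigl(\begin{smallmatrix}1&-\rho\\-\rho&1\end{smallmatrix}\bigr)$. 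The first substantive step is to remove the absolute values in the $\ell_1$ norm: applying \Cref{lemma:inequality_cmf_kernel} to the triples $(\param,\paramt_1,\paramt_2)$ and $(\param,\paramt_2,\paramt_1)$ yields $\rho\,\kernelPaper(\param,\paramt_1)\le\kernelPaper(\param,\paramt_2)$ and $\rho\,\kernelPaper(\param,\paramt_2)\le\kernelPaper(\param,\paramt_1)$, so both entries of $\kinv{\Gtrue}\gtheta$ are nonnegative and $\kvvbar{\kinv{\Gtrue}\gtheta}_1=\bigl(\kernelPaper(\param,\paramt_1)+\kernelPaper(\param,\paramt_2)\bigr)/(1+\rho)$. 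Setting $g(\param)\triangleq\kernelPaper(\param,\paramt_1)+\kernelPaper(\param,\paramt_2)$, the ERC thus becomes $g(\param)<1+\rho$ for all $\param\notin\supportt$, while $g(\paramt_1)=g(\paramt_2)=1+\rho$.

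The heart of the argument is to show that every global maximizer of $g$ over $\kR^{\dimParam}$ lies in $\SetAug(\supportt)$. Because $g(\param)=\kvbar{\scalprod{\atome(\param)}{\atome(\paramt_1)+\atome(\paramt_2)}}$ and $\atome(\paramt_1)+\atome(\paramt_2)$ is a nonzero finite linear combination of atoms, \Cref{Fact:existence-maximizer} provides a maximizer $\param_m$. Fix a coordinate $d$ and restrict $g$ to the line $t\mapsto\param_m+(t-\param_m\element{d})\bfe_d$; via the Bernstein--Widder representation $\cmf(x)=\int_0^{+\infty}e^{-ux}\diff\nu(u)$ (\Cref{lemma:geometric:cmf:integral_representation}), this restriction has the form $h(t)=\int_0^{+\infty}\bigl(\alpha_1(u)\,e^{-u\kvbar{t-\paramt_1\element{d}}^p}+\alpha_2(u)\,e^{-u\kvbar{t-\paramt_2\element{d}}^p}\bigr)\diff\nu(u)$ with strictly positive weights $\alpha_\ell(u)=e^{-u\sum_{d'\ne d}\kvbar{\param_m\element{d'}-\paramt_\ell\element{d'}}^p}$ collecting the frozen coordinates. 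Reproducing the second-derivative computation of the one-dimensional proof (cf.\ \eqref{eq:proof:1D:geometric1}; differentiation under the integral is legitimate away from the breakpoints) shows $h''(t)>0$ for $t\notin\{\paramt_1\element{d},\paramt_2\element{d}\}$, since $\alpha_\ell(u)>0$ and $\nu(\kR*+)=1>0$ (\Cref{lemma:strictCMF}). Hence $h$ is strictly convex on each connected component of $\kR\setminus\{\paramt_1\element{d},\paramt_2\element{d}\}$, and as $h>0$ everywhere and $h\to0$ at $\pm\infty$, its (attained) global maximum — in particular the value at $t=\param_m\element{d}$ — cannot be reached in an open component; therefore $\param_m\element{d}\in\{\paramt_1\element{d},\paramt_2\element{d}\}$. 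Since $d$ was arbitrary, $\param_m\in\SetAug(\supportt)$.

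It remains to invoke the hypothesis. If $\param_m\in\SetAug(\supportt)\setminus\supportt$, then \eqref{eq:app:erc_k2} together with the identity from the first paragraph gives $g(\param_m)<1+\rho=g(\paramt_1)$, contradicting maximality; hence $\param_m\in\supportt$ and $\max_{\param\in\kR^\dimParam}g(\param)=1+\rho$. The same reasoning excludes any $\param\notin\supportt$ with $g(\param)=1+\rho$ (such a $\param$ would be a global maximizer lying in $\SetAug(\supportt)\setminus\supportt$, impossible by \eqref{eq:app:erc_k2}), so $g(\param)<1+\rho$ for every $\param\notin\supportt$. This is exactly \eqref{eq:ERCTropp} for $\supportt$, and \Cref{th:Tropp's-ERC} then yields exact $2$-step recovery of $\supportt$ for any coefficients in $\kR*$. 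I expect the delicate point to be the coordinate-wise step: unlike the Generalized Laplace case of \Cref{th:contrib:laplace_recovery}, the line restriction $h$ is \emph{not} literally an inner product in a one-dimensional CMF dictionary — the off-coordinate contributions change the effective monotone function to $x\mapsto\int_0^{+\infty}\alpha_\ell(u)e^{-ux}\diff\nu(u)$ rather than $\cmf$ — so one cannot simply quote \Cref{th:contrib:cmf_uniformRecov_1D}, and must instead re-run the strict-convexity estimate, being careful to justify differentiation under the integral and to check that the positivity of $h''$ survives the extra factors.
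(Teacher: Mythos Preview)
Your proof is correct. It uses the same two key ingredients as the paper --- the coordinate-wise strict-convexity argument (which the paper also re-runs, cf.\ the reference to \eqref{eq:proof:1D:geometric1} in its item~\ref{hyp:geometric:1} for $t=2$) and \Cref{lemma:inequality_cmf_kernel} --- but organizes them differently. The paper verifies the two abstract conditions of \Cref{def:admissible_support} for $\supportt$ and then invokes \Cref{th:MainAbstractTheorem}: item~\ref{hyp:geometric:1} is your coordinate-wise step, while item~\ref{hyp:geometric:2} (only the singleton case $T=\{1\}$ is relevant) is where the paper deploys \Cref{lemma:inequality_cmf_kernel}. You instead go straight to \Cref{th:Tropp's-ERC}, exploiting the explicit $2\times2$ inverse to reduce the ERC to $g(\param)<1+\rho$; your use of \Cref{lemma:inequality_cmf_kernel} to show $\kinv{\Gtrue}\gtheta\ge0$ entrywise is precisely the content that, in the paper's inductive scheme (\Cref{lemma:MainAbstractTheorem}, item~\ref{item:geometric:b}), is derived from item~\ref{hyp:geometric:2}. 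Your route is shorter and more transparent for $k=2$; the paper's route has the advantage of plugging into the general machinery and, via \Cref{th:MainAbstractTheorem}, also yielding $\card(\support)$-step recovery of each $\support\subseteq\supportt$ (trivial here, of course).
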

\begin{proof}
	\renewcommand{\maximizer}{\param_m}
	By \Cref{lemma:CMFkerAdmissible}, $\kernelPaper$ is admissible in the sense of \Cref{hyp:geometric:0}. 
	We show below that since~\eqref{eq:app:erc_k2} holds, $\supportt$ is admissible with respect to $\kernelPaper$ in the sense of \Cref{def:admissible_support}.
	\Cref{lemma:exact-recovery-k2} then follows from \Cref{th:MainAbstractTheorem}.

	Consider a non-empty subset of indices $T\subseteq\{1,2\}$ and $t\triangleq\card(T)$.
	Let also $\{c_\ell\}_{\ell\in T}$ be such that $c_\ell>0$ and $\sum_{\ell\in T}c_\ell<1$.
	Define 
	\begin{equation}
		\label{eq:app:def_psi_k2}
		\kfuncdef{\psi}{\kR}{\kR_+}[
			\param
		][
			\sum_{\ell\in T} c_\ell \kernelPaper(\param,\paramt_\ell)
		]
	\end{equation}
	We next show that items~\ref{hyp:geometric:1} and~\ref{hyp:geometric:2} of \Cref{def:admissible_support} are satisfied.

	\paragraph{Item~\ref{hyp:geometric:1} of \Cref{def:admissible_support}}
	We distinguish two cases:
	\begin{itemize}
		\item If $t=1$, we can assume without loss of generality that $T=\{1\}$.
		Since $\kernelPaper(\param,\paramt_1)<1$ for all $\param\neq\paramt_1$, one immediately sees that $\psi(\param) =c_1 \kernelPaper(\param,\paramt_1) < c_1 = \psi(\paramt_1)$ for all $\param\neq\paramt_1$. Hence, $\paramt_1$ is the unique global maximizer of $\psi$.
	
		\item If $t=2$,	let $\maximizer$ be a maximizer of $\psi$. 
        We note that $\psi$ can also be written as $\psi(\param) = |\scalprod{\atome(\param)}{\Vobs}|$ where $\Vobs\triangleq \sum_{\ell=1}^\nbAtome \coeff_\ell \atome(\paramt_\ell)$ hence a maximizer always exists by virtue of \Cref{Fact:existence-maximizer}.
		Since $\maximizer$ maximizes the $D$-dimensional function $\psi$, its $d$-th entry $\theta_{m}[d]$ is a maximizer of the one-dimensional section of $\psi$ along the $d$-th canonical direction, denoted $\psi_{d}$: 
		\begin{equation}
			\label{eq:appE:maximizer-dimd}
			\kfuncdef{\psi_d}{\kR}{\kR+}[x][
				\sum_{\ell\in T} c_\ell \cmf(\kvbar{x - \paramt_\ell\element{d}}^p + \sum_{j\neq d}\kvbar{\maximizer\element{j} - \paramt_\ell\element{j}}^p)
			.
			]
		\end{equation} 
		Applying the same reasoning as in the proof of \Cref{th:contrib:cmf_uniformRecov_1D} (see part of the proof dedicated to establishing ``item~\ref{hyp:geometric:1} of \Cref{def:admissible_support}''), we have $\forall x\notin\{\paramt_\ell\element{d}\}_{\ell=1}^{2}$: $\psi_d$ is twice differentiable and $\psi_d^{(2)}(x)>0$. Hence, no $x\in\{\paramt_\ell\element{d}\}_{\ell=1}^{2}$ can be a maximizer and  necessarily $\maximizer\element{d}\in\{\paramt_\ell\element{d}\}_{\ell=1}^{2}$. Since this result is valid for all $d\in\intervint{1}{\dimParam}$, we finally have $\maximizer\in\SetAug{}(\supportt)$.

	\noindent
	Therefore, since~\eqref{eq:app:erc_k2} holds, we have
	\begin{equation}
		\begin{split}
			\max_{\param\in\SetAug(\supportt)\setminus\supportt} \psi(\param) &= \max_{\param\in\SetAug(\supportt)\setminus\supportt} \kvbar{\scalprod{\atome(\param)}{\Vobs}} \\
			&< \max_{\paramt\in\supportt} \kvbar{\scalprod{\atome(\paramt_\ell)}{\Vobs}} = \max_{\paramt\in\supportt} \psi(\paramt_\ell) 
		\end{split}
	\end{equation}
	Hence all maximizers of $\psi$ belong to $\supportt$.
\end{itemize}

\paragraph{Item~\ref{hyp:geometric:2} of \Cref{def:admissible_support}}
From the working assumptions of item~\ref{hyp:geometric:2}, the set $T$ satisfies $T\neq\emptyset$ and there exists $\ell \in \{1,2\} \backslash T$.
Hence, we have $T\neq\{1,2\}$, that is $T$ is a singleton.
We assume without loss of generality that $T=\{1\}$.
Hence $\psi(\param) = c_{1} \kernelPaper(\param,\paramt_1)$ for some $0<c_1<1$. 
If $\psi_1(\paramt_1) - \kernelPaper(\paramt_1,\paramt_2) \leq 0$, then $c_1 - \kernelPaper(\paramt_1,\paramt_2) = \psi_1(\paramt_1) - \kernelPaper(\paramt_1,\paramt_2) \leq 0$.
Hence $c_1 \leq \kernelPaper(\paramt_1,\paramt_2)$ and $\psi(\param) \leq \kernelPaper(\paramt_1,\paramt_2)\kernelPaper(\param,\paramt_1)$ for each $\param$.
Using  \Cref{lemma:inequality_cmf_kernel} with $\varu=\param$, $\varv=\paramt_1$, $\varw=\paramt_2$, we obtain for each $\param\in\paramSet$:
\begin{equation*}
	\psi(\param) - \kernelPaper(\param,\paramt_2) \leq \kernelPaper(\param,\paramt_1)\kernelPaper(\paramt_1,\paramt_2)  - \kernelPaper(\param,\paramt_2) \leq 0
	.
\end{equation*}

\end{proof}

\conditionalPagebreak


\section{Table of notations}
	\label{sec:table_of_notations}

\begin{table}[ht]
	\newcommand{\ttile}[1]{\scshape#1}
	\centering
	\begin{tabular}{cl}
		\toprule
		\ttile{Notation} & \ttile{Comment} \\
		\midrule
		\multicolumn{2}{c}{\textit{General notations}} \\
		\midrule
		$\spaceObs$, $\Vobs$	 		& (Hilbert) observation space and observation \\
		$\dico{}, \atome(\cdot)$ 	& Dictionary $\dico$ made of parametric atoms $\atome$ \\
		$\mu$				& Coherence between atoms of a support \\
		$\coeffv\in\kR^\nbAtome$ 			& Weighting coefficients \\
		$\paramSet, \param$ & Parameter set and element \\
		$\support$, $\supportt$ & Set of parameters \\
		$\calG$ & Cartesian grid \\
		$\nbAtome, \ell$	& Number of atoms, most frequent index \\
		$\SetAug$			& Set augmenter, see~\eqref{eq:erc_ell1:defAtomeVirtuel} \\
		$\cmf$				& CMF (see~\Cref{def:intro:def_cmf}) \\
		$\kernelPaper$		& Kernel function $\paramSet\times\paramSet\to\kR+$ \\
		$\cmfkernelclassP[\dimParam]$	& Set of CMF kernels in dimension $\dimParam$ \\
		$\laplacekernelclassP[\dimParam]$ & Set of Laplace kernels in dimension $\dimParam$ \\
		$f^{(n)}$ & $n$-th derivative of function $f$ \\
		$\bfe_\ell$ & $\ell$-th element of the canonical basis \\
		$\bbE$ & Expectation operator \\
		$\csti$ & Imaginary number \\
		& \\
		\multicolumn{2}{c}{\textit{Technical notations}} \\
		\midrule
		$\Gtrue$, $\gtrue_\ell$ & Gram matrix related to a support $\support$, columns of $\Gtrue$ \\
		$\gtheta$ 				& parametric vector related to a support $\support$ \\
		\multirow{2}{*}{$\bfu,\bfv$}  		& Vector of $\kR^\nbAtome$ for some $\nbAtome$ often defined as  \\
		&	\qquad \qquad $\bfu,\bfv=\kinv{\Gtrue}\gtheta$ for some $\param\in\paramSet$ \\
		\bottomrule
	\end{tabular}
	\caption{
		\label{table:mr}
		Table of notations.
	}
\end{table}

\clearpage

\section*{Acknowledgments}

Part of this work has been funded thanks to the Becose ANR project no. ANR-15-CE23-0021.

\addcontentsline{toc}{section}{References}


\end{document}